\let\csname equation*\endcsname\relax
\let\csname endequation*\endcsname\relax
\definecolor{labelkey}{cmyk}{.4,.2,0,0}
\newcommand{\ii}{\mathbf i}
\newcommand{\eps}{\varepsilon}
\renewcommand{\r}{\mathbf{r}}
\newtheorem{theorem}{Theorem}
\newtheorem{proposition}[theorem]{Proposition}
\newtheorem{lemma}[theorem]{Lemma}
\theoremstyle{remark}
\newtheorem{remark}[theorem]{Remark}
\renewcommand{\Pr}{\mathbb P}
\begin{document}

\newcommand \be  {\begin{equation}}
\newcommand \ee  {\end{equation}}
\newcommand \bea {\begin{eqnarray} }
\newcommand \eea {\end{eqnarray}}

\title[On the position of the maximum of log-correlated processes]{Moments of the position of the maximum for GUE characteristic polynomials and for log-correlated Gaussian processes.\footnote[1]{\large with Appendix A written by Alexei Borodin and Vadim Gorin. }}

\vskip 0.2cm

\author{Yan V. Fyodorov}

\address{School of Mathematical Sciences, Queen Mary University of London\\  London E1 4NS, United Kingdom}

\author{Pierre Le Doussal 
}

\address{CNRS-Laboratoire de Physique Th\'eorique de l'Ecole Normale Sup\'erieure\\
24 rue Lhomond, 75231 Paris
Cedex-France}

\date{Received:  / Accepted:  / Published }

\begin{abstract}
We study three instances of log-correlated processes on the interval:   the logarithm of the Gaussian unitary ensemble (GUE) characteristic polynomial, the Gaussian log-correlated potential in presence
of edge charges, and the Fractional Brownian motion with Hurst index $H \to 0$ (fBM0).
In previous
collaborations we obtained the probability distribution function (PDF) of the value of the
global minimum (equivalently maximum) for the first two processes, using the {\it freezing-duality conjecture} (FDC).
Here we study the PDF of the position of the maximum
$x_m$ through its moments. Using replica, this requires  calculating moments of
the density of eigenvalues in the $\beta$-Jacobi ensemble.
Using Jack polynomials  we obtain an exact and explicit expression for both positive and negative integer moments
 for arbitrary $\beta >0$ and positive integer $n$ in terms of sums over partitions.
 For positive moments, this expression agrees with a very recent independent derivation by Mezzadri and Reynolds.
We check our results against a contour integral formula
derived recently by Borodin and Gorin (presented in the Appendix A from these authors).
The duality necessary for the FDC to work is proved,
and on our expressions, found to correspond to exchange of partitions with their dual.
Performing the limit $n \to 0$ and to negative Dyson index $\beta \to -2$,
we obtain the moments of $x_m$
and give explicit expressions for
the lowest ones. Numerical checks for the GUE polynomials,
performed independently by N. Simm,
indicate encouraging agreement. Some results are also obtained
for moments in Laguerre, Hermite-Gaussian, as well as
circular and related ensembles. The correlations of the position and the value of
the field at the minimum are also analyzed.

\end{abstract}

\maketitle

\tableofcontents

\section{Introduction}

Logarithmically correlated  Gaussian (LCG) random processes and fields attract growing attention in Mathematical Physics and Probability and play an important role in problems of Statistical Mechanics, Quantum Gravity,  Turbulence, Financial Mathematics and Random Matrix Theory,  see e.g. recent papers \cite{FKS13}, \cite{FK14} and \cite{DRSV14,RV2014,NKistler} for introduction and some background references
and \cite{CLD01} for earlier review including Condensed Matter applications.
A general lattice version of logarithmically correlated Gaussian field is
   a collection of Gaussian variables $V_{N,x} : x \in D_N$ attached to the sites of $d-$dimensional box $D_N$ of side length $N$ (assuming lattice spacing one) and characterized by the mean zero
    and the covariance structure
\begin{eqnarray}\label{LCdefa}
&&\mathbb{E}\{V^2_{N,x}\}=2g^2\log{N}+ f(x), \\ \label{LCdefb}
&&\mathbb{E}\{V_{N,x},V_{N,y}\}= 2g^2\log_{+}\frac{N}{|x - y|} + \psi(x,y), \quad \mbox{ for}\,\,  \,\, x \ne y \in D_N
\end{eqnarray}
where $\ln_+(w) = \max\left(\ln{w}, 0\right)$, $g>0$ and both $f(x)$ and $\psi(x,y)$  are bounded function far enough from the boundary of $D_N$. One also can define the continuous versions $V(x)$ of LCG fields on various domains $D\in \mathbb{R}^d$ which is then necessarily a random generalized function ("random distribution"), the most famous example being the Gaussian Free field in $d=2$, see \cite{Sheffield} for a rigorous definition. For $d=1$ the one-dimensional versions of LGC processes are known under the name of $1/f$ noises, see e.g. \cite{1fpower,FLDR12}. They appear frequently in Physics and Engineering sciences, and
also are rich and important mathematical objects of interest on their own. Such processes emerge, for example, in constructions of conformally invariant planar random curves \cite{AJKS11} and are relevant in random matrix theory and studies of the Riemann zeta-function on the critical line \cite{FK14}.

In particular, the problem of characterizing the distribution of the global maximum $M_N =\max_{x\in D_N}V_{N,x}$ of LCG fields and processes (or their continuum analogues) recently attracted a lot of
interest, in physics, see \cite{CLD01,FB08,FLDR09,FHK12,CRS,FG15}
and mathematics,
see \cite{BZ,BDZ,BK12,FK14,DZ14,DRZ2015,FyoSim15,AZ14,vargasfreezing,AZ15,AO15,ABH15}.
The distribution is proved to be given by the  Gumbel distribution {\it with random shift} \cite{DRZ2015}
and has a universal tail predicted by renormalization group arguments in \cite{CLD01}. The detail of the full distribution are not universal
and depend on some details of the behaviour of the covariance (\ref{LCdefb})
 for global $|x - y|\sim N$ scale 
 as well as on the subleading term $f(x)$ in the variance (\ref{LCdefa}).
 The explicit forms for the maximum distribution were conjectured
 in a few specific models of $1/f$ noises \cite{FB08,FLDR09,FK14,FyoSim15}.

 The goal of this paper is to provide some information about the distribution of the {\it position} of the global
 {\it minimum}
 \begin{equation}\label{argmax}
 x_{m}=\left\{x\in D \, : \, V(x)=\min_{y\in D}V(y)\right\}:= \mbox{Arg min}_{x\in D}V(x)
\end{equation}
  for some examples of 1-dimensional processes with logarithmic correlations, though depending on applications, one can be interested instead in a {\it maximum}.  Statistical properties of the value and position for maxima and minima
  are obviously trivially related in cases when $V(x) = -V(x)$ {\it in law}. \\

   Our first example is the modulus of the characteristic polynomial of a random GUE matrix over the interval $[-1,1]$ of the spectral parameter.
    As is well-known, in the limit of large sizes of the matrix the logarithm of that modulus is very intimately related to $1/f$ noises \cite{FKS13,FK14,FyoSim15,Webb14,Webb2015}.  In that example the interesting quantity is obviously the statistics of the maximum, with minimum value being trivially zero at every characteristic root of the matrix.  The second example is a general two-parameter variant of a log-correlated process on the interval  with, in the language of Coulomb gases, endpoint charges, introduced and studied in \cite{FLDR09}. That case may include a non-random (logarithmic) {\it background potential} $V_0(x)$, so that for the sum  $V_0(x) + V(x)$ we have
  \[
  \mbox{min}[ V_0(x) + V(x) ] = - \mbox{max}[ - V_0(x) + V(x) ], \quad  \mbox{in law}
  \]
     Our last example is a regularized version of the fractional Brownian motion with zero Hurst index, which is a {\it bona fide} (nonstationary) $1d$ LCG process \cite{FKS13}. Here statistics of maxima and minima are trivially related by symmetry in law.

\section{Models, method and main results}

We now define the three models to be considered. Although we did not yet succeed in obtaining
the full probability distribution function (PDF), ${\cal P}(x_m)$, of the position of the global minimum $x_m$, we derive formulae for all positive integer moments
of $x_m$ in terms of sums over partitions. In this section we present explicit values for
some low moments of $x_m$, more results can be found in the remainder of the paper.

As is clear from \cite{FLDR09} there is an intimate relation between statistics of extrema in log-correlated fields on an interval and the $\beta$-Jacobi ensemble of random matrices
\cite{DE2002,Forbook}. In the course of
our calculation we present methods to calculate the moments of the eigenvalue density of
the Jacobi ensemble. In this section we provide a very explicit { formula} for these moments
derived in remainder of the paper.
In particular our result is in agreement with recent results by Mezzadri and Reynolds
\cite{MR15}. We check our results against a contour integral { representation}
derived recently by Borodin and Gorin (presented in the Appendix A from these authors)
which, remarkably, also allows to calculate negative moments.

Finally we sketch the replica method and the application of the freezing-duality conjecture
to extract the moments of $x_m$ from the moments of the Jacobi ensemble.

\subsection{Results for log-correlated processes}

\subsubsection{GUE characteristic polynomials { (GUE-CP)}:}

\noindent

\medskip

Our first prediction is for
the lowest moments of the position of the global {\it maximum} for the the modulus of the characteristic polynomial $p_{N}(x) = \mathrm{det}(xI-H)$ of the Hermitian $N\times N$ matrix $H$ sampled with the probability weight
\begin{equation}
\label{guedensity}
P(H) \propto \mathrm{exp}(-2N\mathrm{Tr}(H^{2}))
\end{equation}
 known as the \textit{Gaussian Unitary Ensemble} (or GUE)\cite{AGZ09,Meh04,PS11}.  Here the variance is chosen to ensure that asymptotically for $N\to \infty$, the limiting mean density of the GUE eigenvalues is given by the Wigner semicircle law $\rho(x) = (2/\pi)\sqrt{1-x^{2}}$ supported in the interval $x\in [-1,1]$. Hence the object we want to study is $\log|p_{N}(x)| = \sum_i \ln|x - \lambda_i|$ where
the $\lambda_i$ are the eigenvalues of the GUE matrix $H$. To study its fluctuations it turns out to be
more convenient to subtract its mean. This leads to the following\\

\noindent{\bf Prediction 1.}
{\it Define $\phi_{N}(x)=2\log|p_{N}(x)|-2\mathbb{E}(\log|p_{N}(x)|)$ and consider the random variable
\begin{equation}
x_{m}^{(N)} := \mbox{Arg max}_{x \in [-1,1]} \phi_{N}(x)
 \label{maxrv}
\end{equation}
Then the lowest even integer moments of this random variable have the values
\begin{eqnarray}
 \lim_{N\to \infty}\mathbb{E}\left\{\left[x_m^{(N)}\right]^2\right\}= \frac{13}{49},   \quad \lim_{N \to \infty}\mathbb{E}\left\{\left[x_m^{(N)}\right]^4\right\}=\frac{20}{147}
\end{eqnarray}
whereas the odd integer moments  vanish by symmetry.

In particular, the kurtosis of the distribution of $x_{m}^{(N)}$ in the large-$N$ limit is given by
\begin{equation}
\lim_{N\to \infty}\left(\frac{\mathbb{E}\left\{\left[x_m^{(N)}\right]^4\right\}}
{\mathbb{E}\left\{\left[x_m^{(N)}\right]^2\right\}^2}-3\right)=-\frac{541}{507}\approx - 1.067\ldots
\end{equation}
}

To make a contact between $|p_N(x)|$ and the LCG processes we refer to the paper \cite{FKS13}.
That work revealed  that the natural large-$N$ limit of
$\phi_N(x)$  is given by the random Chebyshev-Fourier series
\begin{equation}\label{1/fch}
F(x) = -2\sum_{n=1}^{\infty}\frac{1}{\sqrt{n}}\, a_{n}\, T_{n}(x), \qquad x \in (-1,1),
\end{equation}
 with $T_{n}(x) = \cos(n\arccos(x))$ being Chebyshev polynomials and real $a_n$ being independent standard Gaussian variables. A quick computation shows that the covariance structure associated with the generalized process $F(x)$ is given by an integral operator with kernel
\begin{equation}\label{covop}
 \mathbb{E}\{F(x)F(y)\} = 4\sum_{n=1}^{\infty}\frac{1}{n}T_{n}(x)T_{n}(y) = -2\log(2|x-y|),
\end{equation}
 as long as $x\ne y$. Such a limiting process $F(x)$ is an example of an aperiodic $1/f$-noise.

  Note however that the series (\ref{1/fch}) is formal and diverges with probability one. In fact it should be understood as a random generalized function (distribution).  Though there is no sense in discussing the maxima and its position for generalized functions, the problem  is well-defined for the logmod of the characteristic polynomial $\log{|p_N(x)|}$ for any finite $N$. One therefore needs to find a tool to utilize its asymptotically Gaussian nature evident in (\ref{1/fch}). It turns out that the latter is encapsulated in the following asymptotic formula due to Krasovsky  \cite{K07} \footnote{See also earlier works \cite{ForFra,Gar05} where such formula was anticipated and proved for positive integer $\delta_j$} which will be central for our considerations:
\begin{eqnarray} \label{krasov}
\fl \mathbb{E}\left(\prod_{j=1}^{k}|p_{N}(x_{j})|^{2\delta_{j}}\right) &= \prod_{j=1}^{k}C(\delta_{j})(1-x_{j}^{2})^{\delta_{j}^{2}/2}(N/2)^{\delta_{j}^{2}}
e^{(2x_{j}^{2}-1-2\log(2))\delta_{j}N}\\
&\times \exp \Big[ -\sum_{1 \leq i < j \leq k}2\delta_{i}\delta_{j}\,\log|2(x_{i}-x_{j})|\Big] \left[1+O\left(\frac{\log
N}{N}\right)\right]\nonumber
\end{eqnarray}
  where $C(\delta) := 2^{2\delta^{2}}\frac{G(\delta+1)^{2}}{G(2\delta+1)}$, with
 $G(z)$ being the Barnes G-function. In particular, differentiating with respect to $\delta$, we deduce that
\begin{equation}
\mathbb{E}(2\log|p_{N}(x)|) = N(2x^{2}-1-2\log(2))+C'(0) + O(\log(N)/N).
\end{equation}

The formula (\ref{krasov}) suggests that, apart from the factors $C(\delta_{j})$  which as we shall see play no role in our calculations, the faithful description of $2\log|p_{N}(x_{j})|$ is that of the regularized GLC process with covariance (\ref{covop}), the {\it position-dependent} variance $2\left(\ln{N}+\ln{\sqrt{1-x^2}}-\ln{2}\right)$ and the {\it position-dependent} mean $N(2x^2-1-2\log(2))$.
  We find it convenient to subtract the mean value and concentrate on the centered GLC $\phi_N(x)$ in Prediction 1. As to the position-dependent logarithmic variance (stemming from the factors $(1-x_{j}^{2})^{\delta_{j}^{2}/2}$ in (\ref{krasov})) we shall see that it does play a very essential role in statistics of the position of global maximum for $|p_{N}(x)|$ via giving rise to nontrivial "edge charges" in
the corresponding Jacobi ensemble. This observation corroborates with the earlier mentioned fact that the subleading position-dependent term $f(x)$ in the variance of the LCG, see (\ref{LCdefa}), may modify the extreme value statistics.

\subsubsection{Log-correlated Gaussian random potential { (LCGP)} with a background potential:}

\noindent

\medskip

An interesting question is to study the position of the {\it minimum} for the sum of a LCG { random potential and
of a determistic background potential,} i.e:
\bea
x_m = {\rm Argmin}_{x \in D} \big( V(x) + V_0(x) \big)
\eea
Here we obtain results when $D$ is an interval, say $x \in [0,1]$.
The LCG { random potential} has correlations:
 \bea
 \mathbb{E}\{V(x) V(x')\} = C_\epsilon(x-x') \quad , \quad \lim_{\epsilon \to 0} C_\epsilon(x) = - 2 \ln|x| \quad |x|>0
 \label{C}
 \eea
and $C_\epsilon(0)=2 \ln(1/\epsilon)$
where $\epsilon$ is a small scale regularization. The background potential is of the special
logarithmic form: 
\bea \label{V0}
V_0(x) = - \bar a \ln x -  \bar b \ln(1-x)   
\eea
{ which we will often refer to, following the Coulomb gas language, as "edge charges" }at the boundary.
We mainly focus on the case of repelling charges, $\bar a, \bar b>0$, although both the
{ model, and some of our results, extend to some range of attractive charges.}
Some properties of this model, such as the { PDF of the value of the total potential at the
minimum,
were studied in \cite{FLDR09}. Here we obtain, for the two lowest moments
of $x_m$} \\

\noindent{\bf Prediction 2.}
{\it
\begin{eqnarray} \label{p3-1}
\mathbb{E} \left\{ x_m \right\} - \frac{1}{2} = \frac{\bar a- \bar b}{2 (\bar a+\bar b+4)}
\eea
\bea  \label{p3-2}
\mathbb{E}\left\{ x_m^2 \right\}
- (\mathbb{E} \left\{ x_m \right\})^2 = \frac{(\bar a+2) (\bar b+2) (2 \bar a+2 \bar b+9)}{(\bar a+\bar b+4)^2
(\bar a+\bar b+5)^2}
\eea
}
Note that for the background potential $V_0(x)$ alone, i.e. in the absence of disorder,  and for $\bar a, \bar b >0$, the minimum  for the background potential $V_0(x)$ alone is at $x_m^0=\frac{\bar a}{\bar a + \bar b}$, that is $x_m^0 - \frac{1}{2} = \frac{\bar a- \bar b}{2 (\bar a+\bar b)}$. Hence the disorder brings
the minumum closer in average to the midpoint $x=\frac{1}{2}$.

\subsubsection{Fractional Brownian motion with Hurst index $H=0$ { (fBm0)}:}

\noindent

\medskip

The {\it fractional Brownian motion}
introduced by Kolmogorov in 1940 and
 rediscovered in the seminal work by Mandelbrot $\&$ van Ness \cite{ManvNess68} is defined as the Gaussian process  with zero mean and with the covariance structure:
\begin{equation}\label{intr1}
\mathbb{E}\left\{B_H(x_1)B_H(x_2)\right\}=\frac{\sigma_H^2}{2}\left(|x_1|^{2H}+|x_2|^{2H}-|x_1-x_2|^{2H}\right),
\end{equation}
where $0<H<1$ and $\sigma_H^2=\mbox{Var}\{B_H(1)\}$. The utility of these long-ranged correlated processes is related to the properties of being {\it self-similar} and
having {\it stationary increments}, which characterize the corresponding family of Gaussian process uniquely. In particular, for $H=1/2$ the fBm $B_{1/2}(t)\equiv B(t)$ is the usual Brownian motion (Wiener process).
Note however that naively putting $H=0$ in (\ref{intr1}) does not yield  a  well-defined process.
Nevertheless we will see below that the limit $H\to 0$ for fractional Brownian motion  can be properly defined after appropriate regularization and yields a Gaussian process with logarithmic correlations.

Consider a family of Gaussian processes depending on two parameters: $0\le H<1$ and a regularization
$\eta>0$ and given explicitly by the integral representation \cite{FKS13}
\begin{equation}\label{intr3}
 \fl ~~~ B^{(\eta)}_H(x) = \frac{1}{\sqrt{2}}\int_0^{\infty}\frac{e^{-\eta s}}{s^{1/2+H}}\left\{\left[e^{-ixs}-1\right]B_c(ds)/2+
 \left[e^{ixs}-1\right]\overline{B_c(ds)}/2\right\}.
\end{equation}
Here $B_c(s)=B_R(s)+iB_I(s)$, with $B_R(s)$ and $B_I(s)$ being two independent copies of the Wiener process $B(t)$ (the standard Brownian motion) so that $B(dt)$ is the corresponding white noise measure, $\mathbb{E}\left\{B(dt)\right\}=0$ and $\mathbb{E}\left\{B(dt)B(dt')\right\}=\delta(t-t')dtdt'$.

 The regularized process $\{B^{(\eta)}_H(x): x\in \mathbb{R} \}$ is Gaussian, has zero mean and is characterized by the covariance structure
\begin{equation}\label{intr4}
\mathbb{E}\left\{B^{(\eta)}_H(x_1)B^{(\eta)}_H(x_2)\right\}=\phi^{(\eta)}_H(x_1)+\phi^{(\eta)}_H(x_2)-\phi^{(\eta)}_H(x_1-x_2),
\end{equation}
where
\begin{eqnarray}
\phi^{(\eta)}_H(x)&=\frac{1}{2}\int_0^{\infty}\frac{e^{-2\eta s}}{s^{1+2H}}\left(1-\cos{(xs)}\right)\, ds\\
&=\frac{1}{4H}\Gamma(1-2H)\left[(4\eta^2+x^2)^H\cos{\left(2 H \arctan{\frac{x}{2\eta}}\right)}-(2\eta)^{2H}\right].  \label{intr5}
\end{eqnarray}
It is easy to verify that for any $0<H<1$ one has $\lim_{\eta \to 0}B^{(\eta )}_{H}(x)=B_{H}(x)$
 which is precisely the fBm defined in (\ref{intr1}).

As has been already mentioned the limit $\eta \to 0$ for $H=0$ does not yield any well-defined process. At the same time taking the limit $H \to 0$ at fixed $\eta$ gives
\begin{equation}\label{intr6}
\lim_{H\to 0}\phi^{(\eta )}_H(x)=\frac{1}{4}\log{\frac{x^2+4\eta^2}{4\eta^2}},
\end{equation}
ensuring that for any $\eta>0$ the limit of $B^{(\eta)}_{H}(x)$ as $H\to 0$ yields a well-defined Gaussian process $\{ B^{(\eta)}_{0}(x): x\in \mathbb{R} \}$
with stationary increments and with the increment structure function depending {\it logarithmically} on the time separation:
\begin{equation}\label{intr7}
\mathbb{E}\left\{\left[B^{(\eta )}_0(x_1)-B^{(\eta )}_0(x_2)\right]^2\right\}=\frac{1}{2}\log{\frac{|x_1-x_2|^2+4\eta^2}{4\eta^2}}\,.
\end{equation}
 We consider $B^{(\eta )}_{0}(x)$ as the most natural extension of the standard fBm to the case of zero Hurst index $H=0$. We will frequently refer to this process as fBm0. The process is regularized at scales $|x_1-x_2|<2\eta$.

 It is also worth pointing out that there exists an intimate relation between $B^{(\eta )}_0(x)$ and the behaviour of the ({\it increments} of)  GUE characteristic polynomials, though at a different, so-called "mesoscopic" spectral scales \cite{FKS13}, negligible in comparision with the interval $[-1,1]$.  The mesoscopic intervals are defined as those typically containing in the limit $N\to \infty$ a number of eigenvalues growing with $N$, but representing still a vanishingly small fraction of the total number $N$ of all eigenvalues.
  In other words, fBM0 describes behaviour of the (logarithm of) the {\it ratio} of the moduli of characteristic polynomial at mesoscopic difference in spectral parameter. More precisely, $B^{(\eta )}_0(x)$ is , in a suitable sense, given by $N\to \infty$ limit of the following object:
\begin{equation}
\label{wntau}
 W_{N}(x) = \frac{1}{2\pi}\left(-\log\det\left[\left(\frac{x}{d_{N}}I-H\right)^{2}+\frac{\eta^{2}}{4d_{N}^{2}}\right]+
 \log\det\left[H^{2}+\frac{\eta^{2}}{4d_{N}^{2}}\right]\right)
\end{equation}
where $H$ is an $N \times N$ random GUE matrix, parameter $\eta>0$ is a regularisation ensuring that the logarithms are well defined for real $x$ and $d_{N}$ specifies the asymptotic scale of the spectral axis of $H$ as $N \to \infty$ and is chosen to be {\it mesoscopic} $1 \ll d_{N} \ll N$ (say, $d_{N} = N^{\gamma}$ with $0 < \gamma < 1$).

Applying our methods of dealing with LCG to fbM0 yields the following predictions
for a few lowest moments of the position of the global minimum for the process $B^{(\eta )}_{0}(x)$
in the interval $0,L$:\\[1ex]
\noindent{\bf Prediction 3.} {\it
Define $V(x)=2B^{(\eta )}_{0}(x)$ and for $\eta>0$ and $L>0$ consider the random variable
\begin{equation}
y_{m}(\eta,L) :=  \frac{1}{L}\mbox{Arg min}_{x\in [0,L]} V(x) \label{maxrv}
\end{equation}
Then the lowest even integer moments of this random variable have the values
\begin{eqnarray}
\lim_{\eta \to 0}\mathbb{E}\left\{\left[y_m(\eta,L)\right]^2\right\}= \frac{17}{50},
\quad \lim_{\eta \to 0}\mathbb{E}\left\{\left[y_m(\eta,L)\right]^4\right\}= \frac{311}{1470}
\end{eqnarray}
whereas the odd integer moments can be found from the identities
\begin{eqnarray}
\lim_{\eta \to 0}\mathbb{E}\left\{\left[y_m(\eta,L)-\frac{1}{2}\right]^{2k+1}\right\}=0, \quad k=0,1,2,\ldots
\end{eqnarray}
}

 One should point out an interesting difference in application of our method to this case,
concerning the {\it value} of the minimum $V_m = \min_{x \in [0,1]} V(x)$,
which seems to be a direct consequence of non-stationarity of the fBm0. Since the process
is constrained to the value $V(0)=2B^{(\eta )}_{0}(0)=0$ at zero,
$V_m$ is necessarily negative or zero, at variance with the other cases
studied here. As discussed in Appendix G, this implies that the method of analytical
continuation in $n$ of Ref. \cite{FLDR09}, which works nicely for the other cases,
fails to predict the PDF of $V_m$ for the fBm0,
and requires modifications which are left for future studies.
We do not believe that this problem bears consequence to the {\it moments of $x_m$},
which enjoy a nice and simple analytical continuation to $n=0$. As we checked
numerically up to large $n$, these moments pass the standard tests (i.e. positivity of Hankel matrices)
for existence of a positive associated PDF. {\it Conditional moments} however, i.e. conditioned to an atypically
high value of $V_m$, would need a more careful study, beyond the
scope of this paper.

\subsection{Moments of the eigenvalue density of the Jacobi ensemble}
\label{sec:mom}

As mentioned above the statistics of extrema in log-correlated fields on an interval relate to the $\beta$-Jacobi ensemble of random matrices. We denote ${\bf y}=(y_1,\ldots,y_n)$ the set of
eigenvalues, with $y_i \in [0,1]$, $i=1,\ldots,n$. The model can be defined \cite{DE2002,Forbook}
by the joint distribution of eigenvalues
\footnote{Note that we use $2 \kappa$ for the Dyson index instead of $\beta$ to avoid confusion with
the inverse temperature, also denoted $\beta$, associated to the study of the log-correlated process.}
\be \label{jpdf}
\fl  {\cal P}_J({\bf y})  d {\bf y} = \frac{1}{{\cal Z}_n} \prod_{i=1}^n dy_i y_i^a (1-y_i)^b |\Delta({\bf y})|^{2 \kappa}   \quad , \quad \Delta({\bf y}) = \prod_{1 \leq i < j \leq n} (y_i-y_j)
\ee
where the normalization constant, ${\cal Z}_n$, is the famous Selberg integral
\cite{ForWar} for which an explicit formula exists for any positive integer $n$

\begin{equation}\label{MKIb}
{\cal Z}_n = Sl_n(\kappa,a,b):=\int_{[0,1]^n}|\Delta({\bf y)}|^{2\kappa}\prod_{i=1}^n y_i^{a}(1-y_i)^{b}dy_i
\end{equation}
\[
=\prod_{j=0}^{n-1}\frac{\Gamma\left(a+1+\kappa j\right)\Gamma\left(b+1+\kappa j\right)\Gamma\left(1+\kappa (j+1)\right)}
{\Gamma\left(a+b+2+\kappa (n+j-1)\right)\Gamma\left(1+\kappa\right)}
\]
In \cite{FLDR09} we analytically continued this formula to complex $n$
which allowed to obtain
the probability distribution of the height of the global minimum of the process
(see also \cite{Ostrovsky1},\cite{Ostrovsky2},\cite{Ostrovsky3},\cite{Ostrovsky4}).
In the present paper we advance this analysis much further in order to extract the statistics of the {\it position} of the global minimum. As we will show this requires to obtain some exact formula for the moments of the
eigenvalue density for the $\beta$-Jacobi ensemble for arbitrary positive integer $n$. Furthermore these
should be explicit enough to allow for a continuation to $n=0$.

Let us define the average value $<f(y)>_J$ of any function $f(y)$ over the Jacobi density by the relation
\begin{equation} \label{av}
<f({\bf y})>_J:=[Sl_n(\kappa,a,b)]^{-1}\int_{[0,1]^n} \, f({\bf y})\,|\Delta({\bf y})|^{2\kappa}\prod_{i=1}^n y_i^{a}(1-y_i)^{b}dy_i
\end{equation}
In particular for any positive integer $n$ the mean density of eigenvalues is defined as:
\bea
\rho_J(y) = < \frac{1}{n} \sum_{i=1}^n \delta(y_i-y) >_J = <\delta(y_1-y)>_J
\eea
where $\delta$ is the Dirac distribution. The moments studied here are then defined as:
\bea
M^{(J)}_k = \int_0^1 dy  y^k \rho_J(y)  = < y_1^k >_J
\eea

The problem of calculating these moments, for the $\beta$-Jacobi ensemble, has been already addressed in the theoretical physics and
mathematical literature, motivated by various applications.
In full generality it turns out to be a hard problem, and only limited results were available.
One method is based on recursion on the order of the moment, as outlined in the chapter 17 of Mehta's book \cite{Meh04}  and used in \cite{SS2006,SS2008} in the context of  conductance distribution in chaotic transport through mesoscopic cavities. In that approach higher moments calculations become technically unsurmountable.
Another approach using Schuhr functions was developed and gave very explicit results for these and
other moments for $\beta=1,2$ \cite{SS2009} (see also related work in \cite{Vivo} and
\cite{MS2013}).

More recently, an interesting contour integral representation for those moments was proved by Borodin and Gorin, but remained unpublished. It is described in the Appendix A to the present paper, provided by these authors.
It allows a systematic calculation of the moments, including negative ones. However evaluating these
integrals becomes again a challenge for higher moments.

In the present paper we give an explicit expression  for all integer moments, positive and negative, of the eigenvalue density
for the $\beta$-Jacobi ensemble, based on a different approach, in terms of sums over partitions.

Let us further denote
$\lambda=(\lambda_1 \geq \lambda_2 \geq  .. \geq \lambda_{\ell(\lambda)})$ a partition of length $\ell(\lambda)$
of the integer $k \geq 0$,
with $\lambda_i$ strictly positive integers such that $|\lambda|=\sum_{i=1}^{\ell(\lambda)} \lambda_i=k$. Then
 we obtain
\be \label{momentsJ1}
 \left\langle \frac{1}{n} \sum_{j=1}^n y_j^k \right\rangle_J = \sum_{\lambda, |\lambda|=k} A_\lambda
 a^+_\lambda  \quad , \quad
  \left\langle \frac{1}{n} \sum_{j=1}^n y_j^{-k} \right\rangle_J = \sum_{\lambda, |\lambda|=k} A_\lambda
  a^-_\lambda
\ee
where the sum is over all partitions of $k$, and
\bea \label{momentsJ2}
\fl && A_\lambda = \frac{ k (\lambda_1-1)! }{
(\kappa(\ell(\lambda)-1)+1)_{\lambda_1}}
\prod_{i=2}^{\ell(\lambda)}
 \frac{(\kappa(1-i))_{\lambda_i}}{(\kappa(\ell(\lambda)-i)+1)_{\lambda_i}}  \prod_{1 \leq i < j \leq \ell(\lambda)}
\frac{\kappa(j-i)+\lambda_i-\lambda_j}{\kappa(j-i) }   \\
\fl && \times \frac{1}{n} \prod_{i=1}^{\ell(\lambda)}
  \frac{(\kappa(n-i+1))_{\lambda_i} }{ (\kappa(\ell(\lambda)-i+1))_{\lambda_i} }
   \prod_{1 \leq i < j \leq \ell(\lambda)} \frac{(\kappa(j-i+1))_{\lambda_i-\lambda_j} }{ (\kappa(j-i-1)+1)_{\lambda_i-\lambda_j}  }
   \nonumber
\eea
in terms of the Pochhammer symbol $(x)_n = x (x+1) .. (x+n-1) = \Gamma(x+n)/\Gamma(x)$,
with for positive moments
\bea \label{momentsJ2pos}
&& a^+_\lambda = \prod_{i=1}^{\ell(\lambda)}  \frac{(a+1+ \kappa(n-i))_{\lambda_i} }{
  (a+b+2+\kappa (2 n - i -1))_{\lambda_i} }
\eea
while for negative moments
\bea \label{momentsJ2neg}
&& a^-_\lambda = \prod_{i=1}^{\ell(\lambda)}  \frac{(a+1+ \kappa(i-1))_{-\lambda_i} }{
  (a+b+2+\kappa (n + i -2))_{- \lambda_i} }
\eea
This result as it stands was derived for $n, k \in \mathbb{N}$
and $\kappa >0$, and in range of values of $a,b$ such that these moments exist.
Later however we will study analytic continuations in these
parameters.

For positive moments, this very explicit formula is equivalent to a result by Mezzadri and Reynolds, which appeared in
\cite{MR15} during the course of the present work. Our independent derivation is relatively straightforward and will be given below for both positive and negative moments.

One can check on the above expressions (\ref{momentsJ1})-(\ref{momentsJ2}),
that in the limit $\kappa \to 0$ partitions contribute as $\sim \kappa^{\ell(\lambda)-1}$, hence only the single partition $\lambda=(k)$ with a single row contributes, leading to the trivial limit for the moments:
\bea \label{zerokappa}
\left\langle \frac{1}{n} \sum_{j=1}^n y_j^k \right\rangle_{J, \kappa=0} = \frac{(a+1)_k}{(a+b+2)_k}
\eea
 here for $k$ of either sign,
as expected, since in that limit the Jacobi measures decouples
${\cal P}_J({\bf y}) = \prod_{i=1}^n P^0(y_i)$ where $P^0(y)=
\frac{\Gamma (a+b+2)}{\Gamma (a+1) \Gamma (b+1)} y^a (1-y)^b$. Other general properties of
the moments, such as identity of moments of $y$ and of $1-y$ for $a=b$, are
less straightforward to see on the above formula, and are explicitly checked below for low moments.

\subsection{Replica method and freezing duality conjecture for log-correlated processes}
Our method of addressing statistics related to the global {\it minimum} of random functions (which can be trivially
adapted for the global maximum with obvious modifications)
 is inspired by statistical mechanics of disordered systems.  Namely, we look at any random function $V(x)$ defined in an interval $D\in \mathbb{R}$ of the real axis  as a one-dimensional random potential, with $x$ playing the role of spatial coordinate. To that end we introduce for any $\beta>0$ and any positive $\beta-$independent weight function $\mu(x)>0$
  the associated Boltzmann-Gibbs-like equilibrium measure by
 \be\label{2}
 p_{\beta}(x)=\frac{1}{Z_{\beta}}\,\mu(x) e^{-\beta V(x)}\equiv \frac{1}{Z_{\beta}}\,
\int_{D}\,\delta(x-x_1)\, e^{-\beta V(x_1)} \mu(x_1)\,dx_1,
\ee
where we have defined the associated normalization function (the "partition function")
 \be\label{3}
 Z_{\beta}=\int_{D}\,e^{-\beta V(x)} \mu(x)\,dx
\ee
with $\beta=1/T$ playing the role of inverse temperature.

According to the basic principles of statistical mechanics in the limit of zero temperature $\beta\to \infty$ the Boltzmann-Gibbs measure must be dominated by the minimum of the random potential $V_m=\min_{x\in D}\,V(x)$ achieved at the point $x_m\in D$.
The latter is randomly fluctuating from one realization of the potential to another. The probability density for the position of the minimum is defined as
${\cal P}(x)=\overline{\delta(x-x_m)}$ where from now on we use the bar to denote the expectation with respect to random process ("potential disorder")  realizations: $$\overline{\left(\ldots\right) }\equiv \mathbb{E}\left\{\left(\ldots \right)\right\} $$
This leads to the fundamental relation
\be \label{4}
~~~~ {\cal P} (x)=\lim_{\beta\to \infty}\overline{ p_{\beta}(x)}
\ee
Therefore, calculating ${\cal P}(x)$ amounts to (i) performing the disorder average of the Boltzmann-Gibbs measure (\ref{2}) and (ii) evaluating its zero-temperature limit. The second step is highly non-trivial,
due to a phase transition occuring at some finite value $\beta=\beta_c$. In our previous work on
decaying Burgers turbulence \cite{FLDR10}, which turns out to be a limiting case of the present
problem (see Section \ref{sec:gauss}), we have already succeeded in implementing that program.
Following the same strategy, the step (i) is done using the
replica method, a powerful (albeit not yet mathematically rigorous) heuristic method of theoretical
physics of disordered systems. It amounts to
representing $Z_{\beta}^{-1}=\lim_{n\to 0} Z_{\beta}^{n-1}$ which after assuming integer $n>1$ results in the formal identity
\begin{equation}
 \overline{p_\beta(x)} = \lim_{n \to 0} p_{\beta,n}(x)
\end{equation}
where
\bea \label{replica}
\fl && p_{\beta,n}(x) = \mu(x) \overline{ e^{-\beta V(x)}  Z_\beta^{n-1}} = \int_{x_1
\in D} ... \int_{x_n \in D} ~ \overline{e^{- \beta \sum_{=1}^n V(x_i)} } ~
\delta(x-x_1)\prod_{i=1}^n \mu(x_i)\,  dx_i
\eea
Note that $p_{\beta,n}(x)$ is not a probability distribution for general $n$, but becomes one for $n=0$.

In the next section we will show how to calculate the moments $M_k(\beta)=\int_{D} p_{\beta,n}(x) x^k \,dx$
for positive integer $k$ and any integer value $n>0$ in the high temperature phase of the model $\beta<\beta_{c}$. Note that by a trivial rescaling of the potential one always can ensure $\beta_c=1$,
setting $g=1$ in (\ref{LCdefa}), and we assume such a rescaling henceforth.
In the range $\beta<1$ the formulae we obtain for the moments turn out to be easy to
continue to $n=0$. This yields
the integer moments of the probability density $\overline{p_\beta(x)}$ in that
phase. There still however remains the task of finding a way to continue those expressions to $\beta>1$ in order to compute the limit $\beta\to \infty$ and extract the information about the Argmin distribution ${\cal P} (x)$.
To perform the continuation, we rely on the {\it freezing transition scenario} for logarithmically correlated random landscapes. The background idea
goes back to \cite{CLD01} and was advanced further in \cite{FB08} leading to explicit predictions.
In \cite{FLDR09} it was discovered that the duality property appears to play a crucial role, leading
to the {\it freezing-duality conjecture} (FDC), which was further utilized in
\cite{FLDR10,FLDR12,FHK12,FK14,FyoSim15,CRS}.
In brief, the FDC predicts a phase transition at the critical value $\beta=1$
and amounts to the following principle:
 \begin{center}
 {\it Thermodynamic quantities which for $\beta<1$ are {\bf duality-invariant} functions of the inverse temperature $\beta$, that is remain invariant under the transformation $\beta\to \beta^{-1}$, "{\bf freeze}" in the low temperature phase, that is retain for all $\beta>1$  the value they acquired at the point of self-duality $\beta =1$.}
 \end{center}

Here, on our explicit formula, we will indeed be able to verify that every integer moment $M_k(\beta)=\int_{D} p_{\beta,n=0}(x) x^k \,dx$ of the probability density $\overline{p_\beta(x)}$ is {\it
duality-invariant} in the above sense, and hence can be continued to
$\beta>1$ using the FDC, yielding the moments of the position of the global minimum.
Another way of proof is based on the powerful contour integral representation for the moments of Jacobi ensemble of random matrices provided by Borodin and Gorin. We thus conjecture that not only all moments, but the whole disorder averaged Gibbs measure $\overline{p_\beta(x)}$ freezes at $\beta=1$, hence that the PDF of the position of the minimum is determined as
\bea
{\cal P}(x) = \lim_{\beta \to 1} \overline{p_\beta(x)}
\eea
similar to the conjecture in \cite{FLDR10} in our study of the Burgers equation.

Although the FDC scenario is not yet proven mathematically in full generality and has a status of a conjecture supported by physical arguments and available numerics, recently a few nontrivial aspects of freezing were verified within rigorous probabilistic analysis, see e.g. \cite{AZ14,SZ2014,vargasfreezing,DRZ2015} for
 progress in that direction
  \footnote{In \cite{CLD01,FB08,FLDR09} we predicted the freezing of the generating
  function $g_\beta(y):=\overline{\exp(-e^{\beta y} z_\beta)}$, where
  $z_\beta$ is a regularized version of the partition sum $Z_\beta$.
  It was proved
  in \cite{vargasfreezing} (see corollary 2.3) in a more general
  and rigorous setting. It tells about the
  free energy $f_\beta=-\beta^{-1} \ln z_\beta$ and the {\it value} of
  the minimum $f_\infty$. Indeed, by construction $1-g_\beta(y)$ is the cumulative distribution function
  of the random variable defined as $y_\beta:=f_\beta - \beta^{-1} G$ where
  $G$ is a unit Gumbell random variable, independent from $f_\beta$. The PDF of
  $y_\beta$ for any $\beta<\beta_c$, of $y_{\beta_c}$ and of $f_\infty$ are thus
  identical. See the Appendix H for more details. }. However the role of duality has not yet been verified rigorously. Interesting connections to
  duality in Liouville and conformal field theory \cite{Zamo1996} remain to be clarified.

In the rest of the paper we give a detailed derivation of the outlined steps of our procedure
and an analysis of the results.

\section{Acknowledgements}
The authors are very grateful to Nick Simm for kindly providing numerical data on argmax of GUE, as well as to Alexei Borodin and Vadim Gorin for bringing their methods to our attention, for writing the Appendix A in the
present paper, and for indicating relevant references to us. We would like to warmly thank the anonymous referee for suggesting to use formulas (\ref{res2neg},\ref{invert}) for obtaining an explicit expression for the negative moments, Alberto Rosso for a lively discussions at the early stage of the project,
and Dima Savin for guiding us to the literature on the moments of Jacobi density. We also
thank F. Mezzadri and A. Reynolds for informing us on their moment formulae prior to publication.
Kind hospitality of the Newton Institute, Cambridge during the program "Random Geometry"
as well as of the Simon Center for Geometry and Physics in Stony Brook, where this research
 was completed, is acknowledged with thanks. YF was supported by EPSRC grants EP/J002763/1 ``Insights into Disordered Landscapes via Random Matrix Theory and Statistical Mechanics''
and EP/N009436/1 "The many faces of random characteristic polynomials".
PLD was supported by PSL grant ANR-10-IDEX-0001- 02-PSL.

\section{Calculations within the Replica Method}

Our goal in this section is to develop the method of evaluating the required disorder average and calculating the resulting integrals explicitly in some range of inverse temperatures $\beta$ for a few instances of the log-correlated random potentials $V(x)$.

\subsection{Connections to the $\beta$-Jacobi ensemble}

\subsubsection{GUE characteristic polynomial.}

In that case we will follow the related earlier study in \cite{FyoSim15} and use the family of weight factors
$\mu(x) = \rho(x)^q$ on the interval $x \in [-1,1]$, with $\rho(x)=\frac{2}{\pi} \sqrt{1-x^2}$ and parameter $q>0$.
Such a choice of the weight is justified {\it a posteriori} by the possibility to find within this family a duality-invariant expression for the moments in the high-temperature phase which is central for our method to work.
Since here we are interested in the {\it maximum} of the characteristic polynomial we will define the potential $V(x)=- \phi_{N}(x)$ where $\phi_{N}(x)$, defined in {\bf Prediction 1}, is not strictly a Gaussian field. Nevertheless due to
the Krasovsky formula (\ref{krasov})  we can write asymptotically in the limit of large $N\gg 1$
\bea
&&  \overline{ e^{- \beta \sum_{a=1}^n V(x_a)}\, } =
\overline{ e^{\beta \sum_{a=1}^n \phi_{N}(x_a)}\, }  = \overline{ \prod_{a=1}^n |p_N(x_a)|^{2 \beta} }
e^{-2 \beta \sum_{a=1}^n \overline{\ln |p_{N}(x_a)|}}\, \\
&& \simeq A_n \prod_{a=1}^n (1-x_a^2)^{\beta^2/2}
\prod_{a<b} |x_a-x_b|^{-2 \beta^2}\label{case2}
\eea
with $A_n = [ C(\beta) (N/2)^{\beta^2} e^{- \beta C'(0)} 2^{- \beta^2 (n-1)} ]^n$.

\subsubsection{General scaled model in $[0,1]$.}

It is easy to see that by proper rescaling the evaluation of the function $p_{\beta,n}(x)$ from (\ref{replica}) amounts in the case of both characteristic GUE polynomials and the LCGP process to evaluating particular cases of the following multiple integral defined on the interval $[0,1]$:
\begin{eqnarray} \label{rescaled}
\fl p_{\beta,a,b,n}(y) = \int_0^1 .. \int_0^1 \prod_{i=1}^n dy_i y_i^{a} (1-y_i)^{b}
 \prod_{1 \leq i < j \leq j \leq n} \frac{1}{|y_i -y_j|^{2 \beta^2}} ~ \delta(y-y_1)
\end{eqnarray}
which is formally the (un-normalized) density of eigenvalues of the
$\beta$-Jacobi ensemble
(\ref{jpdf}), i.e. $p_{\beta,a,b,n}(y) = {\cal Z}_n  \rho_J(y)$,
however with a negative value of the parameter $\kappa=- \beta^2$.
In particular the normalisation factor ${\cal Z}_n$ is the Selberg integral (\ref{MKIb}). Note that both integrals are well defined for
$\beta^2 < 1$ and positive integer $n$. They can also be defined for larger $\beta$, upon
introducing an implicit
small scale cutoff which modifies the expressions for $|y_i-y_j|<\epsilon$.
However we will use only
the high temperature regime and will continue analytically our moment formula to
$n=0$ and $\beta =1$.

These integrals are associated to
the statistical mechanics of a random energy model generated
by a LCG field in the interval $[0,1]$ in presence of boundary charges
of strength $a$ and $b$. More precisely, they appear in the
study of the continuum partition function introduced in \cite{FLDR09}
\bea \label{modelab}
Z_\beta= \epsilon^{\beta^2} \int_0^1 dy y^a (1-y)^b e^{-\beta V(y)}
\eea
where the correlator of $V(x)$ was defined in (\ref{C}). Here $a$ and $b$ are the
two parameters of the model and the factor $\epsilon^{\beta^2}$
ensures that the integer moments $\overline{Z_\beta^n}$
are $\epsilon$-independent in the high temperature regime.
There, these moments $\overline{Z_\beta^n} = {\cal Z}_n$ are given by the Selberg integral (\ref{MKIb}).
Several aspects of this model, such as freezing, duality and obtaining the
PDF of the value of the minimum, were analyzed in \cite{FLDR09}.
Here we will focus instead on the calculation of the moments
of the position $y$. In each realization of the random potential $V$
they are defined as
\bea \label{momentsab}
< y^k>_{\beta,a,b} = \frac{\int_0^1 dy y^k y^a (1-y)^b e^{-\beta V(y)}}{\int_0^1 dy y^a (1-y)^b e^{-\beta V(y)}}
\eea
and we will be interested in calculating their disorder averages $\overline{< y^k>_{\beta,a,b}}$.

\subsubsection{Fractional Brownian motion.}

For our fBm0 example we take the weight function $\mu(x)=1$ in (\ref{2}), (\ref{3}) on the interval $x \in [0,L]$,
and use the rescaled fBm with $H=0$ as the random potential: $V(x)=2 g B^{(\eta )}_0(x)$. Exploiting the Gaussian nature of fBm0 we can easily perform the required average using (\ref{intr4}) which yields
\bea
&& \overline{ e^{-2 \beta g \sum_{i=1}^n B^{(\eta )}_0(x_i)}}=e^{2 (\beta g)^2\left\{\sum_{i=1}^n \overline{\left[B^{(\eta )}_0(x_i)\right]^2}+
2\sum_{i<j}^n \overline{ B^{(\eta )}_0(x_i) B^{(\eta )}_0(x_j)}\right\}} \\
&& =e^{n (2 \beta g)^2\sum_{i=1}^n \phi^{(\eta)}_0(x_i)-(2 \beta g)^2
 \sum_{i<j}^n \phi^{(\eta)}_0(x_i-x_j)}
\eea
Further using the definitions (\ref{intr6})-(\ref{intr7}) we arrive at
\bea
 &&
 \overline{ e^{- 2 \beta g \sum_{i=1}^n B^{(\eta )}_0(x_i)}}=
 \prod_{i=1}^n \left[\frac{x_i^2+4\eta^2}{4\eta^2}\right]^{a/2}
 \prod_{i<j}^n \left[\frac{(x_i-x_j)^2+4\eta^2}{4\eta^2}\right]^{- \gamma} \\
 &&  a= 2 n \gamma \quad , \quad \gamma = g^2 \beta^2
\eea
Assuming that all $x_i>0$ we can write in the limit of vanishing regularization $\eta\to 0$:
\begin{equation}
 \overline{ e^{-2 \beta g \sum_{i=1}^n B^{(\eta )}_0(x_i)}}\approx (2\eta)^{n(\gamma(n-1)-a)}
 \prod_{i=1}^n x_i^a
 \prod_{i<j}^n |x_i-x_j|^{-2\gamma}\label{case1}
\end{equation}
  For convenience we will use the particular value $g=1$ ensuring {\it a posteriori} the critical temperature value $\beta_c=1$.\\[1ex]

It is easy to see how our three examples can be studied
within the framework of the model (\ref{rescaled}):

\begin{itemize}

\item GUE characteristic polynomial. We define $x= 1-2 y$, , with $y \in[0,1]$ and
\bea
&& p_{\beta,n}(x) dx = C_n p_{\beta,a,a,n}(y) dy \quad , \quad a = \frac{q + \beta^2}{2} \\
&& C_n =  [ (\frac{2}{\pi})^{q}  C(\beta) N^{\beta^2} e^{- \beta C'(0)}
2^{1+ q - 2 \beta^2 (n-1)} ]^n
\eea

\item LCGP plus a background potential on interval $[0,1]$, defined in (\ref{C}) and (\ref{V0}). One sees that this
model is obtained by choosing $a = \beta \bar a$ and $b= \beta \bar b$.

\item fBm0: we define $x=L y$, with $y \in[0,1]$ and:
\bea
p_{\beta,n}(x) dx = L^{n(n+1) \gamma + n}  p_{\beta,a=2 n \beta^2,b=0,n}(y) dy
\eea

\end{itemize}

Although the explicit form of the density $\rho_J(y)$ of the $\beta$-Jacobi ensemble is
not known in a closed form for finite $n$, the formulae
(\ref{momentsJ1})-(\ref{momentsJ2}) displayed in the introduction provide an explicit expression
for its integer moments
\bea
&& <y^k>_{\beta,a,b,n} := \frac1{\cal Z}_n \int_0^1 dy y^k p_{\beta,a,b,n}(y)
\eea
for positive integers $k,n$ and $\beta^2<0$. The collection of
indices $\beta,a,b,n$ is now replacing the index $J$, and
some of them may be omitted below when no confusion is possible.

We will now derive the formulae (\ref{momentsJ1})-(\ref{momentsJ2}) using
methods based on Jack polynomials. In a second stage we will
continue them analytically to arbitrary $n$, including $n=0$, and to $0<\beta^2 \leq 1$,
to obtain moments for the random model of interest as:
\bea
\overline{<y^k>_{\beta,a,b}} = \lim_{n \to 0} <y^k>_{\beta,a,b,n}
\eea

\subsection{Derivation of the moment formula in terms of sums over partitions}

To calculate the moments, we now consider one of the most distinguished bases of symmetric polynomials,
the Jack polynomials,
named after Henry Jack. They play a central role for our study because their
average with respect to the $\beta$-Jacobi measure is explicitly known, due to
Kadell \cite{Kadell} (see below). As discussed
in the book \cite{Jackbook} (see reprint of the original article)
Jack introduced a special set of symmetric polynomials of $n$ variables ${\bf y}:=(y_1,\ldots,y_n)$ indexed by integer partitions $\lambda$ and dependent on a real parameter $\alpha$. He called them $Z(\lambda)$, which in modern notations are denoted as $J^{(\alpha)}_{\lambda}({\bf y})$ following I. Macdonald
who greatly developed the theory of such and related objects in the book \cite{Macdonald}.
Another important source of information is Stanley's paper \cite{Stanley}. In what follows we
use notations and conventions from \cite{Macdonald} and \cite{Stanley}.
\footnote{Note that other papers use different conventions. The reader is advised
to check the conventions with care.}

Let us recall the definition of a partition
$\lambda=(\lambda_1 \geq \lambda_2 \geq  .. \geq \lambda_{\ell(\lambda)}) >0)$
of the integer $k$, of length $\ell(\lambda)$, with $\lambda_i$ strictly positive integers
such that $|\lambda|:=\sum_{i=1}^{\ell(\lambda)} \lambda_i=k$. It can be written as
\bea
\lambda = \{ (i,j) \in \mathbb{Z}^2 ; 1 \leq i \leq \ell(\lambda) , 1 \leq j \leq \lambda_i \}
\eea
from which one usually draw the Young diagram representing the partition $\lambda$,
as a collection of unit area square boxes in the plane, centered at
coordinates $i$ along the (descending, southbound) vertical, and $j$ along the (eastbound) horizontal.
The dual (or conjugate) $\lambda'$ of $\lambda$ is
the partition whose Young diagram is the transpose of
$\lambda$, i.e. reflected along the (descending) diagonal $i=j$.
Hence $\lambda'_i$ is the number of $j$ such that $\lambda_j \geq i$.
If $s=(i,j)$ stands for a square in the Young diagram, one defines the "arm length"
$a_{\lambda}(s)=\lambda_i-j$ which
is equal to the number of squares to the east of square $s$ and the "leg length"
 $l_{\lambda}(s)=\lambda_j'-i$ as the number of squares
 to the south of the square $s$. One then defines the product
\be \label{cdef}
\fl  ~~~~~~ c(\lambda,\alpha,t):=\prod_{s\in \lambda} (\alpha a_{\lambda}(s)+l_{\lambda}(s)+t)
= \prod_{i=1}^{\ell(\lambda)}  \prod_{j=1}^{\lambda_i}  (\alpha (\lambda_i-j) + \lambda'_j - i + t)
\ee
which will be used later on
\footnote{ Note that $c(\lambda,\alpha,1)= \prod_{s\in \lambda} h^\lambda_*(s)$ and
$c(\lambda,\alpha,\alpha)=\prod_{s\in \lambda} h_\lambda^*(s)$ in terms of the
notations of \cite{Stanley}.}.

Given a partition
$\lambda$, one defines, in the theory of symmetric functions,
the monomial symmetric functions $m_\lambda({\bf y})$, over $n$ variables ${\bf y}= \{y_r\}$, $r=1,,n$
as
$m_\lambda({\bf y})=\sum_\sigma \prod_{r=1}^n y_{\sigma(r)}^{\lambda_i}$ where
the summation is over all non-equivalent permutations of the variables.
For example, given a partition $(211)$ of $k=4$,
$m_{(211)}({\bf y})=y_1^2 y_2 y_3 + y_1 y_2^2 y_3 + y_1 y_2 y_3^2$.
Another useful set of symmetric functions, of obvious importance for
the calculation of moments, are the power-sums:
$$p_\lambda({\bf y})= \prod_{i=1}^{\ell(\lambda)} \sum_{r=1}^n y_r^{\lambda_i}$$
which form a basis of the ring of symmetric functions.

Define the following scalar product, which depends on a real parameter $\alpha$ as
\bea
<p_\lambda , p_\mu > = \delta_{\lambda \mu} z_\lambda \alpha^{\ell(\lambda)} \quad , \quad
z_\lambda = 1^{q_1} 2^{q_2} .. q_1! q_2! ..
\eea
where $q_i=q_i(\lambda)$ is the number of rows in $\lambda$ whose length are equal to $i$. Here and below we suppress the arguments of all
symmetric polynomials except when explicitly needed, for example $p_\lambda({\bf y}) \to p_\lambda$.
The Jack functions
$J^{(\alpha)}_\lambda$ obey the following properties (i) orthogonality with respect to the above scalar
product (ii) fixed coefficient of highest degree in
the monomial basis
\footnote{see theorems 5.6 and 5.8 in \cite{Stanley} and (i) p 377 (ii) (10.13-10.22) in
\cite{Macdonald}.}
\bea
&& <J^{(\alpha)}_\lambda , J^{(\alpha)}_\mu> = c(\lambda, \alpha,1) c(\lambda,\alpha,\alpha)  \delta_{\lambda \mu}  \\
&& J^{(\alpha)}_\lambda = c(\lambda,\alpha,1) m_\lambda + \sum_{\nu < \lambda} u_{\lambda \nu} m_\nu
\eea

We can go from the basis of power sums to the Jack polynomial basis
by the following linear transformations:
\bea
J^{(\alpha)}_\lambda = \sum_\nu \theta^\lambda_\nu(\alpha) p_\nu  \quad , \quad
p_\nu = \sum_\lambda \gamma^\lambda_\nu(\alpha) J^{(\alpha)}_\lambda
\eea
where the coefficients are in general complicated. Note that the $k$-th
moment that we are interested in is precisely the Jacobi ensemble average
\bea
< \sum_{r=1}^n y_r^k >_J = < p_{(k)}({\bf y}) >_J
\eea
where $(k)$ denotes the
partition with only one row of length $k$. Hence we need only
the coefficient $\gamma^\lambda_{\nu=(k)}$. As we now show
one can express this coefficient in terms of
$\theta^\lambda_{\nu=(k)}$. Indeed, one can write in two ways
the following scalar product, first as
\bea
< J^{(\alpha)}_\lambda , p_\mu > =  \sum_\nu \theta^\lambda_\nu(\alpha) < p_\nu , p_\mu >
= \theta^\lambda_\mu(\alpha) z_\mu \alpha^{\ell(\mu)}
\eea
and also as
\bea
< J^{(\alpha)}_\lambda , p_\mu > =  \sum_\tau \gamma^\tau_\mu(\alpha) <  J^{(\alpha)}_\lambda , J^{(\alpha)}_\tau >
= \gamma^\lambda_\mu(\alpha)  c(\lambda, \alpha,1) c(\lambda,\alpha,\alpha)
\eea
Hence, comparing we obtain
\bea
&& \gamma^\lambda_\mu(\alpha)  = \frac{ \theta^\lambda_\mu(\alpha) z_\mu \alpha^{\ell(\mu)} }{ c(\lambda, \alpha,1) c(\lambda,\alpha,\alpha)}
\eea
valid for arbitrary partitions $\mu$ and $\lambda$, which we apply to $\mu=(k)$.

Now it turns out that $\theta^\lambda_{(k)}(\alpha)$ is known to
to be equal to:
\bea
\theta^\lambda_{(k)}(\alpha)  = \prod_{s - \{1,1\}} (\alpha a'_\lambda(s) - l'_\lambda(s))
\eea
 if $|\lambda|=k$ and zero otherwise,
(see p 383 Ex. 1 (b) in \cite{Macdonald} and (19) in \cite{Connection}),
where $a'_\lambda(s) = j-1$ and $l'_\lambda(s)=i-1$ are respectively
called the co-arm and co-leg lengths of the partition $\lambda$, and the product
does not include the
box $s=(1,1)$.

 {\it Positive moments:} this leads to the explicit result for the  positive $k$-th moment in terms of an average of the
Jack polynomial associated to the partition $(k)$:
\bea
&& < \sum_{r=1}^n y_r^k >_J = < p_{(k)}({\bf y}) >_J =  \sum_{\lambda, |\lambda|=k} \gamma^\lambda_{(k)}(\alpha)
< J^{(\alpha)}_{\lambda}({\bf y}) >_J  \label{res2} \\
&& \gamma^\lambda_{(k)}(\alpha) = \frac{ k \alpha }{ c(\lambda, \alpha,1) c(\lambda,\alpha,\alpha)} \prod_{s - \{1,1\}} (\alpha a'_\lambda(s) - l'_\lambda(s))  \label{res1}
\eea
(see Appendix F 
for an alternative rewriting of this formula).

The problem therefore amounts to evaluating the Jacobi average of $J^{(\alpha)}_{\lambda}({\bf y})$. Such averages where
evaluated for a general partition $\lambda$,
for a differently normalized set of Jack polynomials, denoted by Macdonald as
$P^{(\alpha)}_{\lambda}({\bf y})$, related to the $J^{(\alpha)}_{\lambda}({\bf y})$ as follows
\bea \label{relJP}
J^{(\alpha)}_{\lambda}({\bf y}) = c(\lambda,\alpha,1) P^{(\alpha)}_{\lambda}({\bf y})
\eea
Namely, as was conjectured by Macdonald in \cite{Macdonald}, and proved by Kadell \cite{Kadell},
there exists a closed form expression for  $P^{(\alpha)}_{\lambda}({\bf y})$ integrated with the
(unnormalized) Jacobi density over the hypercube
${\bf y}\in [0,1]^n$ with the correspondence
\bea
\alpha = 1/\kappa \label{alpha}
\eea
 It is given by
\begin{equation}\label{MKI} \int_{[0,1]^n}P^{(1/\kappa)}_{\lambda}({\bf y})|\Delta({\bf y})|^{2\kappa}\prod_{i=1}^n y_i^{a}(1-y_i)^{b}dy_i
\end{equation}
\[
=n!v_{\lambda}(\kappa)\prod_{i=1}^n\frac{\Gamma\left(\lambda_i+a+1+\kappa(n-i)\right)\Gamma\left(b+1+\kappa(n-i)\right)}
{\Gamma\left(\lambda_i+a+b+2+\kappa(2n-i-1)\right)}
\]
where
\begin{equation}\label{MKIa}
v_{\lambda}(\kappa)=\prod_{1\le i<j\le n}\frac{\Gamma\left(\lambda_i-\lambda_j+\kappa(j-i+1)\right)}
{\Gamma\left(\lambda_i-\lambda_j+\kappa(j-i)\right)}
\end{equation}
For the empty partition $\lambda=(0)$  we have $P^{(1/\kappa)}_{(0)}({\bf y})=1$ and the above integral reduces to the Selberg integral (\ref{MKIb}).

Recalling the definition of the Jacobi ensemble average (\ref{av})
and taking the ratio of (\ref{MKI}) to (\ref{MKIb}), we obtain the average of the
$P^{(1/\kappa)}({\bf y})$ polynomial
and from it, the average of $J^{(1/\kappa)}({\bf y})$. The calculation is detailed in the Appendix E
and the final result is simple and explicit for arbitrary partition $\lambda$
\bea \label{avJ}
\fl && ~~~~~~~~~  \left\langle J^{1/\kappa}_{\lambda}({{\bf y}})\right\rangle_J= \kappa^{-|\lambda|}
\prod_{i=1}^{\ell(\lambda)} \frac{(a+1+ \kappa(n-i))_{\lambda_i} }{
  (a+b+2+\kappa (2 n - i -1))_{\lambda_i} }  (\kappa(n-i+1))_{\lambda_i}
 \eea

Putting together Eqs. (\ref{res2}-\ref{res1}) and (\ref{avJ}) we obtain the
$k$-th moment as
\bea \label{moments2}
&& \left\langle \frac{1}{n} \sum_{r=1}^n y_r^k \right\rangle_J = \sum_{\lambda, |\lambda|=k}
\frac{ k \alpha }{  c(\lambda, \alpha,1) c(\lambda,\alpha,\alpha)} \prod_{s - \{1,1\}} (\alpha a'_\lambda(s) - l'_\lambda(s))  \\
&& \times \frac{1}{n \kappa^{k}}
\prod_{i=1}^{\ell(\lambda)} \frac{(a+1+ \kappa(n-i))_{\lambda_i} }{
  (a+b+2+\kappa (2 n - i -1))_{\lambda_i} }  (\kappa(n-i+1))_{\lambda_i}
\eea
where $\alpha$ should be replaced by $1/\kappa$ according to (\ref{alpha}).
Using the explicit expressions for the normalization constants
(\ref{c1}) and (\ref{c2}) in Appendix D 
and the above
definitions of the co-arm and co-leg,  one obtains the formula
(\ref{momentsJ1}), (\ref{momentsJ2}), (\ref{momentsJ2pos}), for the positive moments given
in Section \ref{sec:mom}. \footnote{Note that a textbook treatment of the material discussed in this section is given in \cite{Forbook}. The formula  for the positive moments does not appear there explicitly, but can be recovered by integrating (12.145) using (12.143).}.

We have also used:
\bea
\fl && \prod_{s - \{1,1\}} (\alpha a'_\lambda(s) - l'_\lambda(s)) =
\alpha^{k-1} \prod_{i=1}^{\ell(\lambda)} \prod_{j=1}^{~~ \lambda_i ~~ \prime}
(j-1-\kappa (i-1)) = \alpha^{k-1} (\lambda_1-1)!
\prod_{i=2}^{\ell(\lambda)} (-\kappa (i-1))_{\lambda_i} \nonumber \\
\fl &&
\eea
where the prime indicates that $i=j=1$ is excluded from the product.

Equivalently, we can rewrite the expression (\ref{moments2}) for the moments in a "geometric" form
which involves only products over boxes in the Young diagrams, see (\ref{y1})-(\ref{y2}) below.

 {\it Negative moments:} Negative moments can be obtained by
applying (\ref{res2}) to the inverse variable, here before averaging (with $k \geq 0$)
\bea
&& \sum_{r=1}^n y_r^{-k}  =  \sum_{\lambda, |\lambda|=k} \gamma^\lambda_{(k)}(\alpha)
J^{(\alpha)}_{\lambda}(\frac{1}{\bf y})   \label{res2neg}
\eea
where we denote $(\frac{1}{\bf y}) \equiv (\frac{1}{y_1},..\frac{1}{y_n})$. We now use the following relation
between Jack polynomials, for $n \geq \ell(\lambda)$
\bea \label{invert}
P_\lambda^{(\alpha)}(\frac{1}{\bf y}) = y_1^{-l} .. y_n^{-l} P_{(l^n-\lambda)^+}^{(\alpha)}({\bf y})
\eea
see \cite{Forbook} p. 643,
where $l$ is (a priori) any integer $l \geq \lambda_1$ and one denotes
\bea
(l^n-\lambda)^+ = \{ l,..., l - \lambda_1,.. , l- \lambda_{\ell(\lambda}) \}
\eea
the partition of length $n$. Using (\ref{relJP}) in (\ref{res2neg}),
inserting (\ref{invert}), using again (\ref{relJP}), one can now
average over the Jacobi measure as follows
\bea
\fl &&  \left\langle \sum_{r=1}^n y_r^{-k}  \right\rangle_J =  \sum_{\lambda, |\lambda|=k} \gamma^\lambda_{(k)}(\alpha) \frac{c(\lambda,\alpha,1)}{c((l^n-\lambda)^+,\alpha,1)}
\left\langle J_{(l^n-\lambda)^+}^{(\alpha)}({\bf y}) \right\rangle_{J, a \to a-l}
\frac{Sl_n(\kappa,a-l,b)}{Sl_n(\kappa,a,b)}
 \label{res2neg2}
\eea
where the average on the right is over a shifted Jacobi measure with
parameter $a-l,b,\kappa$ to account for the prefactor in (\ref{invert}).
For the same reason the ratio of Selberg integrals appear, since
it is the normalization of the Jacobi measure. We can now use the explicit
expressions (\ref{MKIb}) and (\ref{avJ}). One finds, after a
tedious calculation, similar in spirit to the one described above
for the positive moments, the formula
(\ref{momentsJ1}), (\ref{momentsJ2}) for the negative moments given
in Section \ref{sec:mom} with
\bea
\fl && a^-_\lambda = \prod_{i=1}^{\ell(\lambda)}  \frac{(a-l+1+ \kappa(i-1))_{l-\lambda_i} }{
  (a-l+b+2+\kappa (n + i -2))_{l- \lambda_i} }
   \frac{(a-l+b+2+\kappa (n + i -2))_{l} }{(a-l+1+ \kappa(i-1))_{l} }
  \eea
where a priori $l$ is an integer sufficiently large. In practice we found that for generic values of
$\kappa$ the final
result is independent of the choice of $l$ (for each partition), hence we chose
$l=0$ which leads to the simplest expression (\ref{momentsJ2neg})
given in in Section \ref{sec:mom}. \\

 We now discuss an interesting alternative and useful representation for the (positive and negative) integer moments in terms of contour integrals.

\subsection{Borodin-Gorin contour integral representation of the moments}

\subsubsection{Positive moments.}

Recently Borodin and Gorin proved integral representations for the
moments of the Jacobi $\beta$-ensemble. We refer to the Appendix A
for derivation and present here a summary of results in our notation.
They call these moments $\frac{1}{N} M_k(\theta,N,M,\alpha)$
and the correspondence from their four parameters to ours is:
\bea
\theta \to - \beta^2 \quad N \to n \quad \theta \alpha -1 \to a \quad , \quad  \theta (M - N +1) -1 \to b
\eea
which leads to the correspondence
\bea \label{corr1}
&& <y^k>_{\beta,a,b,n}
= \frac{1}{n} M_k(- \beta^2,n,n - 1 - \frac{1+b}{\beta^2}  , - \frac{1+a}{\beta^2}  )
\eea
Translated in our parameters
their moment formula for positive integer $n$ reads:
\bea  \label{contour1}
\fl && < y^k>_{\beta,a,b,n}
=  \frac{1}{n \beta^2}
\int \prod_{i=1}^k \frac{du_i}{2 i \pi}
 \prod_{1 \leq i < j \leq k} \frac{u_j-u_i}{(u_j-u_i + \beta^2)(u_j-u_i +1)} \\
\fl  && \times
\prod_{1 \leq i +1 < j \leq k} (u_j-u_i + 1 + \beta^2)
\prod_{i=1}^k \frac{u_i+\beta^2}{u_i + \beta^2 (1-n)} \times
\frac{u_i - 1- a }{u_i - 2 - a - b - \beta^2(1-n)}  \label{contour}
\eea
which must be supplemented with the conditions,
let us call them $C_1$:  $|u_1| \ll |u_2| \ll |u_3| ..\ll |u_k|$ and $C_2$: all the contours enclose the singularities at
$u_i = - (1-n) b^2$ and not at $u_i = 2 + a + b + \beta^2 (1-n)$.
These conditions imply that one can first perform the integral
on $u_1$ and only around the pole $u_1 = - (1-n) \beta^2$
and then iteratively on $u_2,u_3,..$ and so on.

We will investigate below the properties of this representation in
the context of the models we study here.

\subsubsection{Negative moments.}

Remarkably, Borodin and Gorin also proved a contour integral formula for negative moments, whenever they exist.
In our notations, for $k \geq 1$, it reads
\bea
\fl && < y^{-k}>_{\beta,a,b,n}
= \frac{-1}{n \beta^2}
\int \prod_{i=1}^k \frac{du_i}{2 i \pi}
 \prod_{1 \leq i < j \leq k} \frac{u_j-u_i}{(u_j-u_i - \beta^2)(u_j-u_i -1)} \nonumber  \\
\fl  && \times
\prod_{1 \leq i +1 < j \leq k} (u_j-u_i - 1 - \beta^2)
\prod_{i=1}^k \frac{u_i-n \beta^2}{u_i } \times
\frac{u_i -1 - a - b - \beta^2(1-n)}{u_i-a} \nonumber   \\
\fl && \label{contourneg}
\eea
which must be supplemented with the conditions,
(i) $C_1$:  $|u_1| \ll |u_2| \ll |u_3| ..\ll |u_k|$ and (ii) $C_2$:
all the contours enclose the singularities at $0$ and not at $a$.

\subsection{duality}

Let us now discuss an important property of these
moment, the duality.

\subsubsection{Statement of the duality on the moments.}

For clarity let us first recall the property of duality-invariance
unveiled in \cite{FLDR09}. Consider first a thermodynamic
quantity $O_\beta$ obtained in the replica formalism in the limit
$n=0$. This quantity is said to be duality-invariant if it is well defined in the high temperature
region $\beta<1$ and its temperature dependence is given by a
function $f(\beta)$ which is known analytically, and satifies $f(\beta'=1/\beta)=f(\beta)$.
The simplest example of such quantities is the mean free energy for Gaussian
log-correlated models for which $f(x)=x+ 1/x$. More complicated examples are
presented in \cite{FLDR09}, see e.g. (13)-(14) there. Note that it does not imply anything on
the behaviour of $O_\beta$ for $\beta>1$, hence it is strictly a
property of the high temperature phase. However the property
of duality invariance is not restricted to the replica limit $n=0$.
In \cite{FLDR09} by considering the generating functions for
the moments of the partition function, one notices that
duality invariance can be extended to finite $n$
by further requiring $n' \beta'=n \beta$ (see
(25) there where $s$ should be identified as $- n \beta$).

In the present model, there are more parameters
and we can formulate the duality-invariance property for
the moments as follows:\\

{\it Duality-invariance property:}
\bea \label{duality}
\fl && <y^k>_{\beta,a,b,n} = <y^k>_{\beta',a',b',n'}  \quad \beta'=1/\beta \quad n' = \beta^2 n \quad a'=a/\beta^2
\quad b'= b/\beta^2
\eea
which, again, should be understood in the sense of analytical continuation (i.e.
it does not provide a mapping from the high to low temperature region as discussed above).

\subsubsection{Checking and proving duality for moments.}

The duality property (\ref{duality}) can be checked on the explicit formula (\ref{momentsJ1})-
(\ref{momentsJ2pos}), derived in this paper.  Here we focus on positive moments,
but similar considerations hold for negative ones, when they exist.
To see it explicitly it is convenient to rewrite
that formula in a "geometric" form involving only products over boxes in the
Young diagrams, as
\bea \label{y1}
\fl && ~~~~~~
<y^k>_{\beta,a,b,n} := \left. \left\langle \frac{1}{n} \sum_{r=1}^n y_r^k \right\rangle_{J} ~
\right\vert_{\kappa=-\beta^2} = \left.  \sum_{\lambda, |\lambda|=k}
~~ \lim_{\epsilon \to 0} \frac{k}{n \epsilon} \prod_{s=(i,j) \in \lambda}  B^\epsilon_\lambda(s) \right\vert_{\kappa=-\beta^2}
\eea
where
\bea \label{y2}
\fl  && B^\epsilon_\lambda(s) =
\frac{(j-1 - (i-1) \kappa + \epsilon) (a+ \kappa(n-i)+j) (\kappa(n-i+1)+j-1) }{
(a_{\lambda}(s)+l_{\lambda}(s) \kappa+1)
(a_{\lambda}(s)+l_{\lambda}(s) \kappa+\kappa) (a+b+1+\kappa (2 n - i -1)+j) }  \nonumber  \\
\fl &&
\eea
where $\epsilon$ has been introduced only to remove box (1,1) from one
of the products and the limit $\epsilon \to 0$ is trivial. For application to
the present purpose we need to set $\kappa=-\beta^2$.

The duality is easy to check on that formula and corresponds to the exchange
of the partition $\lambda$ with its dual $\lambda'$. Indeed it is easy to check
that
\bea
\left. \left. B^\epsilon_\lambda(s) \right\vert_{\kappa=-\beta^2} \right\vert_{ \beta, n,a,b} =
\left. \left. B^\epsilon_{\lambda'}(s') \right\vert_{\kappa=-\beta'^2}
\right\vert_{ \beta', n',a',b'}
\eea
where $s'=(j,i)$ is the box in the dual diagram conjugate to $s=(i,j)$,
which implies that arm lengths and leg lengths are also exchanged
under duality. A similar observation over duality-invariant sum over partitions
was reported very recently in \cite{CRS}
for a related problem, about the value at the minimum
of a log-correlated field.

Another proof of the duality invariance property for the moments was provided in the recent work Borodin and Gorin (BG). They proved that these moments are rational functions of their four arguments,
and that the corresponding analytical continuation in these arguments
satisfies an invariance
property, which is equivalent to the
above duality-invariance (\ref{duality}) under the correspondence
(\ref{corr1}).

\subsubsection{Consequence of the duality-invariance: freezing.}

Let us now examine the implications of the relation (\ref{duality}) for the three examples
studied in this paper, showing that a freezing transition at $\beta=1$ is expected in
all cases.

For the GUE problem $a=b=\frac{q+\beta^2}{2}$ and one finds that the duality
invariance in terms of the parameter $q$ can be written as:
\bea
q' = 1 + \frac{q-1}{\beta^2}
\eea
The choice $q=1$ thus ensures duality-invariance of the moments for arbitrary $\beta<1$
and again implies freezing at $\beta=1$.

More generally, starting from Jacobi ensemble measure (\ref{jpdf})
one can ask how to choose $a$ and $b$ so that the model
exhibits the duality-invariance. Consider two (otherwise arbitrary) duality-invariant functions of $\beta$, $\bar a(\beta)$ and $\bar b(\beta)$,
i.e. satisfying $\bar a(1/\beta)=\bar a(\beta)$
and $\bar b(1/\beta)=\bar b(\beta)$, and choose:
\bea \label{aga}
a = \beta \bar a(\beta) \quad , \quad b = \beta \bar b(\beta)
\eea
Then the moments are self-dual. Our second example Eq. (\ref{C}) and (\ref{V0}) of
a log-correlated potential in presence of a background potential, corresponds to
the case of temperature independent constants $\bar a$ and $\bar b$
and its moments are thus duality-invariant.

Finally, for the fBm0 the parameter $a=2 n \beta^2$ and $b=0$. One checks
from (\ref{duality}) that the fBm0 satisfies duality invariance for arbitrary $n$.
In the replica limit $n=0$ we must set $a=b=0$, which is
a self-dual point, hence for this model the moments obey the following
duality-invariance for $\beta<1$
\bea
 <y^k>_{\beta} = <y^k>_{1/\beta}
\eea
so that according to the FDC, one should expect them to exhibit freezing at $\beta=1$.

\subsection{ $n=0$ limit of the moments formula}

The moment formula obtained above, as well as their contour integral representation
are explicit enough to allow for analytic continuation to $n=0$.

\subsubsection{Replica limit of sums over partitions.}

\noindent

\medskip

Consider the formula ({\ref{momentsJ1})-(\ref{momentsJ2}) inserting
$\kappa=-\beta^2$. The limit $n \to 0$ is straightforward, except for one of the factors in the second
line for which we use:
\bea
(- \beta^2 n)_{\lambda_1} = - \beta^2 n (1- \beta^2 n)_{\lambda_1-1} \simeq_{n \to 0}
- \beta^2 n (\lambda_1-1)!
\eea
This leads to the following formula for the disorder averages of the moments
(\ref{momentsab})  of
the general scaled disordered statistical mechanics model
(\ref{modelab})
\be \label{momn0}
\overline{<y^k>_{\beta,a,b}} = \sum_{\lambda, |\lambda|=r} C^{1}_\lambda  C^{2}_\lambda
\ee
with
\bea
\fl &&  C^{1}_\lambda = - \beta^2 k [(\lambda_1-1)! ]^2
\prod_{i=2}^{\ell(\lambda)} [ (\beta^2(i-1))_{\lambda_i} ]^2
\prod_{i=1}^{\ell(\lambda)} \frac{(a+1+ \beta^2 i)_{\lambda_i} }{
  (a+b+2+\beta^2 (i+1))_{\lambda_i} } \\
  \fl && ~~~~~~~~~~~~ \times
\prod_{1 \leq i < j \leq \ell(\lambda)} \frac{\lambda_i-\lambda_j + \beta^2 (i-j)}{\beta^2 (i-j)}
\nonumber  \\
\fl &&  C^2_\lambda
= \prod_{i=1}^{\ell(\lambda)}
 \frac{1}{(1+ \beta^2 (i-\ell(\lambda)))_{\lambda_i} (\beta^2(i-\ell(\lambda)-1))_{\lambda_i} }
   \prod_{1 \leq i < j \leq \ell(\lambda)}
   \frac{(\beta^2(i-j-1))_{\lambda_i-\lambda_j} }{  (1+ \beta^2(i+1-j))_{\lambda_i-\lambda_j}}
    \label{momn1}
\eea
This formula is valid in the higher temperature phase of the model, $\beta<1$, where
all the factors are clearly finite and non-zero. We will study explicitly below a few
moments and their temperature dependence. Note that in the limit $\beta \to 0$
one recovers the moments (\ref{zerokappa}) (i.e. with the weight $P^0(y)$).

The moments of the position of the scaled minimum of the potential
as discussed above are recovered in the zero temperature limit $\beta=+\infty$
of the statistical mechanics model. According to the FDC (see section)
these moments are equal to their value at the freezing transition $\beta=1$.
Hence they can be obtained by taking as limits
\bea \label{limit}
\overline{(y_m)^k} = \lim_{\beta \to 1^-} \overline{<y^k>_{\beta,a,b}}
\eea
of the above expression (\ref{momn0})-(\ref{momn1}).
However in this expression one easily sees that the factor $C^\lambda_2$
has poles for $\beta=1$, while $C^\lambda_1$ is regular and has
a finite limit. Examination
of low moments, detailed below, show massive cancellations of these
poles leading to a well defined finite limit. As shown below, using
the contour integral representation, this limit is indeed finite for any
moment. Note that the poles in the limit $\beta \to 1$ are also present for
$n>0$, so consideration of finite $n$ does not help to handle these
cancellations.

The formula (\ref{limit}) together with formula (\ref{momn0})-(\ref{momn1})
thus gives arbitrary positive integer moments of the position of the
global minimum of the log-correlated process
and as such is a main result of our paper. They can be used
to generate these moments to a very high degree on the computer.

\subsubsection{Contour integral representation of moments for $n=0$.}

\noindent

\medskip

(i) {\it positive moments.}
To take the limit $n=0$ in the contour integral
formula (\ref{contour1}) we rewrite
\bea
\fl \frac{1}{n \beta^2} \prod_{i=1}^k \frac{u_i+\beta^2}{u_i + \beta^2 (1-n)}  = \frac{1}{n \beta^2} \prod_{i=1}^k
(1 +  \frac{n \beta^2}{u_i + \beta^2 (1-n)} )
= \frac{1}{n \beta^2} +  \sum_{i=1}^k \frac{1}{u_i + \beta^2} + O(n) \nonumber
\eea
Inserting the first term in the contour integral gives zero. Next, it is easy to see that only the
pole in $u_1$ gives non zero residue. Hence we can insert only this term in the
integral (\ref{contour1}), where we can now safely take $n=0$ leading to
the following representation for the disorder average:
\bea \label{contourbeta}
\fl && \overline{<y^k>_{\beta,a,b}} =
\int \prod_{i=1}^k \frac{du_i}{2 i \pi}
 \prod_{1 \leq i < j \leq k} \frac{u_j-u_i}{(u_j-u_i + \beta^2)(u_j-u_i +1)} \\
\fl  && \times
\prod_{1 \leq i +1 < j \leq k} (u_j-u_i + 1 + \beta^2)  \frac{1}{u_1} \times \prod_{i=1}^k
\frac{u_i - 1- a - \beta^2 }{u_i - 2 - a - b - 2 \beta^2}
\eea
where we have shifted $u_i \to u_i - \beta^2$ for convenience. This again must be supplemented with the condition (i) $C_1:$
$|u_1| \ll |u_2| \ll |u_3| ..\ll |u_p|$ and (ii) $C_2$: all the contours enclose the singularities at
$u_1 = 0$ but not at $u_i = 2 +a+b+2 \beta^2$. In practice the contours will
run (and close) in the negative half plane $Re(u_i)<0$ and pick up residues from poles on
the negative real line. We note that one needs the condition
$ 2 +a+b+2 \beta^2 >0$ which, for $a=b$ is precisely the one found in
\cite{FLDR09} corresponding to a binding transition to the edge
(for $a< -1 - \beta^2$). We will thus assume that the condition is
fulfilled, which is the case for all three examples considered
here.

The limit $\beta \to 1$ can be performed explicitly leading
to a contour integral representation for the positive integer moments of
the position of the global extremum of the log-correlated field
\bea \label{contour0}
\fl && \overline{ y_m^k } =
\int \prod_{i=1}^k \frac{du_i}{2 i \pi}
 \prod_{1 \leq i < j \leq k} \frac{u_j-u_i}{(u_j-u_i +1)^2}
\prod_{1 \leq i +1 < j \leq k} (u_j-u_i + 2)  \frac{1}{u_1}  \prod_{i=1}^k
\frac{u_i - 2- a  }{u_i - 4 - a - b } \nonumber \\
\fl &&
\eea
with the same contour conditions $C_1$ and $C_2$.
This formula should thus be equivalent to our main result
(\ref{limit})-(\ref{momn0})-(\ref{momn1}), which we
have checked for a few low order moments.

(i) {\it negative moments.} The same manipulation as
above in the limit $n \to 0$ gives the disorder averaged
moments for $k \geq 1$:
\bea
\fl && \overline{< y^{-k}>_{\beta,a,b}}
=
\int \prod_{i=1}^k \frac{du_i}{2 i \pi}
 \prod_{1 \leq i < j \leq k} \frac{u_j-u_i}{(u_j-u_i - \beta^2)(u_j-u_i -1)} \nonumber  \\
\fl  && \times
\prod_{1 \leq i +1 < j \leq k} (u_j-u_i - 1 - \beta^2) \frac{1}{u_1}
\times \prod_{i=1}^k \frac{u_i -1 - a - b - \beta^2}{u_i-a} \nonumber   \\
\fl && \label{contourneg1}
\eea
provided these moment exist.
Taking again the limit $\beta \to 1$ one obtains the negative integer moments of
the position of the global extremum of the log-correlated field as
\bea
\fl && \overline{y_m^{-k}}
=
\int \prod_{i=1}^k \frac{du_i}{2 i \pi}
 \prod_{1 \leq i < j \leq k} \frac{u_j-u_i}{(u_j-u_i -1)^2} \prod_{1 \leq i +1 < j \leq k} (u_j-u_i - 2) \frac{1}{u_1}
\times \prod_{i=1}^k \frac{u_i -2 - a - b }{u_i-a} \nonumber   \\
\fl && \label{contourneg2}
\eea
for $k$ positive integer, provided they exist. In both integrals the
contours obey the two conditions (i) $C_1$:  $|u_1| \ll |u_2| \ll |u_3| ..\ll |u_k|$ and (ii) $C_2$:
all the contours enclose the singularities at $0$ and not at $a$.
At present there is no equivalent formula in terms of sums over partitions,
hence the above formula is an important result of the paper.

We now turn to explicit study of the low moments

\subsection{Calculation and results for the first moment.}

\noindent

\medskip

Let us illustrate the calculation using the contour integral (\ref{contour1}) on the simplest example of the
first moment $k=1$
\bea
\fl ~~  < y>_{\beta,a,b,n}  &=& \frac{1}{n \beta^2}
\int  \frac{du_1}{2 i \pi}  \frac{u_1+\beta^2}{u_1 + \beta^2 (1-n)}
\frac{u_1 - 1- a }{u_1 - 2 - a - b - \beta^2(1-n)} \nonumber  \\
\fl ~~ & = & \frac{1+ a- \beta ^2 (n-1)}{2+ a+b-2 \beta ^2 (n-1)} \label{res11}
\eea
which is equal to the residue at $u_1=- (1-n) \beta^2$. In terms of
partitions only the partition $\lambda=(1)$ contributes, so it is
easy to see on (\ref{momentsJ1})-(\ref{momentsJ2}), and even
more immediate on (\ref{y1})-(\ref{y2}) [using $i=j=1$, $a_\lambda=\ell_\lambda=0$],
that it reproduces (\ref{res1}).
One can explicitly verify on this result that the first moment is invariant by the duality transformation
(\ref{duality}).

The $n=0$ limit yields the disorder averaged first moment
\bea
 \overline{<y>_{\beta,a,b}} = \frac{1+ a+ \beta ^2}{2+ a+b+2 \beta ^2}
\eea
For the symmetric situation $a=b$, which is the case both for the
 fBm $a=b=0$ and for the GUE characteristic polynomial $a=b=\frac{1 + \beta^2}{2}$
 the first moment is thus simply
\bea
 \overline{<y>_{\beta}}=\frac{1}{2}
\eea
In the second example of the LCGP with edge charges one obtains:
 \bea
 \overline{<y>_{\beta}}= \frac{1+ \bar a \beta + \beta ^2}{2+ (\bar a+\bar b) \beta +2 \beta ^2}
\eea
The duality-freezing conjecture then leads to the the first moment of the
position of the minimum
 \bea \label{mom1aa}
 \overline{y_m} = \frac{2+ \bar a}{4+ \bar a+\bar b} \quad , \quad \overline{y_m} - \frac{1}{2} =
\frac{\bar a - \bar b}{2 (\bar a + \bar b + 4)}
\eea
which is Eq. (\ref{p3-1}) in {\bf Prediction 2}.

These results can be compared to the first moment in absence of the random potential and
at finite inverse temperature $\beta$, i.e. from the measure $P^0(y)|_{a=\beta \bar a, b=\beta \bar b}$
\bea
<y >_{P^0} - \frac{1}{2} = \frac{\beta( \bar a- \bar b)}{2 (2+\beta( \bar a+\bar b) )}
\eea
which reproduces the absolute minimum $y_m^0=\bar a/(\bar a+\bar b)$ in the absence of disorder for
$\beta=+\infty$. Comparing with (\ref{mom1aa}) shows that
even at the freezing temperature $\beta=1$, disorder brings the
average position closer to the midpoint $y=\frac{1}{2}$.

\subsection{Results for second, third and fourth moments}

\noindent

\medskip

The calculation of the second moment by the contour integral method
is relatively simple, and sketched in the Appendix B section \ref{app:second} for $n=0$. It leads to
the disorder average:
\bea \label{y22}
\fl \overline{<y^2>_{\beta,a,b}} = \frac{\left(a+\beta ^2+1\right) \left(\beta ^2 (4 a+2
   b+9)+(a+2) (a+b+2)+4 \beta ^4\right)}{\left(a+2
   \beta ^2+b+2\right) \left(a+2 \beta ^2+b+3\right)
   \left(a+3 \beta ^2+b+2\right)}
   \eea
This expression is also easily recovered from the sum (\ref{momentsJ1})-(\ref{momentsJ2}), or
 (\ref{y1})-(\ref{y2}) involving now the two partitions $(2)$ and $(1,1)$.  For $\beta=0$ it
 reproduces (\ref{zerokappa}), i.e. the trivial average with
 respect to the weight $P^0(y) \sim y^a(1-y)^b$. The expression (\ref{y22})
 is duality-invariant and we expect that it freezes at $\beta=1$ leading to the predictions for
the second moment of the position of the extremum.

Let us now detail the results for each example separately, including moments
up to $k=4$ when space permits, more detailed derivations and
results being displayed in the Appendix B.

\subsubsection{Log-correlated potential with edge charges $\bar a,\bar b$.}

The second moment of the position of the global minimum
is obtained from above as \footnote{One sees on this expression that binding to the edge $y_m \to 0^+$ (resp. $y_m \to 1^-$)
occurs when $a \to -2^+$ (resp. $\bar b \to -2$) so one may surmise that the result is
valid as long as $-2<\bar a,\bar b$.}
\bea
&& \overline{y_m^2} = \frac{(\bar a+2) (\bar a (\bar a+\bar b+8)+4 \bar b+17)}{(\bar a+\bar b+4) (\bar a+\bar b+5)^2}
\eea
This leads to the variance (\ref{p3-2}) in {\bf Prediction 2}, replacing there $x \to y$.
Expressions for higher moments for general $\bar a,\bar b$ are too bulky to present here, and we
only display the third moment in (\ref{3cum}) and the skewness
in (\ref{Skew1}).

Two special cases are of interest:

(i) {\it only one edge charge, at $y=0$:} One sets $\bar b=0$. Let us give here the skewness
in that case
\bea
&& Sk := \frac{\overline{(y_m - \overline{y_m})^3}}{\overline{(y_m - \overline{y_m})^2}^{~\frac{3}{2}}}
= -\frac{\bar a (\bar a+5) (\bar a (7 \bar a+68)+164)}{\sqrt{2} \sqrt{\bar a+2} (\bar a+6)^2 (2 \bar a+9)^{3/2}}
\eea
which is negative. It can be compared with the skewness associated to
the measure $\sim y^{\bar a}$, which is
\bea
&& Sk_0 = -\frac{2 \bar a \sqrt{\bar a+3}}{\sqrt{\bar a+1} (\bar a+4)}
\eea
One finds that $Sk$ decreases from $0$ to $-7/4$ as $\bar a$ increases,
while $Sk_0$ decreases from $0$ to $-2$, hence they are quite distinct.

(ii) {\it symmetric case $\bar a=\bar b$.} The variance is obtained as:
\bea
&& \overline{y_m^2} - \overline{y_m}^2 = \frac{4 \bar a+9}{4 (2 \bar a+5)^2}
\eea
where we recall, $\overline{y_m}=\frac{1}{2}$, and
we checked explicity that the moment formulae lead to
\bea \label{centered}
\overline{ \bigg( y_m - \frac{1}{2} \bigg)^3 }  = 0
\eea
as expected by symmetry $y \to 1-y$. The fourth moment
and kurtosis are obtained as
\bea
&& \overline{ y_m^4 }  = \frac{4 a^5+84 a^4+663 a^3+2488 a^2+4478 a+3110}{4 (a+3) (2
   a+5)^2 (2 a+7)^2} \label{fourth} \\
&& {\rm Ku} = -\frac{2 \left(8 a^5+248 a^4+2054 a^3+7328 a^2+12053 a+7523\right)}{(a+3) (2 a+7)^2 (4
   a+9)^2}
\eea
The kurtosis can be compared to the one of the measure $y^{\bar a} (1-y)^{\bar a}$ which is:
\bea
&& \kappa_0 = - \frac{6}{5+2 a}
\eea
Note that when $\bar a$ increases, ${\rm Ku} \to -1/4$ while ${\rm Ku}_0 \to 0$.

\subsubsection{GUE characteristic polynomial.}
For the GUE-CP we must insert $a=b=\frac{1+\beta^2}{2}$,
in (\ref{y2}). The second
moment for the associated statistical mechanics model in the high
temperature phase $\beta <1$, and for the position of the global
minimum (obtained by setting $\beta=1$) are then found to be:
\bea
&& \overline{<y^2>_\beta}  = \frac{15 \beta ^4+32 \beta ^2+15}{4 \left(3 \beta
   ^2+4\right) \left(4 \beta ^2+3\right)} \quad , \quad \overline{y_m^2}  = \frac{31}{98}
\eea
For completeness the expression at finite $n$ is given in (\ref{GUEn}).
In the original variable $x \in [-1,1]$, i.e. the support of the semi-circle,
using $x=1-2 y$ and the result for the first moment we obtain:
\bea
&& \overline{<x^2>_\beta} = \frac{3 + 7 \beta^2 + 3 \beta^4}{(4 + 3 \beta^2) (3 + 4 \beta^2)}
\quad , \quad \overline{x_m^2}  =  \frac{13}{49} = 0.265306.. \label{var}
\eea
where we expect now the PDF of the position of the maximum, ${\cal P}(x)$,
to be centered and symmetric
around $x=0$ (which was checked explicitly up to fifth moment).

We give directly the fourth moment of the position of the maximum
(see \ref{app:4} for finite temperature expressions)
\bea
&& \overline{y_m^4} = \frac{401}{2352} \quad , \quad  \overline{x_m^4} =  \frac{20}{147}  = 0.136054..
\eea

which leads to the fourth cumulant and kurtosis as:

\be
\overline{x_m^4}^c =  - \frac{541}{7203} \quad , \quad
{\rm Ku} = - \frac{541}{507} = -1.06706..
\ee

These moments can be compared with the ones of the semi-circle density
\bea
<x^k>_\rho := \int_{{ -1}}^1 dx x^k \rho(x) \quad , \quad \rho(x) = \frac{2}{\pi} \sqrt{1-x^2}
\eea
which are:
\bea
&& <x^2>_\rho = \frac{1}{4} \quad, \quad  <x^4>_\rho = \frac{1}{8} = 0.125 \quad , \quad  {\rm Ku}  = -1
\eea
and are found quite close, suggesting that ${\cal P}(x)$ is distinct from, but numerically
close, to the semi-circle density.

\subsubsection{Fractional Brownian motion.} For the fBm0 we should set $a=b=0$
in (\ref{y22}) leading to the following disorder average of the associated
statistical mechanics model in the high temperature phase:
\bea
\fl &&  \overline{<y^2>_\beta} = \frac{4 \beta ^4+9 \beta ^2+4}{2 \left(2 \beta
   ^2+3\right) \left(3 \beta ^2+2\right)} \quad , \quad
 \overline{<y^2>_\beta} - \overline{<y>_\beta}^2 =
\frac{\left(\beta ^2+2\right) \left(2 \beta ^2+1\right)}{4 \left(2 \beta ^2+3\right)
   \left(3 \beta ^2+2\right)}
\eea
which is manifestly duality-invariant (see (\ref{fBmn}) for the $n$-dependence).  The
second moment of the position of the global minimum of the fBm0
is thus predicted to be
\bea
&&  \overline{y_m^2} = \frac{17}{50} \quad , \quad
\overline{y_m^2} - \overline{y_m}^2 = \frac{9}{100}
\eea
as displayed in {\bf Prediction 3}. This is
distinct, but numerically not very different, from what is
obtained from a uniform distribution $P^0(y)=1$ on $[0,1]$, namely
$<y^2>_{P^0}=\frac{1}{3}$ and $<(y-\frac{1}{2})^2>_{P^0}=\frac{1}{12}$.

All odd moments centered around $y=\frac{1}{2}$ are predicted
to vanish, as in (\ref{centered}). Although we explicitly checked
up to fifth it should be a general property. The asymmetry
induced by fixing one point of the fBm0 at $x=0$ and letting the
one at $x=1$ free, does not manifest itself in the moments
of the minimum (or at any temperature in the statistical mechanics
model). It does arise however to first order in $n$ (as seen e.g.
from (\ref{res1})) and is detectable in the joint distribution of values and
positions of the minimum (see below).

From the formula (\ref{fourth}) specialized to $a=0$ we obtain
respectively the fourth moment, the fourth cumulant
and the kurtosis $\kappa$ for the position of the
minimum of the fBm0:
\bea
\fl && ~~~~~ \overline{y_m^4}  = \frac{311}{1470} 
\quad , \quad  \overline{y_m^4}^c  := \overline{ \bigg( y_m - \frac{1}{2} \bigg)^4 } - 3 \overline{ \bigg( y_m - \frac{1}{2} \bigg)^2 }^2
= \frac{- 7523}{735000} 
\\
\fl && ~~~~~~~~~~~~~~~~~~~~~~~~~~~~~\kappa : =\frac{\overline{y_m^4}^c}{\overline{y_m^2}^2} = \frac{- 15046}{11907} = -1.26363..
\eea
These three numbers are respectively $\frac{1}{5}$, $- \frac{1}{120}$ and $-1.2$ for
a uniform distribution of $[0.1]$, hence a difference of a few percent.\\

\subsection{Results for negative moments}

The negative moments
lead interesting additional information on the three problems
under study.
Let us present the (short) calculation of the first negative moment,
and also give the expression for the second.
Equation (\ref{contourneg}) for $k=1$ yields
\bea
\fl  < y^{-1}>_{\beta,a,b,n}  &= & \frac{-1}{n \beta^2}
\int \frac{du_1}{2 i \pi}
 \frac{u_1-n \beta^2}{u_1 } \times
\frac{u_1 -1 - a - b - \beta^2(1-n)}{u_1-a} \\
\fl &= &\frac{1+a+b+ \beta^2(1-n)}{a}
\eea
The continuation to $n=0$ of this formula, and of the one for the second negative moment
(calculated in Appendix B, section \ref{app:2neg} )
leads to the following disorder averages
in the statistical mechanics model for $\beta^2 < 1$
\bea \label{inverse12}
\fl && \overline{<y^{-1}>_\beta} = \frac{1+a+b+ \beta^2}{a} \quad , \quad
\overline{< y^{-2}>_\beta}  =
\frac{\left(a+\beta ^2+b+1\right) \left(a (a+b)+\beta^2\right)}{(a-1) a \left(a-\beta ^2\right)}
\eea
Obviously the same formula exist hold exchanging $y \to 1-y$ and $a \to b$.
 One can check that these formula coincide with the ones obtained from
the general result (\ref{momentsJ2}-\ref{momentsJ2neg}).

One can check that for $\beta=0$ and $a,b$ fixed, i.e. in the absence of the random potential, these
formula agree with the same averages over the
deterministic measure $P^0(y) \sim y^a (1-y)^b$,
i.e. Eq. (\ref{zerokappa}) setting $k=-1,-2$ there.
The effect of the disorder is thus to {\it increase} the values
of the inverse moments,
presumably from the events when favorable regions
in the random potential appear near the edges.
We see that $a>0$ (repulsive charge at $y=0$) is required for the finiteness
of the first inverse moment, and $a>1$ for the finiteness of the second.
In addition, since $y \in [0,1]$, one must have
$\overline{<y^{-1}>_\beta} \geq 1$. Hence a binding transition at $y=1$ must occur
when the charge becomes too attractive, for  $b \leq b_c = - 1 - \beta^2$.
Symmetric conditions hold under exchanges of $y \to 1-y$ and $(a,b) \to (b,a)$.

These moments give some information about the disorder-averaged
Gibbs measure: if we assume a power law behavior near the edge,
$\overline{p_\beta(y)} \sim_{y \to 0} y^c$, the exponent $c=c(a,b,\beta)$
must be such that $c>0$ whenever $a>0$, and $c>1$ whenever $a>1$.
From the divergence of the inverse moments when $a$ reaches these
values, we can surmise that $c$ also vanishes at $a=0$ and equals $1$ at $a=1$.
The simplest possible scenario then
is that $c=a$, but it remains to be confirmed.

The effect of disorder saturates at $\beta=1$ where we predict freezing
in these negative moments, which, as can be checked on (\ref{inverse12})
using (\ref{duality}) are duality-invariant. As discussed above we
predict that the full PDF $\overline{p_\beta(y)}$ freezes, i.e.
${\cal P}(y_m) = \overline{p_{\beta=1}(y)}$.
Let us now discuss consequences for our three examples.

(i) {\it For the GUE-CP}, one must set $a=b=\frac{1+ \beta^2}{2}$. One sees that
the first inverse moment exist, but not the second, so the exponent $0 <c <1$.
One finds that the first inverse moment is temperature independent:
\bea
&& \overline{<y^{-1}>_\beta} = 4  \quad , \quad 0 \leq \beta \leq 1
\eea
which leads to the following predictions for the position of the maximum
\bea \label{predinv}
&& \overline{y_m^{-1}} = 4 \quad , \quad \overline{(1-x_m)^{-1}} = 2
\eea
Remarkably, this is also exactly the value of the same average with
respect to the semi-circle density
\bea
<\frac{1}{1-x}>_\rho = 2
\eea
This suggests that the two distributions, ${\cal P}(x)$ ands $\rho(x)$, although distinct, are very
similar near the edges.

(ii) {\it LCP with edge charges}, one must set $a=\beta \bar a$, $b= \beta \bar b$.
The prediction for the first two inverse moments of the position of the global minimum
is:
\bea
\overline{y_m^{-1}} = \frac{2+\bar a+\bar b}{\bar a} \quad , \quad  \overline{y_m^{-2}} =
\frac{\left(\bar a+\bar b+2\right) \left(\bar a (\bar a+\bar b)+1\right)}{\bar a (\bar a-1)^2}
\eea
where domain of existence has been discussed above. It would be interesting
to see whether the simplest scenario for the edge behavior, i.e. that ${\cal P}(y_m) \sim y_m^a$ near $y=0$,
and by symmetry ${\cal P}(y_m) \sim y_m^b$ near $y=1$ can be
confirmed (or infirmed) in future studies.

(iii) {\it For the fBm0} one must set $a=b=0$ (for $n=0$), and neither of these moments exist. Hence the
edge exponent of ${\cal P}(y) \sim y^c$ is such that $c \leq 0$ (and probably $c=0$).

\subsection{Correlation between position and value of the minimum}

Until now we used only the values of the moments at $n=0$. However they do exhibit a non-trivial dependence
in the number of replica $n$. One may thus ask what is the information encoded in that dependence.

The detailed analysis is performed in Appendix H.
The answer is that the knowledge of the
$n$-dependence of all moments
allows in principle to reconstruct the joint distribution, ${\cal P}(x_m,V_m)$,
of the position and value of the extremum. This is an ambitious task which is
far from completed. However in Appendix H
we give a general formula
for the {\it conditional moments}, i.e the moments of $x_m$ conditioned
to a particular value of $V_m$.

Here we display the results for the simplest
cross-correlations in the LCGP model with edge charges,
the derivation and more results are given in Appendix H.
We find
\bea
&& \overline{y_m V_m} - \overline{y_m} ~~ \overline{ V_m}  =
\frac{\bar b- \bar a}{(\bar a+\bar b+ 4)^2} \\
&&
\overline{ (y_m - \overline{ y_m} ) (V_m - \overline{V_m})^2 } =
\frac{4 (\bar a-\bar b)}{(\bar a+\bar b+4)^3}
\eea

In the case $\bar a=\bar b$ these two correlations of the first moment
vanish, and so do higher ones: one shows (from the general formula in Appendix H)
 that the first conditional moment, $\mathbb{E}(y_m | V_m)= \frac{1}{2}$
independently of $V_m$. Continuing with the case $\bar a=\bar b$
we further obtain
\bea
&& \overline{(y_m - \frac{1}{2}) V_m}  =  0 \\
&& \overline{(y_m^2 - \overline{y_m^2}) (V_m-\overline{ V_m} )} =
\frac{(\bar a+2) (2 \bar a+1)}{2 (2 \bar a+5)^3} \\
&& \overline{ (y_m^2 - \overline{ y_m^2} ) (V_m - \overline{V_m})^2 } = -\frac{4 \bar a^2+8 \bar a+1}{(2 \bar a+5)^4}
\eea
Setting $\bar a=1$ leads to the prediction for the
GUE-CP as
\bea
&& \overline{x_m (V_m- \overline{V_m})}   =   0  \quad , \quad \overline{x_m (V_m- \overline{V_m})^2}   =   0 \\
&& \overline{(y_m^2 - \overline{y_m^2}) (V_m-\overline{ V_m} )} =
 \frac{9}{686}
\quad , \quad
\overline{(x_m^2 - \overline{x_m^2}) (V_m-\overline{ V_m} )}
 =  \frac{18}{343} \\
&& \overline{(x_m^2 - \overline{x_m^2}) (V_m-\overline{ V_m} )^2}
 =  - \frac{52}{2401}
\eea
and we recall that $x_m=1-2 y_m$ with $\overline{y_m}=\frac{1}{2}$ for the GUE-CP
and in fact, as discussed above the first conditional moment
$\mathbb{E}(x_m| V_m)= 0$ vanishes
for any $V_m$.

The fBm0, as we defined it with the value fixed at $y=0$,
provides an interesting example of a process
with non-trivial correlation. Indeed the above formula must be modified
since $a =2 \beta^2 n$ for the fBm0.  As discussed in
Appendix G, that leads to difficulties in the method for
the determining the PDF of $V_m$. One should thus be careful in
assessing the validity of the following results
for the case of the fBm0. They read
\bea
&& \overline{ (y_m - \overline{ y_m} ) (V_m - \overline{V_m}) }  = - \frac{1}{4}  \\
&& \overline{ (y_m - \overline{ y_m} ) (V_m - \overline{V_m})^2 }  = 0  \\
&& \overline{ (y_m^2 - \overline{ y_m^2} ) (V_m - \overline{V_m}) } = - \frac{21}{100} \\
&& \overline{ (y_m^2 - \overline{ y_m^2} ) (V_m - \overline{V_m})^2 } = \frac{2}{25}
\eea
and we recall $\overline{y_m}=\frac{1}{2}$ for the fBm.
The negative value obtained for the first correlation (first line) is a reflection of the boundary condition
chosen, namely pinning of $B_0^{(\eta)}(y=0)=0$ and free
boundary condition at $y=1$, which allows for lower values
of the minimum near the right edge. The vanishing of the
second line follows from the discussion in Appendix H.

\section{Other ensembles}

\subsection{General considerations}

Once the moments for the Jacobi ensemble are known, one can obtain moments
in a few other ensembles. Define
the generic measure
\be
  {\cal P}_A({\bf y})  d {\bf y} = \frac{1}{{\cal Z}^A_n} \prod_{i=1}^n dy_i \mu_A (y_i)
\prod_{1 \leq i < j \leq n} |y_i-y_j|^{2 \kappa}
\ee
with, for Jacobi, $\mu_A(y) = \mu_J(y) := y^a (1-y)^b \theta(0<y<1)$. The main other ensembles
differ only by the choice of the weight function $\mu_A(y)$.

\begin{enumerate}

\item

{\it Laguerre ensemble} Define $y_i=z_i/b$ and take the limit $b \to +\infty$.
Then
\bea
\lim_{b \to +\infty} {\cal P}_J({\bf y})  d {\bf y} = {\cal P}_L({\bf z})  d {\bf z} \quad , \quad
\mu_A(z) = z^a e^{-z} \theta(z)
\eea
where ${\cal Z}^L_n= \lim_{b \to +\infty} b^{a+n + \kappa n(n-1)/2} {\cal Z}^J_n
= \prod_{j=1}^n \frac{\Gamma(1+a+ (j-1) \kappa) \Gamma(1+j \kappa)}{\Gamma(1+\kappa)}$.

Hence the moments in Laguerre ensemble are obtained as:
\bea
<z^k>_L = \lim_{b \to \infty} b^k <y^k>_J   \label{lag1}
\eea

However in our statistical mechanics model, for instance
the LCG random potential with an external background, we need to
define $z$ slightly differently, i.e. $z=\frac{b}{\beta} y = \bar b y$. This ensures
duality invariance of the problem, and the Laguerre weight can
now be interpreted as a bona-fide external background potential:
\bea \label{VL}
V_0(z) = - \bar a \ln z + z
\eea
which confines the particle near the edge $z=0$ (for $\bar a>0$).

\item

{\it Gaussian-Hermite ensemble} Define $y_i=\frac{1}{2} + (z_i/\sqrt{8 a})$ and take the limit $a=b \to +\infty$.
Then
\bea
\lim_{a=b \to +\infty} {\cal P}_J({\bf y})  d {\bf y} = {\cal P}_G({\bf z})  d {\bf z} \quad , \quad
\mu_G(z) = e^{-z^2/2}
\eea
where ${\cal Z}^G_n= \lim_{a=b \to +\infty} 4^{an} (8a)^{n/2  + \kappa n(n-1)/2} {\cal Z}^J_n
= (2 \pi)^{n/2} \prod_{j=1}^n \frac{\Gamma(1+ j \kappa)}{\Gamma(1+\kappa)}$ is the Mehta integral.
Similarly below we will introduce a factor of $\beta$ in the definition, see
next section.

\item

{\it Inverse-Jacobi weights} Define $y_i=1/z_i$
Then
\bea
{\cal P}_J({\bf y})  d {\bf y} = {\cal P}_L({\bf z})  d {\bf z} \quad , \quad
\mu_A(z) = z^{-2-a-b-2(n-1) \kappa} (z-1)^b \theta(z-1)  \nonumber
\eea
where ${\cal Z}^I_n= {\cal Z}^J_n$

\end{enumerate}

\noindent

\medskip

There is a second set of models, for which the correspondence is less direct.
We will follow the arguments of \cite{ForWar} to surmise a relation between
moments. As in that work, one starts
with the simple identity, for $k \in \mathbb{Z}$, $a \in \mathbb{R}$
\bea
\fl && \int_{-\pi}^\pi d\theta e^{i \theta (a +1 +k) } = \frac{2 \sin((1+a+k) \pi)}{1+a+k} = 2 (-1)^k \sin((1+a) \pi)
\int_0^1 dt ~ t^{a+k}
\eea
valid whenever the last integral converge. That leads to the multiple-integral version
\bea
\fl && \prod_{j=1}^n \int_{-\pi}^\pi d\theta_j e^{i \theta_j (a +1) } f(-e^{i \theta_1},..-e^{i \theta_n})
= [2 \sin((1+a) \pi)]^n \prod_{j=1}^n \int_{0}^1 dt_j t_j^a f(t_1,..t_n) \nonumber
\eea
for any Laurent polynomial $f$. From this one conjectures interesting relations between
quantities in the circular and Jacobi ensembles, see (1.15)-(1.17) in \cite{ForWar}
as well as Proposition 13.1.4 in Chap.13 of \cite{Forbook}.
Further elaborations of these relations lead us to the following conjectures
for the moments:

\begin{enumerate}

\item {\it Circular ensemble with weight.}
Consider the CUE with weight, defined by the joint probability
\be
 \frac{1}{{\cal Z}^C_n} \prod_{i=1}^n \frac{d\theta_i}{2 \pi} |1 + e^{i \theta_i}|^{2 \mu}
\prod_{1 \leq i < j \leq n} |e^{i \theta_i} - e^{i \theta_j}|^{2 \kappa}
\ee
for the variables $\theta_i \in [-\pi,\pi[$.
Then we conjecture that
\bea
< \cos(k \theta) >_{\text{circular}} = (-1)^k < y^k >_{\kappa,a,b,n} |_{a=-\mu-1-\kappa(n-1),b =2 \mu}
\eea

\item {\it Cauchy-$\beta$ ensemble.}
Following (1.19) in \cite{ForWar} and using
the stereographic projection from the circle to the real axis, $e^{i \theta}=(i - z)/(i+z)$
we obtain the Cauchy-$\beta$ ensemble which has weight on the whole real axis $z \in \mathbb{R}$
\bea
\mu_C(z) = \frac{1}{(1+z^2)^\rho}  \quad , \quad \rho = 1 + \mu + (n-1) \kappa
\eea
For this ensemble, the conjecture then becomes
\bea
<  Re[ (\frac{i - z}{i+z})^k ] >_{\text{Cauchy}} = (-1)^k < y^k >_{\kappa,a,b,n} |_{a=-\rho,b =2 \rho-2 - 2 (n-1) \kappa}
\eea
which we checked is obeyed for $\kappa=0$, in which case
one has $< y^k >_{J,\kappa=0}= (1-\rho)_k/(\rho)_k$.
Note that interesting integrable generalization of Cauchy ensemble
was proposed in \cite{BO}.

\end{enumerate}

Let us now study the two following examples in more details

\subsection{Moments for the Laguerre ensemble}

\subsubsection{General formula.}

 From (\ref{momentsJ1})-(\ref{momentsJ2pos}), performing the limit (\ref{lag1}) we obtain the general
positive moments of the Laguerre ensemble as\footnote{Note that positive moments for the Laguerre ensemble were also presented in
\cite{MR15} in an equivalent, but less explicit form.}
\be \label{momentsJ1L}
 \left\langle \frac{1}{n} \sum_{j=1}^n y_j^k \right\rangle_L = \sum_{\lambda, |\lambda|=k} A_\lambda a^+_\lambda
\ee
where the sum is over all partitions of $k$, and
\bea \label{momentsJ2L}
\fl && A_\lambda a^+_\lambda = \frac{ k (\lambda_1-1)! }{
(\kappa(\ell(\lambda)-1)+1)_{\lambda_1}}
\prod_{i=2}^{\ell(\lambda)}
 \frac{(\kappa(1-i))_{\lambda_i}}{(\kappa(\ell(\lambda)-i)+1)_{\lambda_i}}  \prod_{1 \leq i < j \leq \ell(\lambda)}
\frac{\kappa(j-i)+\lambda_i-\lambda_j}{\kappa(j-i) }   \\
\fl && \times \frac{1}{n} \prod_{i=1}^{\ell(\lambda)} (a+1+ \kappa(n-i))_{\lambda_i}
 \frac{(\kappa(n-i+1))_{\lambda_i} }{ (\kappa(\ell(\lambda)-i+1))_{\lambda_i} }
   \prod_{1 \leq i < j \leq \ell(\lambda)} \frac{(\kappa(j-i+1))_{\lambda_i-\lambda_j} }{ (\kappa(j-i-1)+1)_{\lambda_i-\lambda_j}  }
   \nonumber
\eea
i.e. a single factor has disappeared. It turns out that these sums are polynomials, although it
may not be easy to see on this expression.

Let us give the first two moments:
\bea
&& <z>_L = 1 + a + \kappa (n-1) \\
&& <z^2>_L = (1 + a + \kappa (n-1))(2 + a + 2 \kappa (n-1))
\eea
higher moments become more complicated polynomials.  Formula for
negative moments can also be obtained from (\ref{momentsJ1})-(\ref{momentsJ2neg}) performing the
same limit.

\subsubsection{Random statistical mechanics model associated to Laguerre ensemble.}

In the corresponding disordered model, LCRG with a background confining
potential (\ref{VL}) we find
\bea
&& \overline{<z>_\beta} = \frac{1}{\beta} + \bar a + \beta  \\
&& \overline{<z^2>_\beta} - \overline{<z>}^2 = (\frac{1}{\beta} + \beta) (\frac{1}{\beta} + \beta + \bar a)
\eea
Freezing of these manifestly duality invariant expressions lead to
\bea
&& \overline{z_m} = 2 + \bar a  \quad , \quad
\overline{z_m^2} - \overline{z_m}^2 = 2 (2 + \bar a)
\eea
In the absence of disorder the absolute minimum is at
$z_m^0=\bar a$, hence the random potential now tends to push the minimum towards
the larger positive $z$ (i.e. to unbind the particle).
One finds a few higher moments
\bea
\fl && \overline{z_m^3} = (2 + a) (23 + a (10 + a)) \quad , \quad
\overline{z_m^4} = (2 + a) (168 + a (99 + a (18 + a)))
\eea
whose associated cumulants have simpler expressions
\bea
\fl && \overline{z_m^3}^c = 7 (2 + \bar a) \quad , \quad
\overline{z_m^4}^c = (\bar a-32)(2+\bar a) \\
\fl &&
\overline{z_m^5}^c =  4 (2 + \bar a) (42 - 5 \bar a)  \quad , \quad
\overline{z_m^6}^c =  2 (2 + \bar a) (458 + \bar a (-147 + 2 \bar a))
\eea

\subsection{Gaussian-Hermite ensemble}

\label{sec:gauss}

Let us now turn to the Gaussian ensemble.
This model is of great interest as its disordered statistical mechanics
is associated to the solution of the one-dimensional decaying Burgers equation
with a random initial condition and of viscosity $\sim 1/\beta$.
The velocity is the gradient of a potential, and the initial condition is chosen
to correspond to a log-correlated random potential.
The one-space point statistics of the velocity at any later time is then exactly
associated to the statistical model of the Gaussian ensemble, as
we showed in \cite{FLDR10}. The freezing
transition at $\beta=1$ corresponds to a transition in the Burgers dynamics
from a Gaussian phase, to a shock-dominated phase.
For more details on the correspondence between the two problems, we refer the reader to \cite{FLDR10}
where the model is introduced and analyzed. We use the same
conventions as in that work. Defining the new variable $z$:
\bea  \label{zy}
z = \sqrt{8 \bar a} (y-\frac{1}{2})   \quad , \quad a = \beta \bar a
\eea
in terms of the Jacobi variable $y \in [0,1]$, we obtain
the moments of the Gaussian ensemble by taking $b=a \to +\infty$ in the general formula
(\ref{momentsJ1})-(\ref{momentsJ2}). This limit is not simple. First, raising
(\ref{zy}) to the power $z^k$ requires adding contributions of various
moments $<y^p>$ of degree $p \leq k$. Second, in each
such moment there is no
obvious term by term simplification in the large $a=b$ limit.
As one sees on (\ref{momentsJ1})-(\ref{momentsJ2}), each term $A_\lambda$ is
superficially of order one, hence multiple cancellations do occur in the sum
so that the end result is of order $1/a^{k/2}$ at large $a$.
The calculation is thus handled using Mathematica, which allows to obtain
moments to high degrees.

Setting $n=0$, we then obtain the first non-trivial cumulants of $\overline{p_\beta(z)}$.
Note that the weight factor is now $e^{- \beta z^2/2}$ hence the disordered model
corresponds to a particle in a LCGP in presence of a quadratic confining background
potential $V_0(z) = z^2/2$ at inverse temperature $\beta$. We obtain
\bea
&& \overline{<z^2>_\beta} = \beta + \beta^{-1}  \\
&& \overline{<z^4>_\beta}^c  = \overline{<z^4>_\beta} - 3 \overline{<z^2>_\beta}^2 = - 1
\eea
and we list here the next three ones i.e. $\overline{<z^k>_\beta}^c$ for
$k=6,8,10$
\bea
\fl ~~~~  \left\{2 \left(\beta +\beta^{-1} \right),-2  \left(3 \beta
   ^2+13 +3 \beta^{-2} \right),12  \left(\beta+\beta^{-1} \right) \left(2 \beta ^2+23 +2 \beta^{-2} \right)\right\}
\eea
These expressions are identical to the ones calculated in
\cite{FLDR10} using there a much more painstaking method.
They are manifestly duality-invariant and freeze at $\beta=1$,
from which one can read the expressions for the corresponding
moments of the position of the minimum $z_m$ (which
we do not write here in detail). Since, as discussed there,
the Burgers velocity $v$ in the inviscid limit is {\it equal} to the
position of the minimum $v \equiv z_m$ in the Gaussian
ensemble problem, this leads to non-trivial predictions for the
moments of the PDF of these two quantities, some of which were
numerically checked there.

\section{Discussion and Conclusions}

\subsection{Numerical verification}

We now compare our predictions for the position $x_m$ of the maximum of the
GUE-CP with the results of direct numerical simulations of
GUE polynomials for matrices of growing size $N$, performed by Nick Simm,
who we also thank for the detailed analysis of the data.

  \begin{figure}[htpb]
  \centering
  \includegraphics[width=0.8\textwidth]{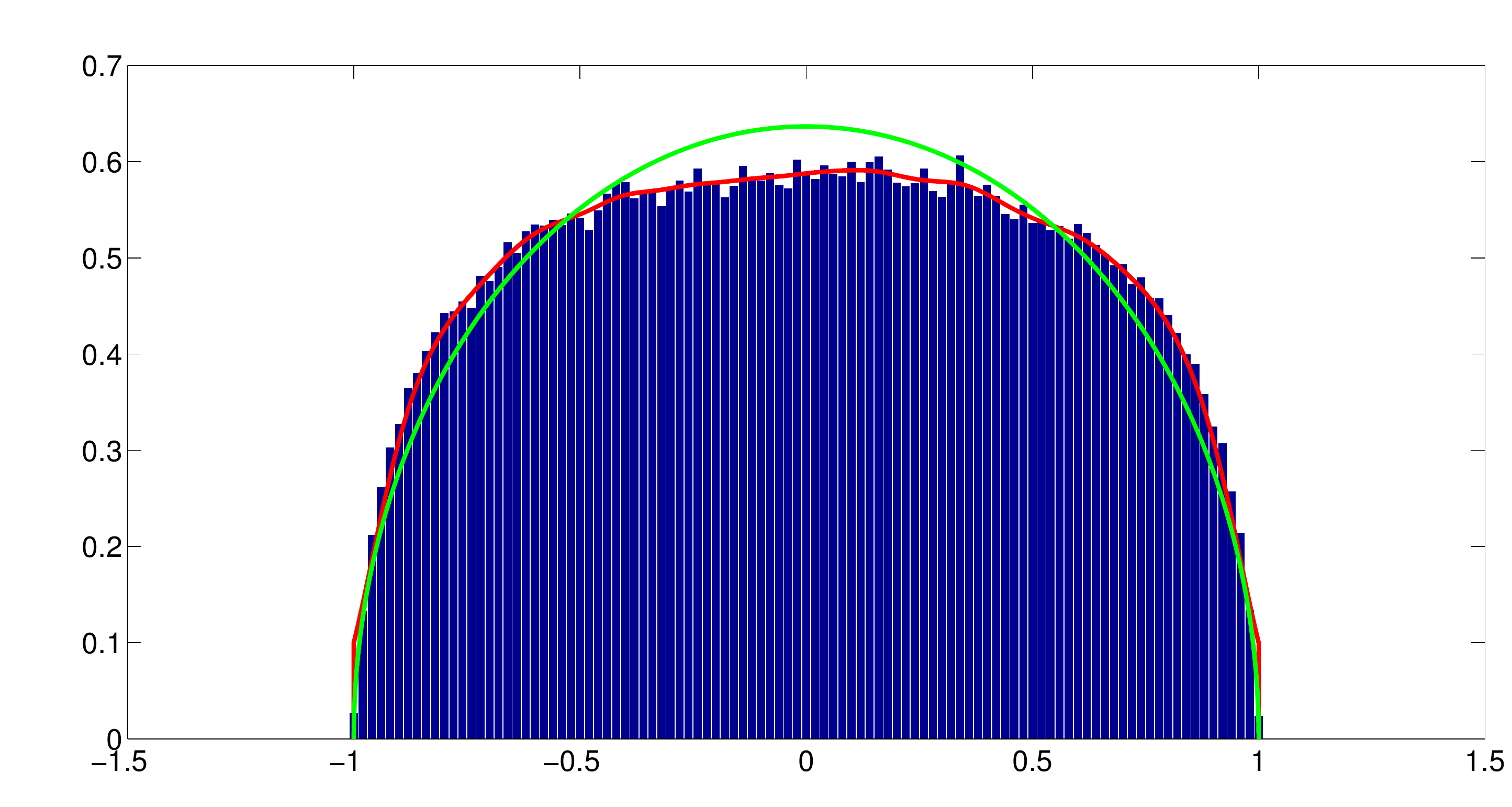}
  \caption{Histogram of values $x_{m}$ for the position of the maximum of the characteristic
  polynomial for size $N=3000$ GUE matrices with $250,000$ realizations. We use the numerical method described in Section $3$ of \cite{FyoSim15}. The curve fitting the histogram (red) differs from the semi-circular density (green) at most by 0.099.}
\label{fig:fig2}
\end{figure}

In Figure \ref{fig:fig2} we show the histogram of the full PDF of the position $x_m$.
For the sake of comparison we also plotted the semi-circle density of eigenvalues, which
as discussed above, has the same first negative moment. The data suggest that, although the distribution of the maximum if clearly not given by the semi-circular law, the edge behaviors of the two distributions are
numerically close.

Next, in Figures \ref{fig:var}, \ref{fig:kurt} we are plotting the values of the second moment and of the kurtosis
as compared to the {\bf Prediction 1}. While the detailed analysis of finite size
effects is left for the future, we plotted the data against the finite size scale
$1/[10 (\ln N)^3]$, which we found appropriate.

We see a rather good agreement for the extrapolated values of the second moment, and still reasonable agreement for the kurtosis.
In Figure \ref{fig:inv}, we also plot the first inverse moment, which shows rather good convergence to the
predicted value $2$. The second inverse moment, predicted to diverge,
is also shown in Figure \ref{fig:inv2}.

In conclusion the agreement with the predictions is reasonable, and for some observables, excellent.
We hope that increasing both the number of realizations and the value of the parameter $N$ should lead to further improvement,
but such a programme is challenging computationally and is left for future research.

\begin{figure}[htpb]
  \centering
  \includegraphics[width=0.8\textwidth]{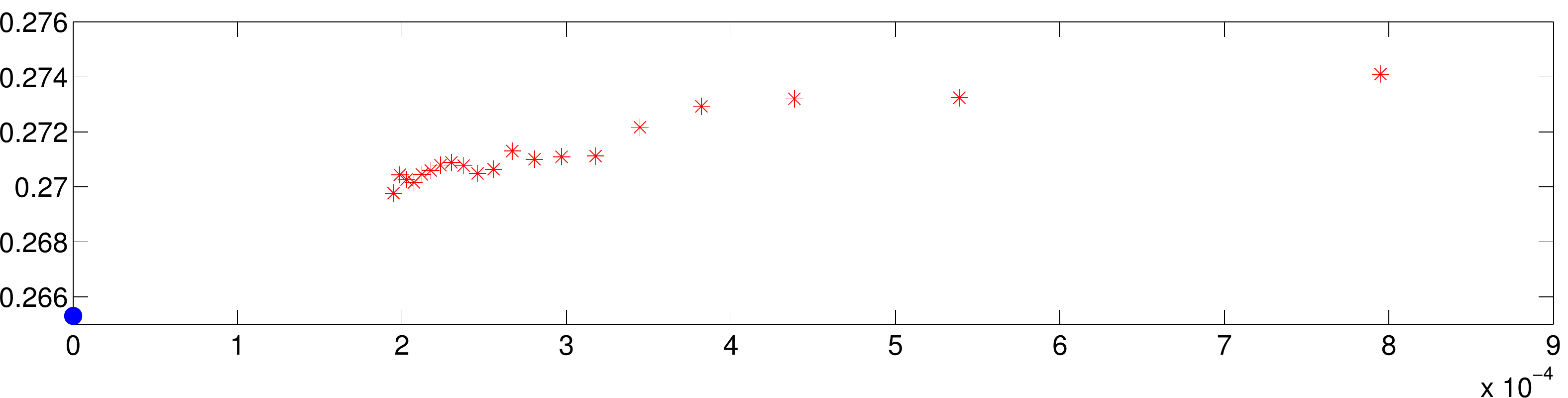}
  \caption{Variance of the position of the maximum of the characteristic
  polynomial for 20 equally spaced data points corresponding to size $N=150$ up to size $N=3000$
  GUE matrices with $250,000$ realizations. The $x$ axis has been chosen as $1/[10 (\ln N)^3]$.
  The blue point is the prediction (\ref{var})}
\label{fig:var}
\end{figure}

\begin{figure}[htpb]
  \centering
  \includegraphics[width=0.8\textwidth]{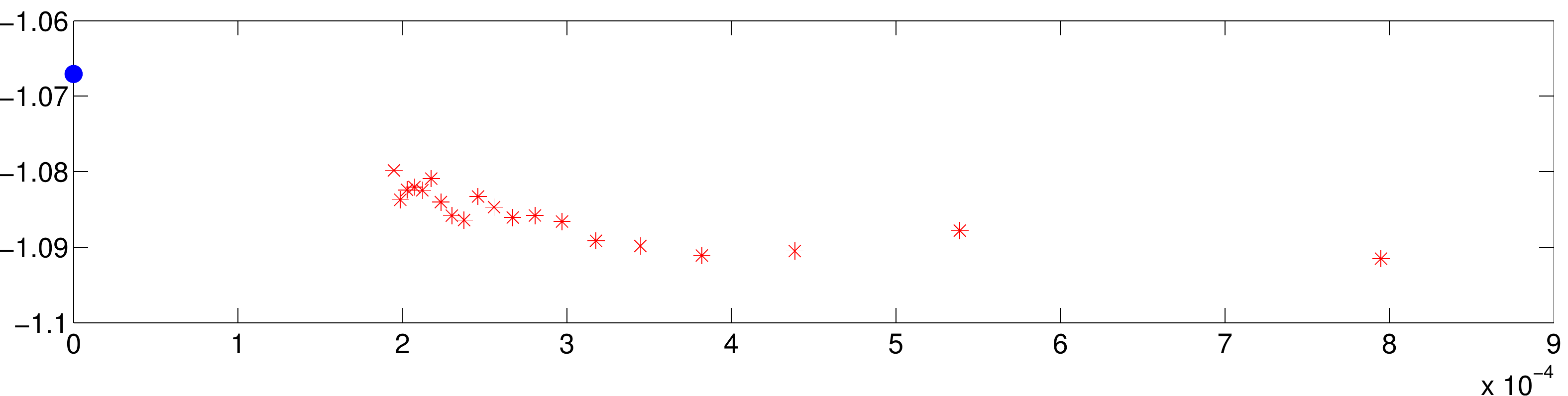}
  \caption{Kurtosis of the position of the maximum of the characteristic
  polynomial, same samples and $x$ axis scale as in Fig.\ref{fig:var}.  The blue point is the
  {\bf Prediction 1}.}
\label{fig:kurt}
\end{figure}

\begin{figure}[htpb]
  \centering
  \includegraphics[width=0.8\textwidth]{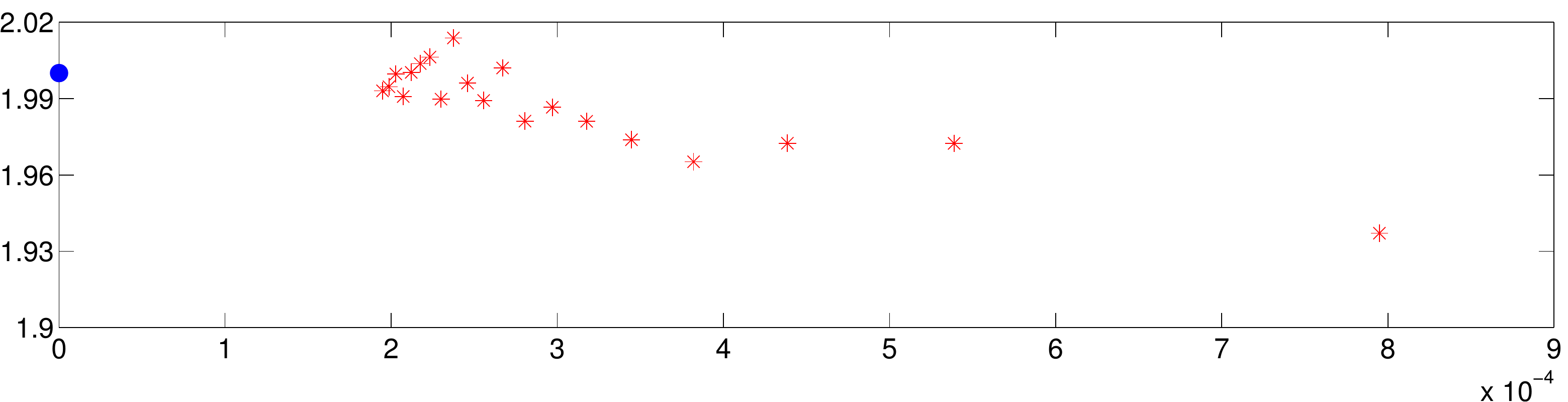}
  \caption{Inverse moment of the position of the maximum of the characteristic
  polynomial, same samples and $x$ axis scale as  in Fig.\ref{fig:var}. The blue point is the prediction
 (\ref{predinv}).}
\label{fig:inv}
\end{figure}

 \begin{figure}[htpb]
  \centering
  \includegraphics[width=0.8\textwidth]{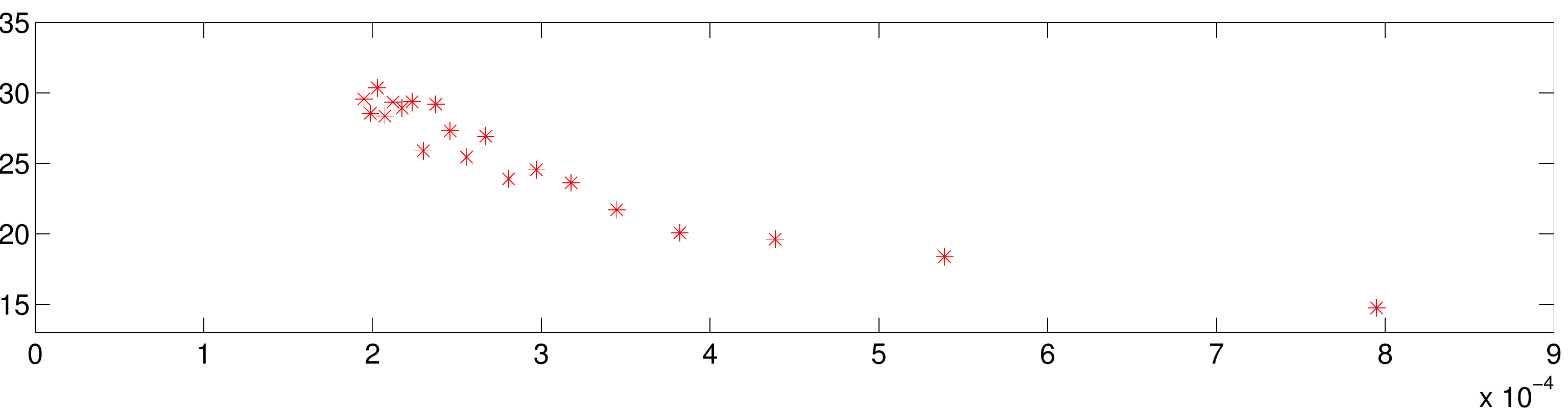}
  \caption{Inverse moment of the position of the maximum of the characteristic
  polynomial, same samples and $x$ axis scale as in Fig.\ref{fig:var}.
  The prediction is a divergence of the moment as $N \to +\infty$.}
\label{fig:inv2}
\end{figure}

\subsection{Conclusions}

In this paper we developed a systematic approach to investigating statistical properties
of the position $x_m$ of the global extremum (maximum or minimum) for appropriately
regularized logarithmically-correlated gaussian (LCG) processes in an interval.
We explicitly treat three processes of that kind: the logarithm of the Gaussian unitary ensemble (GUE) characteristic polynomial, the log-correlated potential in presence
of edge charges, and the Fractional Brownian motion with Hurst index $H \to 0$,
The distribution of $x_m$ is characterized through its positive integer moments $\mathbb{E}\left\{\left[x_m\right]^k\right\}$  for which we provided an explicit, closed form expression in terms of a sum over partitions of the integer $k$. Our approach is based on the idea of interpreting LCGs as a random potential so that the associated Boltzmann-Gibbs measure in the limit of zero temperature $T \to 0$ concentrates around the coordinates of the global minimum for the potential.
 Our main technical instrument of analysis is then combining the replica trick representation for the Boltzmann-Gibbs average of $\left[x_m\right]^k$ with the possibility of exact evaluation of that average by mapping it to the moment problem for $\beta-$Jacobi ensemble of random matrices. To perform the latter we used a method based on Jack polynomials expansion and Macdonald-Kadell integral, with alternative routes possible via Borodin-Gorin moment formula. The latter appproach provides also some expressions for negative integer moments.
  Our calculations yield explicit formulae for the moments in the high-temperature phase $T>T_c$, which can be further continued to $T<T_c$ by exploiting the Freezing-Duality Conjecture, and in this way provide for $T\to 0$ the sought for expressions for $\mathbb{E}\left\{\left[x_m\right]^k\right\}$. Although any integer moment of $x_m$ can be with due effort calculated in that way, it remains a challenge to convert such information into the appropriate generating function for the distribution of $x_m$. Doing this would allows to understand, e.g., far tails of the latter distribution, and we leave that and other interesting question for the future work.
 Note that we have also provided results about correlations between position
 and value of the maximum, and a general method to calculate conditional moments. The determination of the full the joint probability is also left for the future. We also used the numerical data provided by Nick Simm to test our predictions for the moments of the position for the maximum in GUE case.
 The agreement with our theory is reasonable, and for some observables, excellent.

 The application of our method to a non-stationary process, the fBm0, poses several
questions. While the analytical continuation to $n=0$ seems to be benign for the
moments of $x_m$, and leads to the interesting {\bf Prediction 3}, extension to
determine the PDF of the global minimum seems to fail, as discussed in the
Appendix G. It would thus be desirable in near future to find a way around this problem, as
well as to check the predictions for the moments against numerical simulations of the process $B^{(\eta )}_0(x)$. Unfortunately generating many instances with reliable precision seems to be extremely time-demanding. To this end recall that there exists an intimate relation between $B^{(\eta )}_0(x)$ and the behaviour of the ({\it increments} of)  GUE characteristic polynomials, though at a different, so-called "mesoscopic" spectral scales \cite{FKS13}, negligible in comparision with the interval $[-1,1]$.
Clearly, such restriction makes the problem challenging and we leave numerical verification/falsification of the {\bf Prediction 3} for the future.

 We believe that our analysis of $\beta-$Jacobi ensemble is interesting in its own right and complements one that has appeared in the very recent work by Mezzadri and Reynolds \cite{MR15}.
 We have also presented conjectures for moments in related ensemble.


\section{Appendix A: contour integral formulas for Jacobi ensemble}

{\it Appendix written by Alexei Borodin and Vadim Gorin}

\medskip

Consider the $N$--particle Jacobi ensemble, which is a probability distribution on
$N$--tuples of reals $0\le \r_1\le \r_2\le\dots\le \r_N\le 1$ with density
\begin{multline}
\label{eq_general_beta_Jacobi}
 \Pr^{\alpha,M,\theta}(\r\in [h,h+dh])= {\rm const}\cdot
  \prod_{1\le i<j \le N} (h_j-h_i)^{2\theta}
 \prod_{i=1}^{N} h_i^{\theta\alpha-1} (1-h_i)^{\theta(M-N+1)-1} d h_i,
\end{multline}
where $M\ge N$ is an integer, and $\alpha>0$, $\theta>0$ are two real parameters.
$2\theta$ is customary called $\beta$ in the random matrix theory.

\begin{theorem} \label{Theorem_positive_moment}
For $k=1,2,\dots$ the expectation of $\sum_{i=1}^N (\r_i)^{k}$ with respect to the
measure \eqref{eq_general_beta_Jacobi} is given by
\begin{multline}
\label{eq_expectation_of_moment_positive} \frac{(-\theta)^{-1}}{(2\pi\ii)^k }
\oint\dots\oint \dfrac{1}{\left(u_2-u_1+1-\theta\right)\cdots \left(u_k-u_{k-1}+1-\theta\right)}\\
\times \prod_{i<j}
 \frac{\left(u_j-u_i\right)\left(u_j-u_i+1-\theta\right)}{\left(u_j-u_i-\theta\right)\left(u_j-u_i+1\right)}
 \left(\prod_{i=1}^k \frac{u_i-\theta}{u_i+(N-1)\theta}  \cdot
\frac{u_i-\theta\alpha}{u_i-\theta\alpha-\theta M} du_i \right),
\end{multline}
where all the contours enclose singularities at $(1-N)\theta$ and not at
$\theta\alpha+\theta M$, $|u_1|\ll|u_2|\ll\dots\ll|u_k|$.
\end{theorem}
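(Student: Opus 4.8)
The plan is to compute $\mathbb{E}\bigl[\sum_{i=1}^N \r_i^{\,k}\bigr]$ by expanding the power sum in Jack polynomials, evaluating each Jacobi average in closed form via the Kadell--Macdonald integral, and then resumming the resulting finite sum over partitions of $k$ into the claimed $k$-fold nested contour integral. Concretely: set $\alpha=1/\theta$ (so $\theta=\kappa$) and use the dictionary $\theta\alpha-1\to a$, $\theta(M-N+1)-1\to b$ to match \eqref{eq_general_beta_Jacobi} with the measure \eqref{jpdf}. Write the power sum $p_k=\sum_i \r_i^{\,k}$ in the Jack basis with coefficients $\gamma^\mu_{(k)}(\alpha)$ as in the main text, integrate term by term against \eqref{eq_general_beta_Jacobi}, and divide by the Selberg normalization; each $\langle J^{(\alpha)}_\mu\rangle$ is then given explicitly by \eqref{MKI}--\eqref{MKIa}. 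This exhibits $\mathbb{E}[p_k]$ as a sum over partitions $\mu=(\mu_1\ge\mu_2\ge\cdots)$ of $k$ of a product of Pochhammer symbols: a single-particle factor $\prod_i(\text{Gamma ratio in }\mu_i)$ and a Vandermonde-type factor $v_\mu(\kappa)$ coupling pairs $(\mu_i,\mu_j)$.

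\textbf{From the partition sum to the nested integral.} The core of the argument is to realize this partition sum as an iterated residue. One introduces variables $u_1,\dots,u_k$ and the integrand of \eqref{eq_expectation_of_moment_positive}, and checks that, with the size prescription $|u_1|\ll|u_2|\ll\cdots\ll|u_k|$ and contours enclosing only the singularity at $(1-N)\theta$, the iterated residue expansion reproduces the partition sum under the dictionary ``$u_i$ sits at $(1-N)\theta$ shifted by $\mu_i$''. In this correspondence the chain $1/\bigl[(u_2-u_1+1-\theta)\cdots(u_k-u_{k-1}+1-\theta)\bigr]$ enforces the dominance order $\mu_1\ge\mu_2\ge\cdots$ (a would-be part exceeding its predecessor contributes no residue, so only genuine partitions survive); the symmetric factor $\prod_{i<j}\frac{(u_j-u_i)(u_j-u_i+1-\theta)}{(u_j-u_i-\theta)(u_j-u_i+1)}$ evaluates on these poles to exactly the $v_\mu(\kappa)$ factor; the one-variable factor $\frac{u_i-\theta}{u_i+(N-1)\theta}\cdot\frac{u_i-\theta\alpha}{u_i-\theta\alpha-\theta M}$ produces the single-particle Gamma ratios; and $(-\theta)^{-1}$ absorbs the overall normalization (including the $k\alpha/[c(\lambda,\alpha,1)c(\lambda,\alpha,\alpha)]$-type constant and the residue Jacobians). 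The truncation $\ell(\mu)\le N$ is automatic, since the factor $u_i+(N-1)\theta$ in the numerator forces configurations with more than $N$ nonzero parts to vanish.

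\textbf{Main obstacle.} The genuinely delicate part is the bookkeeping of this residue expansion: one must show that integrating $u_1$ first picks up a residue only at $u_1=(1-N)\theta$ (the $u_2$-dependent pole $u_1=u_2+1-\theta$ lying outside by the size prescription), then argue inductively that $u_j$ picks up either the base pole or one shifted by an already-fixed $u_i$, and verify that the only surviving configurations are indexed by partitions with every Pochhammer factor matching \eqref{MKI}--\eqref{MKIa} exactly — including the awkward $(\mu_1-1)!$- and $(-\kappa(i-1))_{\mu_i}$-type pieces that arise from the box $(1,1)$ being singled out. One also has to check that the contour prescription (enclosing $(1-N)\theta$ and \emph{not} $\theta\alpha+\theta M$) is the consistent one: the omitted singularity corresponds to the regime where the Jacobi weight ceases to be integrable (``binding to the edge''), and this must be propagated through the induction.

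\textbf{Complementary route and consistency check.} As an independent derivation, one may instead verify that the right-hand side of \eqref{eq_expectation_of_moment_positive} satisfies the same recursion in $k$ (and in $N$) as $\mathbb{E}[p_k]$, the latter obtained from the Aomoto differential-difference identity for Selberg-type integrals (integration by parts); agreement at $k=1$ is a one-dimensional residue computation, and the recursion then forces agreement for all $k$. Structurally, the appearance of nested contours is not accidental: the formula can be viewed as a Jack (i.e.\ $q\to1$) degeneration of contour-integral observables for Macdonald measures, or as a specialization of Okounkov-type integral representations for shifted Jack polynomials against the Jacobi weight — but in any of these routes the same pole-matching calculation is what pins down the precise normalization and contour conditions stated in the theorem.
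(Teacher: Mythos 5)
Your proposed route --- expand $p_k$ in Jack polynomials, average each term by the Kadell integral, and then ``resum'' the resulting partition sum into the $k$-fold nested contour integral --- has a genuine gap exactly at the step you yourself flag as the main obstacle. The equivalence between the iterated-residue expansion of \eqref{eq_expectation_of_moment_positive} and the Jack/Kadell partition sum is not a bookkeeping exercise you can defer: it is the entire nontrivial content of the theorem on this route, and you never establish it. Moreover, the dictionary you propose (``$u_i$ sits at $(1-N)\theta$ shifted by $\mu_i$'') is not what the residue structure actually produces. After $u_1$ is pinned at $(1-N)\theta$, the pole of the single-variable factor at $u_2=(1-N)\theta$ is cancelled by the numerator $u_2-u_1$, and the surviving poles for $u_2,u_3,\dots$ sit at positions displaced from earlier $u_i$'s by mixtures of $1$'s and $\theta$'s coming from the cross factors $(u_j-u_i-\theta)(u_j-u_i+1)$ and the chain $(u_{i+1}-u_i+1-\theta)$; some configurations are further killed by numerator zeros. (This is visible already in the low-order evaluations of Appendix~B, where the $u_3$ poles are at $-2,-\beta^2$ or $-1,-2\beta^2$ depending on the branch, not at a single base point shifted by a part.) Matching this tree of residues, with all Jacobians and the singled-out box $(1,1)$ factors, against the Pochhammer products of \eqref{momentsJ1}--\eqref{momentsJ2pos} for general $k$, $N$, $\theta$ is an open combinatorial identity in this paper: the authors state explicitly that they checked agreement of the two formulas only for a few low moments. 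Your ``complementary route'' via an Aomoto-type recursion is likewise only asserted; the paper notes that recursion-based approaches become intractable for higher moments, and you do not exhibit the recursion the contour integral is supposed to satisfy.

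For comparison, the paper's actual proof does not go through the partition sum at all. It starts from Negu\c{t}'s contour-integral formula for Macdonald difference operators $\mathfrak{D}_n$ diagonalized by Macdonald polynomials, applies $\mathfrak{D}_k-t^{Nk}$ to both sides of the Cauchy identity specialized at $x_i=t^{i-1}$, $\hat x_j=t^{\alpha+j-1}$, and performs the degeneration $q=e^{-\eps}$, $t=q^\theta$, $\eps\to0$ under which Macdonald polynomials become Heckman--Opdam hypergeometric functions and the normalized Cauchy kernel becomes the Jacobi density (by the earlier Borodin--Gorin results). The delicate part there is a residue-cancellation lemma showing that, after deforming contours, only two families of residue terms survive the limit and they cancel pairwise except for the full-dimensional integral, which yields \eqref{eq_expectation_of_moment_positive} with the stated contour prescription. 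If you want to salvage your approach, you would need to actually prove the residue-to-partition identity (or find it in the literature), which is a separate and substantial task.
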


\begin{theorem} \label{Theorem_negative_moment}
For any positive integer $k<\theta\alpha$ the expectation of $\sum_{i=1}^N
(\r_i)^{-k}$ with respect to the measure \eqref{eq_general_beta_Jacobi} is given by
\begin{multline}
\label{eq_expectation_of_moment_negative} \frac{\theta^{-1}}{(2\pi\ii)^k }
\oint\dots\oint \dfrac{1}{\left(u_2-u_1-1+\theta\right)\cdots \left(u_k-u_{k-1}-1+\theta\right)}\\
\times \prod_{i<j}
 \frac{\left(u_j-u_i\right)\left(u_j-u_i-1+\theta\right)}{\left(u_j-u_i+\theta\right)\left(u_j-u_i-1\right)}
 \left(\prod_{i=1}^k \frac{u_i+N\theta}{u_i}  \cdot
\frac{u_i+1-\theta\alpha-M\theta}{u_i+1-\theta\alpha} du_i \right),
\end{multline}
where all the contours enclose singularities at $0$ and not at $\theta\alpha-1$,
$|u_1|\ll|u_2|\ll\dots\ll|u_k|$.
\end{theorem}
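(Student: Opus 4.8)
The plan is to obtain Theorem~\ref{Theorem_negative_moment} by the same route as Theorem~\ref{Theorem_positive_moment}, but ``expanded at the origin'' rather than at infinity, reducing the negative moment to a Macdonald--Kadell evaluation over a parameter‑shifted Jacobi ensemble. First I would write $\sum_{i=1}^N \r_i^{-k}=p_k(\r_1^{-1},\dots,\r_N^{-1})$ and expand the power sum in the Jack basis, $p_k=\sum_{\lambda\vdash k}\gamma^\lambda_{(k)}(\alpha)\,J^{(\alpha)}_\lambda$ with $\alpha=1/\theta$, so that only the coefficients $\gamma^\lambda_{(k)}$, which are known explicitly, enter. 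Using the inversion identity for Jack polynomials (\ref{invert}), $P^{(\alpha)}_\lambda(\r^{-1})=(\r_1\cdots\r_N)^{-l}\,P^{(\alpha)}_{(l^N-\lambda)^+}(\r)$ for any integer $l\ge\lambda_1$, each term becomes a genuine Jack polynomial of $\r$ times the monomial $(\r_1\cdots\r_N)^{-l}$. That monomial merely shifts the Jacobi weight $\r_i^{\theta\alpha-1}\mapsto\r_i^{\theta\alpha-1-l}$, i.e. sends $\alpha\mapsto\alpha-l/\theta$, so $\langle P^{(\alpha)}_{(l^N-\lambda)^+}(\r)(\r_1\cdots\r_N)^{-l}\rangle$ is a ratio of Selberg integrals times a Macdonald--Kadell evaluation (\ref{MKI}) over the shifted ensemble. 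Summing over $\lambda\vdash k$ this already reproduces the closed expression (\ref{momentsJ2neg}); note that the hypothesis $k<\theta\alpha$ is exactly the condition under which the integral near $\r=0$ converges, hence exactly where the negative moment exists.

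The second, genuinely new, step is to resum this sum over partitions of $k$ into the iterated residue of (\ref{eq_expectation_of_moment_negative}). The mechanism mirrors the positive case: a sum $\sum_{\lambda\vdash k}(\text{rational function of the parts }\lambda_i)$ of exactly this Kadell‑type shape equals the full iterated residue, at the poles $u_i=0$, of the rational kernel appearing in (\ref{eq_expectation_of_moment_negative}), once the contours are nested so that one integrates $u_1$ first (around $u_1=0$ only), then $u_2$, and so on; the prescription that the contours \emph{exclude} the singularity at $u_i=\theta\alpha-1$ is precisely what discards the ``binding‑to‑the‑edge'' residues and keeps the finite answer, consistent with the existence threshold above. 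I would prove this identity either by induction on $k$ --- peeling off the innermost $u_1$‑integral, whose residue at $0$ recombines the smallest parts of $\lambda$ and feeds a $(k-1)$‑variable integral of the same form --- or, more robustly, by verifying that the two sides are the same rational function of $(\alpha,M,\theta)$ through a matching of poles and residues; the latter is also the natural way to see the duality and analytic‑continuation invariance invoked in the main text (cf. (\ref{contourneg})).

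A cleaner alternative, which I would keep as a fallback, is a direct loop‑equation (Nekrasov) proof parallel to the one behind (\ref{eq_expectation_of_moment_positive}). Since the Jacobi weight $w(h)=h^{\theta\alpha-1}(1-h)^{\theta(M-N+1)-1}$ obeys a Pearson‑type first‑order relation, integrating by parts against observables built from $\prod_i(u-\r_i)^{\pm1}$ produces a closed functional equation for the generating series of the moments; its expansion at $u=\infty$ yields the positive moments and at $u=0$ the negative ones, with the contour data recording which residues survive in each regime. In either approach, \textbf{the main obstacle is the combinatorial identity of the second step}: showing that the partition sum collapses to precisely the stated nested‑contour integral, with the correct kernel and the correct ``enclose $0$, avoid $\theta\alpha-1$'' prescription. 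The remaining analytic points --- that the shifted parameter $\alpha-l/\theta$ need not keep $M$ integral, and that the final answer is independent of the auxiliary integer $l$ (already observed for the sum‑over‑partitions formula) --- are settled by analytic continuation once the identity is established for generic parameters, using that both sides are rational in those parameters.
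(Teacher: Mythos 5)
Your first step is fine, but it only reproduces the partition-sum formula for the negative moments (the route via the Jack expansion of $p_k$, the inversion identity (\ref{invert}), and the Kadell integral over the $a$-shifted ensemble, i.e.\ essentially (\ref{momentsJ2neg})); that is a different statement from the theorem. The actual content of Theorem \ref{Theorem_negative_moment} is the nested contour integral \eqref{eq_expectation_of_moment_negative}, and this is exactly what your second step is supposed to supply — yet there it is only asserted that "a sum of exactly this Kadell-type shape equals the full iterated residue" of the stated kernel with the prescription "enclose $0$, avoid $\theta\alpha-1$". No such identity is proved: the induction you sketch (the $u_1$-residue "recombining the smallest parts of $\lambda$") is not carried out, and the residue combinatorics of the nested integral is not in any evident bijection with partitions of $k$ — indeed the main text of this very paper records that already at $k=4$ several residue branches appear, that the authors "were not able to find a systematics for the residue valid to any order", and that the equivalence between the contour and partition formulas was only \emph{checked} for low moments. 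So the step you flag as "the main obstacle" is the theorem itself, and matching two multivariable rational functions of $(\alpha,M,N,\theta)$ by poles and residues would require controlling the full analytic structure of the iterated integral, which the sketch does not do. The loop-equation fallback is likewise only named, not derived, and it is not clear how a single Pearson-type functional equation would produce the specific $k$-fold nested structure with the cross factors $(u_j-u_i-1+\theta)$.

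For comparison, the paper's own proof takes an entirely different route that bypasses any resummation over partitions: it starts from Negut's contour-integral formula for a Macdonald difference operator whose eigenfunctions are Macdonald polynomials, analyzes which residues survive in the degeneration $q=e^{-\eps}$, $t=q^\theta$, $\eps\to0$ (Lemma \ref{lemma_residues}), obtains the limiting operator $\mathcal P_k$ acting on Heckman--Opdam functions (Proposition \ref{Prop_op_negative}), and then applies it to the Macdonald Cauchy identity specialized at $x_i=t^{i-1}$, $\hat x_j=t^{\alpha+j-1}$, using the convergence of this specialization to the Jacobi density; the choice $f(y)=\Gamma(-y+\theta\alpha)/\Gamma(-y+\theta\alpha+\theta M)$ then yields \eqref{eq_expectation_of_moment_negative} directly. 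Until you establish your resummation identity (or a genuine Nekrasov-equation derivation), your proposal does not prove the theorem.
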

\begin{remark}
 There exist similar contour integral formulas for the expectations of the powers $(\sum_{i=1}^N
 (\r_i)^{k})^m$ with $m=1,2,\dots$ and $k$ being both positive and negative
 integers. They are obtained by \emph{iterating} the results of Propositions
 \ref{Prop_op_negative}, \ref{Prop_op_positive} below.
\end{remark}

In the rest of this section we prove Theorems \ref{Theorem_positive_moment} and
\ref{Theorem_negative_moment}.

The starting point of our proof is a formula from \cite{Negut} which we now present.
Let $\Lambda$ be the algebra of symmetric polynomials in infinitely many variables
$x_1,x_2,\dots$. This algebra is naturally identified with polynomial algebra
$\mathbb C[p_1,p_2,\dots]$, where $p_k$ are Newton power sums:
$$
 p_k=x_1^k+x_2^k+x_3^k+\dots,\quad k=1,2,\dots.
$$
We also need a distinguished linear basis in $\Lambda$ consisting of Macdonald
polynomials $P_\lambda(\cdot;q,t)$, which depend on two parameters $q$ and $t$; here
and below we use the notations of \cite{Macdonald} and $\lambda=\lambda_1\ge
\lambda_2\ge \dots\ge 0$ is a Young diagram.

\begin{proposition}
\label{Proposition_correct_definition_2} Define a differential operator ${\mathfrak
D}_n$ acting in $\Lambda$ via
\begin{multline}
\label{eq_cor_def_d}
   {\mathfrak D}_n = \frac{(-1)^{n-1}}{(2\pi\ii)^n } \oint\dots \oint \dfrac{\sum_{i=1}^n
 \frac{z_n}{z_i (t/q)^{n-i}}}{\left(1-\frac{qz_2}{t z_1}\right)\cdots \left(1-\frac{qz_n}{t
 z_{n-1}}\right)} \prod_{i<j}
 \frac{\left(1-\frac{z_i}{z_j}\right)\left(1-\frac{tz_i}{qz_j}\right)}{\left(1-t\frac{z_i}{z_j}\right)\left(1-q^{-1}\frac{z_i}{z_j}\right)}
\\ \times \exp\left(\sum_{k=1}^\infty q^{-k}(1-t^{k})
\frac{z_1^{-k}+\dots+z_n^{-k}}{k}p_k\right)\exp\left(\sum_{k=1}^\infty
\frac{z_1^{k}+\dots+z_n^{k}}{k}
 (1-q^k) \frac{\partial}{\partial p_k}\right)
\\ \times \frac{dz_1}{z_1}\cdots \frac{dz_n}{z_n},
\end{multline}
where the contours are circles around $0$ satisfying
$|z_1|\ll|z_2|\ll\dots\ll|z_n|$,  and $p_k$ means the operator of multiplication by
$p_k$. Then Macdonald polynomials are eigenfunctions of ${\mathfrak D}_n$, i.e.\
$${\mathfrak D}_n P_\lambda(\cdot;q,t)=(1-t^n) \sum_{i=1}^\infty (q^{-\lambda_i}
t^{i-1})^n P_\lambda(\cdot;q,t).
$$
\end{proposition}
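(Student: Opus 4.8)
The statement is essentially a theorem of Negut \cite{Negut}, and the plan is to recognise $\mathfrak{D}_n$ as a contour integral of a normal-ordered product of vertex operators and then to reduce the eigenvalue claim to an iterated residue computation. On $\Lambda$ introduce the two half-vertex operators
\[
\Gamma_-(z) = \exp\!\left(\sum_{k\ge 1}\frac{q^{-k}(1-t^k)}{k}\,z^{-k}\,p_k\right),\qquad
\Gamma_+(z) = \exp\!\left(\sum_{k\ge 1}\frac{1-q^k}{k}\,z^{k}\,\frac{\partial}{\partial p_k}\right),
\]
the first acting by multiplication and the second, in plethystic language, by the shift $X\mapsto X+z-qz$. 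With this notation the factor under the integral sign in \eqref{eq_cor_def_d} is, up to the rational prefactor, exactly $\Gamma_-(z_1)\cdots\Gamma_-(z_n)\,\Gamma_+(z_1)\cdots\Gamma_+(z_n)$, i.e.\ the normal-ordered ($\Gamma_-$ to the left) product obtained from $n$ copies of the $n=1$ operator, the reordering scalars being precisely the interaction kernel $\prod_{i<j}\frac{(1-z_i/z_j)(1-tz_i/(qz_j))}{(1-tz_i/z_j)(1-z_i/(qz_j))}$.

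First I would record how $\Gamma_\pm$ act on the Macdonald basis. Since $\Gamma_-(z)$ is multiplication by $\prod_i\frac{1-t x_i/(qz)}{1-x_i/(qz)}=\sum_{r\ge 0}(qz)^{-r}g_r(x;q,t)$, the product $\Gamma_-(z_1)\cdots\Gamma_-(z_n)$ acts on any $P_\mu$ through the Macdonald--Pieri rule, producing a finite sum over $\nu\supseteq\mu$; dually, $\Gamma_+(z_1)\cdots\Gamma_+(z_n)$ removes the plethystic alphabet $\sum_i(z_i-qz_i)$ and hence sends $P_\lambda$ to a finite sum over $\mu\subseteq\lambda$ of skew Macdonald coefficients in $z_1,\dots,z_n$ times $P_\mu$. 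Consequently $\mathfrak{D}_n P_\lambda=\sum_{\nu}\Bigl(\oint\!\cdots\!\oint (\text{rational prefactor})\sum_{\mu}c_{\lambda\mu}(\mathbf z)\,c'_{\mu\nu}(\mathbf z)\Bigr)P_\nu$, so everything comes down to evaluating these iterated integrals over the nested circles $|z_1|\ll\dots\ll|z_n|$.

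The heart of the argument is this integration, carried out one variable at a time starting from $z_1$. The only poles retained by the contour prescription are those at $z_{i+1}=(q/t)z_i$ coming from the factors $1/(1-qz_{i+1}/(tz_i))$, together with the compatible poles of the interaction kernel. Using the explicit one-row specialisations of skew Macdonald polynomials one checks that every contribution with $\nu\neq\lambda$ has vanishing total residue, so that only the diagonal term $\nu=\lambda$ survives, and that the surviving residues resum to $(1-t^n)\sum_{i\ge 1}(q^{-\lambda_i}t^{i-1})^n$; this gives the asserted eigenvalue equation.

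The main obstacle is exactly this residue bookkeeping --- proving the cancellation of the off-diagonal terms and resumming the diagonal ones --- which is the combinatorial core of the statement and is the computation performed in \cite{Negut}; I would import it. A more self-contained alternative is to verify the proposition by hand for $n=1$, where $\mathfrak{D}_1$ is the classical first Macdonald operator in infinitely many variables with known spectrum $(1-t)\sum_i q^{-\lambda_i}t^{i-1}$, and then bootstrap to general $n$ by showing that the family $\{\mathfrak{D}_n\}_{n\ge1}$ is mutually commuting --- so that the Macdonald polynomials, being the unique common triangular eigenbasis of $\mathfrak{D}_1$, are automatically eigenfunctions of every $\mathfrak{D}_n$ --- the eigenvalues being then pinned down by a leading-term and stability argument in the number of variables.
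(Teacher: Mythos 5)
The paper's own ``proof'' of this proposition is a one-line citation (``in slightly different notations this is Theorem 1.2 of \cite{Negut}''), and since your proposal likewise imports the essential residue/cancellation computation from \cite{Negut}, it is in substance the same approach. Your vertex-operator reading of the integrand (the cross kernel as the normal-ordering scalar of $n$ copies of the one-variable operator) is correct, though the coefficients of $\Gamma_-(z)$ are the functions generated by $\prod_i\frac{1-tx_i/(qz)}{1-x_i/(qz)}$ rather than the Macdonald Pieri functions $g_r$ (which would require the full $q$-Pochhammer kernel); as you defer the combinatorial core to \cite{Negut} anyway, this detail does not affect the conclusion.
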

\begin{proof} In slightly different notations this is \cite[Theorem
1.2]{Negut}.\end{proof}

Next, we set $x_{N+1},x_{N+2},\dots$ equal to $0$ and apply the above operators to
an $N$--variable product function $f(x_1)\cdots f(x_N)$ (which belongs to the space
of symmetric power series in $x_1,\dots,x_N$). Then we get

\begin{multline}
\label{eq_op_finite}
   {\mathfrak D}_n \prod_{i=1}^N f(x_i) = \left(\prod_{i=1}^N f(x_i)\right) \frac{(-1)^{n-1}}{(2\pi\ii)^n } \oint\dots \oint \dfrac{\sum_{i=1}^n
 \frac{z_n}{z_i (t/q)^{n-i}}}{\left(1-\frac{qz_2}{t z_1}\right)\cdots \left(1-\frac{qz_n}{t
 z_{n-1}}\right)}\\ \times \prod_{i<j}
 \frac{\left(1-\frac{z_i}{z_j}\right)\left(1-\frac{tz_i}{qz_j}\right)}{\left(1-t\frac{z_i}{z_j}\right)\left(1-q^{-1}\frac{z_i}{z_j}\right)}
 \left(\prod_{i=1}^n \prod_{a=1}^N \frac{z_i-tq^{-1} x_a}{z_i-q^{-1}x_a}\right)  \prod_{i=1}^n
\frac{f(z_i)}{f(qz_i)} \frac{dz_i}{z_i}.
\end{multline}
In the last formula the contours are \emph{large circles} (this is because the
integrand needs to be decomposable into a symmetric power series in the variables
$x_i$ to justify the computation).

At this point we can pass from formal point of view based on the algebra $\Lambda$
to the analytic one and view $q$, $t$ and $x_i$ as real (or complex) numbers. Our
next step is the following limit transition:
\begin{equation}
\label{eq_limit_transition}
 \eps\to 0,\quad q=\exp(-\eps),\quad t=q^{\theta},\quad z_i=\exp(\eps u_i),\quad x_i=\exp(\eps  y_i),\quad \lambda_i=\eps^{-1} r_i.
\end{equation}
The Macdonald polynomials $P_\lambda$ are shown in \cite{BG_HO} to converge to the
Heckman--Opdam hypergeometric functions $HO_r(y_1,\dots,y_N;\theta)$ in this limit
regime:
\begin{equation}\label{eq_P_limit}
 \lim_{\eps\to 0} P_\lambda(x_1,\dots,x_N;q,t)= HO_r(y_1,\dots,y_N;\theta).
\end{equation}

In order to pass to the limit $\eps\to 0$ in the operators $ {\mathfrak D}_n$ we
need to deform the integration contours so that they enclose the singularities in
$q^{-1}x_i$ but not $0$. Due to the ordering of the contours we should do this one
after another: first deform the $z_1$ contour, then deform the $z_2$ contour, etc.
In principle, in this process we get $2^n$ terms obtained by taking the residues at
$0$ in a subset of variables $z_1,\dots,z_n$.

\begin{lemma}
\label{lemma_residues}
 Only subsets of the form $z_k,z_{k+1},\dots,z_\ell$, $0\le k<\le l \le n$ give non-zero residues in
 \eqref{eq_op_finite}.
\end{lemma}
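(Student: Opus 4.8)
\textit{Proof plan (sketch).} The plan is to follow the prescribed order of contour deformations and to track how the rational prefactor
$$
P(z_1,\dots,z_n)=\frac{\sum_{i=1}^n z_n/(z_i(t/q)^{n-i})}{\prod_{m=2}^n(1-q z_m/(t z_{m-1}))}
$$
of the integrand of \eqref{eq_op_finite} transforms under iterated residues at $z_i=0$. The key preliminary remark is that the remaining factors of that integrand, namely $\prod_{i<j}\frac{(1-z_i/z_j)(1-tz_i/(qz_j))}{(1-tz_i/z_j)(1-z_i/(qz_j))}$, the product $\prod_{i,a}\frac{z_i-tq^{-1}x_a}{z_i-q^{-1}x_a}$, and $\prod_i f(z_i)/f(qz_i)$, have at each hyperplane $z_j=0$ neither a pole nor --- for the $f$ relevant here, where $f(z)/f(qz)$ collapses to a simple rational factor regular and non-zero at the origin --- a zero. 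Hence, apart from the explicit $dz_j/z_j$, only $P$ controls whether a residue at $z_j=0$ is produced and, when it is, only $P$ is modified.

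First I would establish the \emph{collapse identity}. Suppose the prefactor currently carried equals the full $P$ and we are about to treat $z_j$, with $z_1,\dots,z_{j-1}$ already sitting on their small contours and hence inert. Among the summands $z_n/(z_i(t/q)^{n-i})$ only the one with $i=j$ is singular at $z_j=0$, whereas the factor $1/(1-qz_{j+1}/(tz_j))=tz_j/(tz_j-qz_{j+1})$ has a simple zero there; consequently the residue at $z_j=0$ survives, is generically non-zero, and after collecting the powers of $t/q$ it replaces $P$ by
$$
-\frac{z_n}{z_{j+1}(t/q)^{n-j-1}}\cdot\frac{1}{\prod_{m=j+2}^n(1-q z_m/(t z_{m-1}))}
$$
times a constant built from the frozen variables. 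Call this the collapsed prefactor \emph{with head $j+1$}. The identical computation applied to a collapsed prefactor with head $h$ shows that a residue at $z_h$ is again non-zero and yields the collapsed prefactor with head $h+1$; once the head reaches $n$ the collapsed prefactor is identically $-1$, and the residue at $z_n=0$ is simply the (regular, non-zero) value of the rest of the integrand. This proves one half of the lemma: a consecutive block $z_k,z_{k+1},\dots,z_\ell$ of residues does occur, with a generically non-zero coefficient.

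Next I would prove the \emph{killing} statement. Assume the carried prefactor is collapsed with head $h$ and that, in treating $z_h$, we do \emph{not} take the residue (we ``break the run'', leaving $z_h$ on its contour), so the prefactor is unchanged. For any later index $p>h$ at which we try to pick up a residue: the collapsed numerator $z_n/z_h$ contains no pole in $z_p$, while either $p<n$, in which case $1/(1-qz_{p+1}/(tz_p))$ vanishes at $z_p=0$, or $p=n$, in which case the numerator $z_n/z_h$ itself vanishes at $z_n=0$. In both cases the prefactor has an uncancelled zero at $z_p=0$, and since the remaining factors carry no pole there, the residue is zero. Hence, once the run is broken, no further residue contributes, so the set of variables at which a non-zero residue is taken must be a single consecutive block $z_k,\dots,z_\ell$ (the empty block corresponding to the principal ``all contours'' term). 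This is exactly the assertion of the lemma.

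Two points of rigour remain, and the first carries most of the work. (i) The collapse identity has to be verified with the bookkeeping done carefully: that $i=j$ is indeed the only pole-producing summand, that the cascade of $(t/q)^{n-i}$ factors together with the sign recombine exactly into the head-$(j+1)$ form, and that the induction on the head closes. This is a direct but somewhat delicate computation, and it is the real content of the lemma, since the very same mechanism drives both the ``runs continue'' step and, via the simple zero of $1/(1-qz_{p+1}/(tz_p))$, the ``broken runs kill everything'' step. (ii) One must check that, in deforming the contours in the stated order, the only singularities ever crossed are those at $z_i=0$, the points $q^{-1}x_a$ being kept enclosed throughout; this follows from the ordering $|z_1|\ll\dots\ll|z_n|$, which keeps the poles at $z_j=tz_i$ and $z_j=z_i/q$ for $i<j$ (and, for suitable weight parameters, those of $f(z_j)/f(qz_j)$) outside the annulus swept between the initial and final $z_j$ contour, so that the deformation of \eqref{eq_op_finite} genuinely produces only the $2^n$ ``residue-at-$0$ on a subset'' terms to which the reduction above applies.
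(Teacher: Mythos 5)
Your proposal is correct and follows essentially the same route as the paper's own proof: your ``collapse identity'' is exactly the sequential-residue computation leading to the paper's displays \eqref{eq_x4}--\eqref{eq_x5}, and your ``killing'' step is the paper's observation that once the run is broken the integrand has no remaining poles at $0$ (you merely spell out the $p=n$ case, where the zero comes from the factor $z_n$ in the collapsed numerator, and flag the contour-deformation bookkeeping that the paper treats implicitly).
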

\begin{proof}
 Recall that we should compute the residues sequentially, from $z_1$ to $z_n$. Suppose that we
 started from $z_k$. Then the residue is $n-1$ dimensional integral, in which the factors in the
 second line of \eqref{eq_op_finite} are still the same (with additional prefactor $t^N$), while the factor in the first line transforms into
\begin{equation}
\label{eq_x4} (-1)\frac{\frac{z_n}{z_{k+1} (t/q)^{n-k-1}} } {\left(1-\frac{qz_2}{t
z_1}\right)\cdots \left(1-\frac{qz_{k-1}}{t
 z_{k-2}}\right)     \left(1-\frac{qz_{k+2}}{t z_{k+1}}\right)\cdots \left(1-\frac{qz_n}{t
 z_{n-1}}\right)}
\end{equation}
Note that if at the next step we do not take the residue in $z_{k+1}$, then all
further residues in $z_i$, $i>k+1$ will be zero --- the integrand will simply have
no poles at $0$ in these variables. If we take the residue in $z_{k+1}$, then we get
\begin{equation}
\label{eq_x5}(-1)^{2}\frac{\frac{z_n}{z_{k+2} (t/q)^{n-k-2}} }
{\left(1-\frac{qz_2}{t z_1}\right)\cdots \left(1-\frac{qz_{k-1}}{t
 z_{k-2}}\right)     \left(1-\frac{qz_{k+3}}{t z_{k+2}}\right)\cdots \left(1-\frac{qz_n}{t
 z_{n-1}}\right)}
\end{equation} which still has the same form and, thus, we can continue in the same way.
\end{proof}

In particular, if we take the residue at $0$ with respect to all variables
$z_1,\dots,z_n$, then we get
$$
 (-1)^{n-1} t^{Nn}.
$$
Note that $(-1)^{n-1}$ cancels out with the integral prefactor. Let us pass to the
operator
$$
  {\mathfrak D}_n-t^{Nn}
$$
On one hand, this operator is given by the expansion into $n(n+1)/2$ integrals of
various dimensions integrated around $q^{-1}x_i$ but not $0$. On the other hand, its
eigenvalues on Macdonald polynomials in $N$ variables are
$$
 (1-t^n) \left(\sum_{i=1}^N (q^{-\lambda_i} t^{i-1})^n + \sum_{i=N+1}^N (t^{i-1})^n\right) -t^{Nn}=  (1-t^n) \sum_{i=1}^N (q^{-\lambda_i} t^{i-1})^n
$$
The eigenvalues of $\eps^{-1} ({\mathfrak D}_n-t^{Nn})$ converge in the limit regime
\eqref{eq_limit_transition} to
$$
 \theta n \sum_{i=1}^N \exp(n r_i).
$$
Thus, taking into account \eqref{eq_P_limit}, we conclude that integral
representation for $\eps^{-1} ({\mathfrak D}_n-t^{Nn})$ should also converge. The
$n$--dimensional integral here converges to
\begin{multline}
\label{eq_ci_finite_limit}
   \left(\prod_{i=1}^N f(y_i)\right) \frac{(-1)^{n-1}}{(2\pi\ii)^n } \oint\oint \dfrac{n}{\left(u_1 -u_2+1-\theta\right)\cdots \left(u_{n-1}-u_n+1-\theta\right)}\\ \times \prod_{i<j}
 \frac{\left(u_j-u_i\right)\left(u_j-u_i-1+\theta\right)}{\left(u_j-u_i+\theta\right)\left(u_j-u_i-1\right)}
 \left(\prod_{i=1}^n \prod_{a=1}^N \frac{u_i-y_a+\theta-1}{u_i-y_a-1}\right)  \prod_{i=1}^n
\frac{f(u_i)}{f(u_i-1)}du_i,
\end{multline}
where all the contours enclose singularities at $y_a+1$ and
$|u_1|\ll|u_2|\ll\dots\ll|u_n|$.

Note that is was important to have $n-1$ factors in the first line. Indeed, each
such factor produced $\eps^{-1}$. On the other hand, $\eps^n$ was produced by the
change of variables. Together with additional $\eps^{-1}$ in the definition of our
limit transition this gave precisely the constant order as $\eps\to 0$. Now note
that when we computed the residues as in Lemma \ref{lemma_residues}, then when
$1<k<n$, the $m$--dimensional integral would come with less than $m-1$ factors in
the first line. Indeed, this is clearly visible in \eqref{eq_x4}, \eqref{eq_x5}. If
follows that such terms \emph{vanish} in our limit transition. Therefore, only terms
with $k=1$ or $k=n$ survive. A general term is obtained either by taking the residue
in variables
$$
 z_1,\dots,z_{\ell},\quad \ell=1,\dots,n-1
$$
as in Lemma \ref{lemma_residues} and then sending $\eps\to 0$.
 As a result, we get $n-\ell$
dimensional integral
\begin{multline}
\label{eq_ci_finite_limit_2}
   \left(\prod_{i=1}^N f(y_i)\right) \frac{(-1)^{n-1-\ell}}{(2\pi\ii)^{n-\ell} }
   \oint\oint \dfrac{1}{\left(u_{\ell+1} -u_{\ell+2}+1-\theta\right)\cdots \left(u_{n-1}-u_n+1-\theta\right)}\\ \times \prod_{i<j}
 \frac{\left(u_j-u_i\right)\left(u_j-u_i-1+\theta\right)}{\left(u_j-u_i+\theta\right)\left(u_j-u_i-1\right)}
 \left(\prod_{i=\ell+1}^n \prod_{a=1}^N \frac{u_i-y_a+\theta-1}{u_i-y_a-1}\right)  \prod_{i=\ell+1}^n
\frac{f(u_i)}{f(u_i-1)}du_i,
\end{multline}
Or we can take the residue in variables
$$
 z_k,\dots,z_{n},\quad,  n=1,\dots,n-1
$$
as in Lemma \ref{lemma_residues} and then send $\eps\to 0$. Note that under the
identification $k=n+1-\ell$, the integrals have \emph{the same} integrand. However,
the signs appearing when we take residues are different. Namely, when we take
residues in $z_1,\dots,z_\ell$ we get the sign $(-1)^{\ell}$. On the other hand,
when we take residues in $z_{n+1-\ell},\dots,z_n$ the sign is $(-1)^{\ell-1}$, since
we do not get $(-1)$ factor at the very last step. As a conclusion, two such terms
precisely cancel out.

Shifting the variables $u\mapsto u+1$ and dividing by $\theta n$ we write the final
formula.

\begin{proposition} \label{Prop_op_negative} The action of the operator $\mathcal P_n:=\frac{1}{\theta n}\lim_{\eps\to 0} \eps^{-1} ({\mathfrak D}_n-t^{Nn})$
with eigenvalues
$$
 \mathcal P_n HO_r(y_1,\dots,y_N;\theta)=   \sum_{i=1}^N \exp(n r_i) HO_r(y_1,\dots,y_N;\theta),
$$
on a function $f(y_1)\cdots f(y_N)$ can be computed via
\begin{multline} \label{eq_Operator_limit_integral}
\frac{\mathcal P_n \prod_{i=1}^N f(y_i)}{\prod_{i=1}^N f(y_i)} =
\frac{\theta^{-1}}{(2\pi\ii)^n }
\oint\dots\oint \dfrac{1}{\left(u_2 -u_1-1+\theta\right)\cdots \left(u_n-u_{n-1}-1+\theta\right)}\\
\times \prod_{i<j}
 \frac{\left(u_j-u_i\right)\left(u_j-u_i-1+\theta\right)}{\left(u_j-u_i+\theta\right)\left(u_j-u_i-1\right)}
 \left(\prod_{i=1}^n \prod_{a=1}^N \frac{u_i-y_a+\theta}{u_i-y_a}\right)  \prod_{i=1}^n
\frac{f(u_i+1)}{f(u_i)}du_i
\end{multline}
where all the contours enclose singularities at $y_a$, $f(u+1)/f(u)$ is analytic
inside the contours and $|u_1|\ll|u_2|\ll\dots\ll|u_n|$.
\end{proposition}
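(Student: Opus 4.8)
The plan is to obtain Proposition~\ref{Prop_op_negative} by degenerating Negut's Macdonald eigenoperator identity (Proposition~\ref{Proposition_correct_definition_2}), carrying the explicit contour integral through the limit. First I would restrict $\mathfrak{D}_n$ to $N$ variables (set $x_{N+1}=x_{N+2}=\dots=0$) and evaluate its action on a product function $\prod_{a=1}^N f(x_a)$, which produces the identity~\eqref{eq_op_finite}. This is a direct computation with the two exponentials in~\eqref{eq_cor_def_d}: the first, built from the multiplication operators $p_k$, multiplies by a Cauchy-type kernel and supplies the factor $\prod_{i}\prod_{a}(z_i-tq^{-1}x_a)/(z_i-q^{-1}x_a)$; the second, built from the derivations $\partial/\partial p_k$, adjoins the variables $z_i$ and deletes the variables $qz_i$ from the argument, turning $\prod_a f(x_a)$ into $\prod_a f(x_a)\prod_i f(z_i)/f(qz_i)$. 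One keeps the contours large here so the relevant geometric series converge.

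Next I would carry out the degeneration~\eqref{eq_limit_transition}: $q=e^{-\eps}$, $t=q^\theta$, $z_i=e^{\eps u_i}$, $x_a=e^{\eps y_a}$, $\lambda_i=\eps^{-1}r_i$, $\eps\to0$. On the eigenfunction side this is governed by the known convergence $P_\lambda\to HO_r$ of~\cite{BG_HO}. The operator $\mathfrak{D}_n$ has no finite limit: its Macdonald eigenvalue $(1-t^n)\sum_i(q^{-\lambda_i}t^{i-1})^n$ has a constant-term piece equal to $t^{Nn}$, which is exactly the residue obtained by collapsing all the $z_i$-contours onto $0$. So one first passes to $\mathfrak{D}_n-t^{Nn}$, whose eigenvalue on the $N$-variable Macdonald polynomials is $(1-t^n)\sum_{i=1}^N(q^{-\lambda_i}t^{i-1})^n$, and then rescales by $\eps^{-1}$: since $1-t^n\sim\eps\theta n$, $q^{-\lambda_i}\to e^{r_i}$ and $t^{i-1}\to1$, the eigenvalue of $\eps^{-1}(\mathfrak{D}_n-t^{Nn})$ converges to $\theta n\sum_{i=1}^N e^{nr_i}$, which gives $\mathcal{P}_n$ the stated eigenvalue $\sum_i e^{nr_i}$ on $HO_r$.

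The substantive step is passing to the limit in the integral representation. To get a convergent integrand one deforms the $z_i$-contours --- sequentially, in the order $z_1,\dots,z_n$, to preserve $|z_1|\ll\dots\ll|z_n|$ --- so that each encloses the poles at $q^{-1}x_a$ but not $0$; then each cross factor in~\eqref{eq_op_finite} is a ratio of two $O(\eps)$ quantities and tends to the corresponding factor in~\eqref{eq_ci_finite_limit}, whereas each ``chain'' factor $(1-qz_{i+1}/(tz_i))^{-1}$ blows up like $\eps^{-1}$. Expanding the deformed integral produces residue terms indexed by the set of variables whose contour was collapsed to $0$; Lemma~\ref{lemma_residues} restricts these to contiguous blocks, and then an $\eps$-power count finishes the job. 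An $m$-dimensional residue integral comes with $\eps^m$ from the change of variables and the overall $\eps^{-1}$, so it survives only if it carries exactly $m-1$ chain factors: the interior blocks (those not reaching an end of the chain) carry strictly fewer and vanish, while the two families of end blocks, $\{z_1,\dots,z_\ell\}$ and $\{z_{n+1-\ell},\dots,z_n\}$, yield the same integrand with opposite signs --- an extra $(-1)$ arising at the last residue step in one case but not the other --- and cancel in pairs. What is left is the $n$-dimensional integral~\eqref{eq_ci_finite_limit}; relabelling $u_i\mapsto u_i+1$ and dividing by $\theta n$ gives the stated formula, and the hypotheses that the contours enclose the $y_a$ and that $f(u+1)/f(u)$ is analytic inside them are precisely what prevents the deformation from sweeping up spurious poles.

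I expect the main obstacle to be this last step: justifying, uniformly in $\eps$, that the contour deformation is legitimate and that the $\eps\to0$ limit may be taken under the integral, and getting the sign bookkeeping of the end-block cancellation exactly right. One also needs mild regularity on $f$ (so that $\prod_a f(x_a)$ lies in the span of the $HO_r$ and both sides of~\eqref{eq_op_finite} admit $\eps\to0$ limits); the stated analyticity of $f(u+1)/f(u)$ inside the contours is exactly that condition.
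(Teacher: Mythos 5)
Your proposal is correct and follows essentially the same route as the paper's proof: degenerating Negut's operator ${\mathfrak D}_n$ via \eqref{eq_op_finite} and \eqref{eq_limit_transition}, subtracting $t^{Nn}$, deforming contours sequentially with the contiguous-block residue Lemma \ref{lemma_residues}, using the $\eps$-power count of chain factors to kill interior blocks and cancel the two families of end blocks by sign, and finally shifting $u\mapsto u+1$ and dividing by $\theta n$. No substantive difference from the argument given in Appendix A.
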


Further, note that in the operator ${\mathfrak D}_n-t^{Nn}$ we can freely invert the
variables $(q,t)\mapsto (q^{-1},t^{-1})$. Since, the Macdonald polynomials are
invariant under this change (cf.\ \cite{Macdonald}), they are still eigenfunctions.
Moreover, since we never used the fact $0<t,q<1$ in the proofs, all the integral
representations are still valid. Thus, we arrive at:
\begin{proposition} \label{Prop_op_positive} The action of the operator $\widehat {\mathcal P}_n$ with eigenvalues
$$
\widehat {\mathcal P}_n HO_r(y_1,\dots,y_N;\theta)=   \sum_{i=1}^N \exp(-n r_i)
HO_r(y_1,\dots,y_N;\theta),
$$
on a function $f(y_1)\cdots f(y_N)$ can be computed via
\begin{multline*}
\frac{\mathcal P_n \prod_{i=1}^N f(y_i)}{\prod_{i=1}^N f(y_i)} =
\frac{(-\theta)^{-1}}{(2\pi\ii)^n }
\oint\dots\oint \dfrac{1}{\left(u_2-u_1+1-\theta\right)\cdots \left(u_n-u_{n-1}+1-\theta\right)}\\
\times \prod_{i<j}
 \frac{\left(u_j-u_i\right)\left(u_j-u_i+1-\theta\right)}{\left(u_j-u_i-\theta\right)\left(u_j-u_i+1\right)}
 \left(\prod_{i=1}^n \prod_{a=1}^N \frac{u_i-y_a-\theta}{u_i-y_a}\right)  \prod_{i=1}^n
\frac{f(u_i-1)}{f(u_i)}du_i
\end{multline*}
where all the contours enclose singularities at $y_a$, $f(u-1)/f(u)$ is analytic
inside the contours and $|u_1|\ll|u_2|\ll\dots\ll|u_n|$.
\end{proposition}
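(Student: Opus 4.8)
The plan is to obtain Proposition~\ref{Prop_op_positive} by running the \emph{entire} argument of Proposition~\ref{Prop_op_negative} after the substitution $(q,t)\mapsto(q^{-1},t^{-1})$ applied to Proposition~\ref{Proposition_correct_definition_2} and to formula~\eqref{eq_op_finite}. The only structural input is that Macdonald polynomials satisfy $P_\lambda(\cdot;q,t)=P_\lambda(\cdot;q^{-1},t^{-1})$ (see \cite{Macdonald}), so they remain eigenfunctions of the substituted operator; everything else is the same chain of residue computations and the same limit transition.

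Concretely, I would proceed as follows. \emph{(i)} Let $\widetilde{\mathfrak D}_n$ denote the operator obtained from \eqref{eq_cor_def_d} by replacing $q$ by $q^{-1}$ and $t$ by $t^{-1}$. By the $(q,t)$-inversion symmetry of $P_\lambda$ and Proposition~\ref{Proposition_correct_definition_2}, $\widetilde{\mathfrak D}_n P_\lambda(\cdot;q,t)=(1-t^{-n})\sum_{i\ge1}(q^{\lambda_i}t^{-(i-1)})^n P_\lambda(\cdot;q,t)$; restricting to $N$ variables and summing the geometric tail of the eigenvalue gives $(\widetilde{\mathfrak D}_n-t^{-Nn})P_\lambda=(1-t^{-n})\sum_{i=1}^N(q^{\lambda_i}t^{-(i-1)})^n\,P_\lambda$, and the matching integral identity is just \eqref{eq_op_finite} with $q,t$ inverted. \emph{(ii)} Apply the \emph{same} limit transition \eqref{eq_limit_transition} and the convergence \eqref{eq_P_limit}: now $q^{\lambda_i}\to e^{-r_i}$ and $1-t^{-n}\sim-\theta n\,\eps$, so the rescaled eigenvalue of $\eps^{-1}(\widetilde{\mathfrak D}_n-t^{-Nn})$ tends to $-\theta n\sum_{i=1}^N e^{-nr_i}$; hence one sets $\widehat{\mathcal P}_n:=(-\theta n)^{-1}\lim_{\eps\to0}\eps^{-1}(\widetilde{\mathfrak D}_n-t^{-Nn})$, which has the stated eigenvalue $\sum_{i=1}^N e^{-nr_i}$ on $HO_r$. \emph{(iii)} Repeat the contour deformation together with Lemma~\ref{lemma_residues} and the pairwise cancellation of all the ``prefix/suffix'' residue terms verbatim; only the no-residue $n$-dimensional integral survives, exactly as in \eqref{eq_ci_finite_limit}. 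Passing to the limit (each factor $1-\tfrac{t z_i}{q z_j}$ and its companions degenerate to $u_j-u_i+1-\theta$, $u_j-u_i-\theta$, $u_j-u_i+1$, and $\tfrac{z_i-t^{-1}qx_a}{z_i-qx_a}\to\tfrac{u_i-y_a+1-\theta}{u_i-y_a+1}$) and then shifting $u_i\mapsto u_i-1$ (which moves the enclosed poles from $qx_a=e^{\eps(y_a-1)}$ to $y_a$ and turns $f(z_i)/f(q^{-1}z_i)$ into $f(u_i-1)/f(u_i)$) produces precisely the integrand of Proposition~\ref{Prop_op_positive}.

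I do not expect a genuine obstacle here --- this is a symmetry argument --- but two points require care. First, the sign bookkeeping: one must track the $(-1)$'s through the eigenvalue limit (the $-\theta n$ above) and through the residue cancellations so that the overall prefactor emerges as $(-\theta)^{-1}$ rather than $\theta^{-1}$, the first-line denominators acquire $+1-\theta$ rather than $-1+\theta$, and the contours end up enclosing $y_a$ with $f(u-1)/f(u)$ analytic inside. Second, one must confirm the claim implicitly used in \eqref{eq_op_finite} and in the proof of Proposition~\ref{Prop_op_negative} that none of the contour deformations or the residue analysis of Lemma~\ref{lemma_residues} relies on $0<q,t<1$: each such step should be read as an identity between integrals of rational functions around prescribed poles, hence remains valid after $(q,t)\mapsto(q^{-1},t^{-1})$. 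Granting this, the derivation of Proposition~\ref{Prop_op_negative} transports word for word and yields Proposition~\ref{Prop_op_positive}.
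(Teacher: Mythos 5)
Your proposal is correct and takes essentially the same route as the paper: Borodin--Gorin likewise obtain this proposition by applying the inversion $(q,t)\mapsto(q^{-1},t^{-1})$ to ${\mathfrak D}_n-t^{Nn}$, invoking the invariance of Macdonald polynomials under this substitution, and observing that none of the residue computations or the limit transition ever used $0<q,t<1$, so all integral representations persist. Your explicit redoing of the eigenvalue rescaling, the sign bookkeeping, and the shift $u_i\mapsto u_i-1$ merely spells out what the paper leaves as a one-line remark.
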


Now we are ready to prove the contour integral formulas for the Jacobi ensemble.

\begin{proof}[Proof of Theorem \ref{Theorem_negative_moment}] This is essentially a
corollary of Proposition \ref{Prop_op_negative} and the results of \cite{BG_HO} and
below we sketch the proof omitting some technical details, cf.\ \cite[Section
2.3]{BG_HO} for similar arguments.

The Cauchy identity for the Macdonald polynomials yields for $M\ge N$
\begin{equation} \label{eq_Cauchy}
 \sum_{\lambda=(\lambda_1\ge\lambda_2\ge\dots\ge\lambda_N\ge 0)}
 \frac{P_\lambda(x_1,\dots,x_N;q,t)
 P_\lambda(\hat x_1,\dots,\hat x_M;q,t)}{\langle P_\lambda,P_\lambda\rangle}=\prod_{i=1}^N
 \prod_{j=1}^{M} \frac{ (tx_i\hat x_j;q)_\infty}{(x_i \hat x_j;q)_\infty},
\end{equation}
where $(a;q)_\infty$ is the $q$--Pochhammer symbol,
$(a;q)_\infty=\prod_{\ell=0}^{\infty} (1-a q^{\ell})$, and $\langle
P_\lambda,P_\lambda\rangle$ are certain explicit constants, which can be found e.g.\
in \cite[Chapter VI]{Macdonald}. We further do the following three steps:
\begin{enumerate}
 \item Apply $\eps^{-1} ({\mathfrak D}_k-t^{Nk})$ to both sides of
 \eqref{eq_Cauchy} and then divide them by $\prod_{i=1}^N
 \prod_{j=1}^{M} \frac{ (tx_i\hat x_j;q)_\infty}{(x_i \hat x_j;q)_\infty}$.
 \item Set $x_i=t^{i-1}$ and $\hat x_j = t^{\alpha} t^{j-1}$ and send $\eps\to 0$ in the limit regime \eqref{eq_limit_transition}.
 \item Evaluate the $\eps\to 0$ limit of both sides using Proposition
 \ref{Prop_op_negative}.
\end{enumerate}
We claim that the resulting identity is precisely the statement of Theorem
\ref{Theorem_negative_moment}. Indeed, it is shown in \cite[Theorem 2.8]{BG_HO} that
as $\eps\to 0$,
$$
\left( \prod_{i=1}^N
 \prod_{j=1}^{M} \frac{ (tx_i\hat x_j;q)_\infty}{(x_i \hat
 x_j;q)_\infty}\right)^{-1}
\frac{P_\lambda(x_1,\dots,x_N;q,t)
 P_\lambda(\hat x_1,\dots,\hat x_M;q,t)}{\langle P_\lambda,P_\lambda\rangle}
 $$
 converges to  the density of the Jacobi ensemble
 \eqref{eq_general_beta_Jacobi} in variables $\r_i=\exp(-r_i)$. Together with the eigenrelation for $\mathcal P_n$
 of Proposition \ref{Prop_op_negative} this implies that the left--hand side
 of the identity is the expectation of $\sum_{i=1}^N (\r_i)^{-k}$. For the right--hand side
 we use the convergence of the $q$--Pochhammer symbols to the Gamma function as
 $q\to 1$ to get the expression of the form \eqref{eq_Operator_limit_integral} with
 $$
  f(y)=\frac{\Gamma(-y+\theta \alpha)}{\Gamma(-y+\theta \alpha +\theta M)},
 $$
 which is precisely \eqref{eq_expectation_of_moment_negative}.
\end{proof}

Theorem \ref{Theorem_positive_moment} is proven in the same way as Theorem
\ref{Theorem_negative_moment}, but using Proposition \ref{Prop_op_positive} instead
of Proposition \ref{Prop_op_negative}.

\section{ Appendix B: calculation of contour integrals and more results for moments}

In this Appendix we give some details of the contour integral calculations,
as well as some additional explicit results. All formula presented here
have also been obtained our expressions of moments in terms of partitions.

\subsection{Second moment} \label{app:second}

\noindent

Consider
(\ref{contourbeta}) for $k=2$. One first perform the contour integral over $u_1$
on a small circle around $0$.
The poles in $u_1$ are at $u_1=0,2+a+b+2 \beta^2,1+u_2,\beta^2+u_2$.
Because of condition $C_1$ the last two poles are not encountered.
Because of $C_2$ neither is the second pole. Hence only the pole
in $u_1=0$ is picked up and from its residue one gets the remaining integral:
\bea
\fl <y^2>_{\beta,a,b,n=0} = \frac{\left(a+\beta ^2+1\right)}{ \left(a+2 \beta ^2+b+2\right)}
\int \frac{du_2}{2 i \pi}  \frac{u_2 \left(u_2- (a+\beta^2+1) \right)}
{\left(u_2+1\right)  \left(u_2+ \beta^2\right) \left(u_2 - (a+2 \beta ^2+2) \right)}
\eea
From the prescription $C_2$ only the poles at $u_2=-1$ and $u_2=-\beta^2$
contribute and, summing their residues one obtains (\ref{y22}) in the text.

For completeness we now give the complete $n$ dependence of several
results in the text.

For the LCP with edge charges in the case $b=a$ one finds:
\bea
\fl && < y^2 >_{\beta,a,a,n} =
\frac{2 a^2+a \left(\beta ^2 (6-5 n)+6\right)+\beta ^4 (n-1) (3 n-4)+\beta ^2 (9-7
   n)+4}{2 \left(2 a+\beta ^2 (3-2 n)+2\right) \left(2 a-2 \beta ^2 (n-1)+3\right)}
\eea

For the fBm, with $a= 2 n \beta^2$, $b=0$ one has
\bea \label{fBmn}
< y^2 >_{\beta,n} =
\frac{\left(\beta ^2 (n+1)+1\right) \left(\beta ^4 \left(n^2+n+4\right)+\beta ^2
   (n+9)+4\right)}{2 \left(6 \beta ^6+19 \beta ^4+19 \beta ^2+6\right)}
\eea

For the GUE-CP, setting $a=b=\frac{1+\beta^2}{2}$, one finds:
\bea \label{GUEn}
<y^2>_{\beta,n}  = \frac{\beta ^4 (2 n-3) (3 n-5)+\beta ^2 (32-19 n)+15}{4 \left(2 \beta ^2 (n-2)-3\right)
   \left(\beta ^2 (2 n-3)-4\right)}
\eea

\subsection{Third moment}

Consider
(\ref{contourbeta}) for $k=3$.
We take successively the residue at $u_1=0$, then
at $u_2=-1,-\beta^2$, which produces two terms,
then at $u_3=-2,-\beta^2$ for the first term, and
at $u_3=-1,- 2 \beta^2$ for the second term.
The final result for arbitrary $a,b,\beta$ is too heavy to
reproduce here. We give here:

\begin{itemize}

\item the third cumulant for arbitrary $a,b$, which is slightly simpler:
\bea \label{3cum}
\fl && \overline{ \bigg( y_m - \langle y_m \rangle \bigg)^3 }
=  -\frac{(a+2) (b+2) (a-b) \left(7 a^2+2 a (7 b+34)+b (7 b+68)+164\right)}{(a+b+4)^3 (a+b+5)^2 (a+b+6)^2} \nonumber  \\
\fl &&
\eea
and of course vanishes in the case $a=b$. The associated skewness is
\bea \label{Skew1}
Sk= \frac{(b-a) (a+b+5) \left(7 a^2+2 a (7 b+34)+b (7 b+68)+164\right)}{\sqrt{a+2} \sqrt{b+2}
   (a+b+6)^2 (2 a+2 b+9)^{3/2}}
\eea
to be compared with the skewness for the measure $y^a (1-y)^b$, which is
\bea \label{Skew2}
Sk_0 =-\frac{2 (a-b) \sqrt{a+b+3}}{\sqrt{a+1} \sqrt{b+1} (a+b+4)}
\eea

\item we have checked that the result for the third moment for $a=b$,
and arbitrary $\beta,n$
\bea
< \bigg( y - \frac{1}{2} \bigg)^3 >_{\beta,a,a,n} = 0
\eea
is consistent with the symmetry $y \to 1-y$.

\end{itemize}

\subsection{Fourth moment}
\label{app:4}

Consider now
(\ref{contourbeta}) for $k=4$.
The sequence of poles is the same as for $k=3$
for $u_1,u_2,u_3$,
except for the $u_4$ integration which contains now
four terms and picks poles at $u_4=-3,-\beta^2$;
$u_4=-2,-2 \beta^2,-1-\beta^2$ (twice) and $u_4=-1,-3 \beta^2$.
Clearly this is a simple regular structure which carries on to
higher moments. While the pole structure is simple, we were not
able to find a systematics for the residue valid to any order.

Again, the final result for arbitrary $a,b,\beta$ is too heavy to
reproduce here. We give:

\begin{itemize}

\item the fourth moment, cumulant and kurtosis for the GUE-CP associated
statistical model with $b=a=\frac{1+\beta^2}{2}$,
at arbitrary $\beta \leq 1$:
\bea
\fl && \overline{ <y^4>_\beta } = \frac{252 \beta ^8+1195 \beta ^6+1918 \beta ^4+1195
   \beta ^2+252}{64 \left(\beta ^2+2\right) \left(2
   \beta ^2+1\right) \left(3 \beta ^2+4\right)
   \left(4 \beta ^2+3\right)}  \\
\fl   && \overline{ <x^4>_\beta}  =
  \frac{12 \beta ^8+59 \beta ^6+98 \beta ^4+59 \beta
   ^2+12}{4 \left(\beta ^2+2\right) \left(2 \beta
   ^2+1\right) \left(3 \beta ^2+4\right) \left(4
   \beta ^2+3\right)}   \\
\fl    && \overline{<x^4>_\beta} - 3 \overline{<x^2>_\beta }^2 = -\frac{72 \beta ^{12}+540 \beta ^{10}+1549 \beta
   ^8+2170 \beta ^6+1549 \beta ^4+540 \beta ^2+72}{4
   \left(\beta ^2+2\right) \left(2 \beta ^2+1\right)
   \left(3 \beta ^2+4\right)^2 \left(4 \beta
   ^2+3\right)^2} \nonumber  \\
\fl    && {\rm Ku}  = -\frac{72 \beta ^{12}+540 \beta ^{10}+1549 \beta
   ^8+2170 \beta ^6+1549 \beta ^4+540 \beta ^2+72}{4
   \left(\beta ^2+2\right) \left(2 \beta ^2+1\right)
   \left(3 \beta ^4+7 \beta ^2+3\right)^2}
   \eea
Setting $\beta=1$, this leads to the predictions given in the text for the maximum of the
GUE-CP.\\

\item the fourth moment and cumulants for the statistical model associated to the
fBm $b=a=0$
at arbitrary $\beta \leq 1$:
\bea
\fl && \overline{<y^4>_\beta} = \frac{72 \beta ^8+382 \beta ^6+647 \beta ^4+382 \beta ^2+72}{6
   \left(2 \beta ^2+3\right) \left(2 \beta ^2+5\right) \left(3
   \beta ^2+2\right) \left(5 \beta ^2+2\right)}
\eea
and the fourth cumulant:
\bea
\fl && \overline{\langle \bigg( y - \frac{1}{2} \bigg)^4 \rangle_\beta} - 3 \overline{\langle \bigg( y - \frac{1}{2} \bigg)^2 \rangle_\beta}^2 \\
\fl && = -\frac{72 \beta ^{12}+588 \beta ^{10}+1802 \beta ^8+2599 \beta
   ^6+1802 \beta ^4+588 \beta ^2+72}{24 \left(2 \beta
   ^2+3\right)^2 \left(2 \beta ^2+5\right) \left(3 \beta
   ^2+2\right)^2 \left(5 \beta ^2+2\right)}
\eea
Setting $\beta=1$, this leads to the predictions for the minimum of the
fBm given in the text.

\end{itemize}

\subsection{Second negative moment}
\label{app:2neg}

In Eq. (\ref{contourneg}) for $k=2$, the $u_1$ integration gives again the
residue at $u_1=0$, then the $u_2$ integration picks two poles
at $u_2=1$ and $u_2=\beta^2$. Since one must avoids the pole
at $u_2=a$ we need the condition $a>\max(1,\beta^2)$ for the
existence of the moment, in which case we find:
\bea
\fl && < y^{-2}>_{\beta,a,b,n}  = \frac{\left(a+\beta ^2+b+\beta ^2 (-n)+1\right)
   \left(a^2-\beta ^2 n \left(a+\beta ^2+b+1\right)+a b+\beta
   ^2+\beta ^4 n^2\right)}{(a-1) a \left(a-\beta ^2\right)} \nonumber
\eea
leading, for $n=0$, to the expression given in the text.  We have checked that
the same expression is obtained from the formula (\ref{momentsJ1})-(\ref{momentsJ2neg}).

\section{Appendix C: numerical values of higher moments}

For completeness, we display values of higher moments for
two of our examples.

\subsection{fBm0}

\noindent

Let us give the list of even moments $\overline{y_m^k}$, $k=6,8,..14$:
\bea
\fl \{ \frac{100691}{648270}, \frac{774289013}{6275253600},\frac{130667513591}{12726214300
   80},\frac{3027227918327}{34360778612160},\frac{13262063040175909}{171651723898374720} \}
\eea
and a longer list of numerical values for $\overline{y_m^k}$, $k=6,8,..20$:
\bea
\fl \{0.155323,0.123388,0.102676,0.0881013,0.0772615,0.0688694,0.0621715,0.0566963\} \nonumber
\eea
Taking the ratio with $1/(1+k)$, the moments of the uniform distribution, we obtain
\bea
\fl \{1.08726, 1.11049, 1.12943, 1.14532, 1.15892, 1.17078, 1.18126, \
1.19062 \}
\eea
so they decay slightly slower, meaning more weight near the edges.

The cumulants
$\overline{y_m^k}^c$, $k=4,6,..12$ are given here, together with their numerical value:
\bea
\fl \left\{-\frac{7523}{735000},\frac{4426903}{810337500},-\frac{125514889189}{19610167500000
   },\frac{128185912543691}{9885864269531250},-\frac{57847493772231002501}{14386898271448
   82812500}\right\} \nonumber \\
\fl \{-0.0102354, 0.00546304, -0.0064005, 0.0129666, -0.0402085 \}
\eea
as well as the ratio to the corresponding cumulants for the uniform distribution
\bea
\{1.22824, 1.37669, 1.53612, 1.71159, 1.90626 \}
\eea
showing again a steady growth.

\subsection{GUE-CP}

Let us give the list of even moments $\overline{y_m^k}$, $k=6,8,..14$:
\bea
\fl \left\{\frac{731327}{6586272},\frac{61661759}{772972200},\frac{31888748599}{523765962720}
   ,\frac{10558018750567}{218042523218520},\frac{2969274186629889449}{7482687918209261257
   5}
   \right\}
\eea
and a longer list of numerical values for $\overline{y_m^k}$, $k=6,8,..20$:
\bea
\fl \{0.111038,0.0797723,0.0608836,0.0484218,0.0396819,0.0332698,0.0283998,0.024598\}
 \nonumber
\eea
Taking the ratio with the moments of the semi-circle distribution $\rho(x)$, we obtain
\bea
\fl {1.06017, 1.07527, 1.08599, 1.09353, 1.09873, 1.10219, 1.10432, \
1.10543}
\eea
which grow but seem to saturate.

   The cumulants
$\overline{y_m^k}^c$, $k=4,6,..12$ are given here, together with their numerical value:
\bea
\fl && \{-\frac{541}{115248},\frac{5665234}{3459233547},-\frac{33307238400767}{261905490310
   46400},\frac{6658506099368911}{3882094130126852640},\\
\fl    && -\frac{122275968148461510151943659}{34483421254841283626472130560}\}
  \nonumber \\
\fl && {-0.00469422, 0.00163771, -0.00127173, 0.00171518, -0.00354593}
\eea
as well as the ratio to the corresponding cumulants for the semi-circle distribution
\bea
\{1.20172, 1.34162, 1.48828, 1.64698, 1.82096\}
\eea
which reveal some difference between the two distributions.

\noindent

\section{ Appendix D: Normalization of Jack polynomials}
\label{app:norm}

Recalling that $\kappa=1/\alpha$ and $k=\sum_{i=1}^{\ell(\lambda)} \lambda_i$
 one finds
that the function defined in the text in (\ref{cdef}) takes the explicit form, in the two (dual)
cases $t=1$ and $t=\alpha$:
\bea \label{c1}
\fl &&
c(\lambda,\alpha,1) = \alpha^k
\prod_{i=1}^{\ell(\lambda)} (\kappa(\ell(\lambda)-i+1))_{\lambda_i}  \prod_{1 \leq i < j \leq \ell(\lambda)}
\frac{(\kappa(j-i))_{\lambda_i-\lambda_j} }{(\kappa(j-i+1))_{\lambda_i-\lambda_j} }
\eea
and also
\bea
\fl && \label{c2}
c(\lambda,\alpha,\alpha) = \alpha^k
\prod_{i=1}^{\ell(\lambda)} (\kappa(\ell(\lambda)-i)+1)_{\lambda_i}  \prod_{1 \leq i < j \leq \ell(\lambda)}
\frac{(\kappa(j-i-1)+1)_{\lambda_i-\lambda_j} }{(\kappa(j-i)+1)_{\lambda_i-\lambda_j} }
\eea
The product of these two factors being equal to the square of the norm of the Jack polynomial
$J^{(\alpha)}_\lambda$.

 \section{  Appendix E: Averages over Jacobi measure}

 \label{app:jacobi}

Recalling the definition of the Jacobi ensemble average (\ref{av})
and taking the ratio of (\ref{MKI}) to (\ref{MKIb}), we obtain the average of the
$P^{(1/\kappa)}({\bf y})$ polynomial. We want to cancel
common factors and reorder to remove the $n$ dependence
from the bounds on the product, and make it more explicit. Using the Pochhammer symbols, one obtains
\bea
 && \left\langle P^{1/\kappa}_{\lambda}({\bf y})\right\rangle_J=\prod_{i=1}^{\ell(\lambda)} \frac{(a+1+ \kappa(n-i))_{\lambda_i} }{
  (a+b+2+\kappa (2 n - i -1))_{\lambda_i} } \\
  \times
&&  \prod_{1 \leq i < j \leq \ell(\lambda)} \frac{(\kappa(j-i+1))_{\lambda_i-\lambda_j} }{ (\kappa(j-i))_{\lambda_i-\lambda_j} }
   \prod_{1 \leq i \leq \ell(\lambda) <j \leq n} \frac{(\kappa(j-i+1))_{\lambda_i} }{ (\kappa(j-i))_{\lambda_i} }
\eea
Clearly the last term can be rewritten as
\bea
&& \left\langle P^{1/\kappa}_{\lambda}({\bf y})\right\rangle_J=\prod_{i=1}^{\ell(\lambda)} \frac{(a+1+ \kappa(n-i))_{\lambda_i} }{
  (a+b+2+\kappa (2 n - i -1))_{\lambda_i} } \\
  \times
&&  \prod_{1 \leq i < j \leq \ell(\lambda)} \frac{(\kappa(j-i+1))_{\lambda_i-\lambda_j} }{ (\kappa(j-i))_{\lambda_i-\lambda_j} }
   \prod_{1 \leq i \leq \ell(\lambda)} \frac{(\kappa(n-i+1))_{\lambda_i} }{ (\kappa(\ell(\lambda)+1-i))_{\lambda_i} }
\eea
Using the identity (\ref{c1}) this simplifies into:
\bea
\fl && \left\langle P^{1/\kappa}_{\lambda}({\bf y})\right\rangle_J= \frac{\alpha^k}{c(\lambda,\alpha,1) }
\prod_{i=1}^{\ell(\lambda)} \frac{(a+1+ \kappa(n-i))_{\lambda_i} }{
  (a+b+2+\kappa (2 n - i -1))_{\lambda_i} }  (\kappa(n-i+1))_{\lambda_i}
 \eea
 which using the relation (\ref{relJP}) between the different Jack polynomials,
 gives the formula (\ref{avJ}) in the text.

\section{  Appendix F: Remark on moment formula}
\label{app:remark}

It is interesting to note that the formula (\ref{res1})-(\ref{res2}) can be rewritten as:
\bea
&& p_{(k)}(y)  = k \alpha \lim_{p \to 0}
\sum_{\lambda, |\lambda|=k} \frac{J_\lambda^{(\alpha)}(1_p)  J^{(\alpha)}_\lambda(y)}{
< J_\lambda^{(\alpha)} , J_\lambda^{(\alpha)}>}
\eea
using that $J_\lambda^{(\alpha)}(1_p) = \prod_{s \in \lambda} (p - l'_\lambda(s)+ \alpha a'_\lambda(s) )$
see Theorem 5.4 in \cite{Stanley} and (10.25) in \cite{Macdonald}. This may be compared to
the Cauchy identity \cite{Stanley}
\bea
 \sum_\lambda q^{|\lambda|} \frac{J_\lambda^{(\alpha)}({\bf y})  J^{(\alpha)}_\lambda({\bf x})}{
< J_\lambda , J_\lambda>}  = \prod_{i,j} (1- q x_i y_j)^{-1/\alpha}
\eea
where the sum is over all partitions. Another important identity is based on the so-called binomial
formula \cite{Forbook}, and is
quoted and used in \cite{MR15}. It reads
\bea
 \sum_\lambda q^{|\lambda|}
 \prod_{s \in \lambda} (a \alpha - l'_\lambda(s)+ \alpha a'_\lambda(s) )
   \frac{J^{(\alpha)}_\lambda({\bf x})}{ < J_\lambda , J_\lambda>}  = \prod_{i=1}^n (1- q x_i)^{-a}
\eea
and agrees with Cauchy formula setting $a= n/\alpha$ and all $y_i=1$.

Finally let us note an alternative way to recover the moment formula (\ref{momentsJ1})-(\ref{momentsJ2}),
starting from Equation (2.11)-(2.12) in \cite{Olshanskiwz}
\footnote{we thank A. Borodin for pointing out this reference and useful comments
about this formula.} and performing
some manipulations. One first sets
$w=0$ which selects positive signatures, i.e. usual partitions $\lambda$. Next, one applies the relation to $n$ variables $u_i$ in $[0,1]$ rather
than on the unit circle, and one applies the identities for $n < \ell(\lambda)$ (
$\ell(\lambda)$ is called $N$ there), i.e. we set the remaining variables to zero.
Next, one transforms $u_i \to q u_i$ and use homogeneity of Macdonald polynomials.
One further expands at small $z$ and consider the $O(z)$ term, which is
then averaged over the Jacobi measure using the Kadell integral.
As we have checked explicitly, this leads, after some algebra,
to (\ref{momentsJ1})-(\ref{momentsJ2}).

\section{ Appendix G: Distribution of the value of the minimum}
\label{app:min}

It is useful to recall the analysis of \cite{FLDR09} for the PDF of
the value of the maximum, but in a much more
concise form, and further elaborate to the present cases. We will
not attempt at rigor and refer the reader to
\cite{Ostrovsky1,Ostrovsky2,Ostrovsky3,Ostrovsky4}
for steps in that direction.

\subsection{Main result}

Let us first present the quick and dirty version, directly at $\beta=1$
and give later a better justification starting from $\beta<1$ and using duality.
The positive integer moments of the reduced partition sum $z_\beta=\Gamma(1-\beta^2) Z_\beta$
of the model (\ref{modelab}) on the interval
are given by the Selberg integral (\ref{MKIb})
\bea
\overline{z_\beta^n} = Sl_n(\kappa,a,b)|_{\kappa=-\beta^2}
\eea
which is Eq. (7) in \cite{FLDR09}.
Let us use the identity
\bea \label{barnes1}
\prod_{j=1}^n \Gamma(z-j) = \frac{G(z)}{G(z-n)}
\eea
in terms of the Barnes function $G(x)$, valid for positive integer $n$ and
complex $z$. Brutally replace in the Selberg integral setting $\beta=1$.
For definiteness the argument in the last $\Gamma$-function in (\ref{MKIb}) may be slighlty shifted
before the replacement, and taken back to zero afterwards. This leads to
\bea
\fl && \overline{z_1^n} = \overline{e^{- n f_1}} = S(n)=\frac{G(1) G(2+a) G(2+b) G(4+a+b-2 n)}{G(1-n)
G(2+a-n) G(2+b-n) G(4+a+b-n)} \label{f1}
\eea
which is now continued to complex $n$, with $Re(n)<1$. Here $a,b$ denote the
value of the (possibly temperature-dependent) parameters at $\beta=1$.
Note that $a,b$ may also depend on $n$, e.g. see below the fBm example.

Freezing (see below) states that the PDF of the random variable
\bea
y_{\beta} = f_\beta - G/\beta
\eea
where $G$ is a unit Gumbel random variable, independent of $f_\beta$, is independent of $\beta$
for all $\beta \geq 1$. This implies
\cite{footnoteCao}
\bea
V_m =_\text{in  law} f_1 - G
\eea
hence the PDF of $V_m$, can be obtained by inverse Laplace transform of (\ref{f1}) convoluted
with Gumbel, i.e.
\bea
&& \overline{e^{-  n V_m }} = \Gamma(1-n) S(n) \\
&& Q(V_m) = LT^{-1}_{n \to V_m} \Gamma(1-n) S(n) \label{QQ}
\eea

Using that:
\bea
\partial_z \ln G(z) = \frac{1}{2} (1+ \ln (2 \pi)) -z + (z-1) \psi(z)
\eea
One easily obtain the cumulants of $V_m$. One must distinguish two cases
according to whether the Jacobi variables $a,b$ (at $\beta=1$) depend on $n$
or not (and similarly when performing the inverse Laplace transform).

In the case where $a,b$ are $n$-independent (LCGP with edge charges,
and GUE-CP)
one finds, for $p \geq 1$
\bea
&& \overline{V_m^p}^c = (2^p-1) \phi_p(4+a+b) - \phi_p(2+a) - \phi_p(2+b) + \gamma_p
\eea
where
\bea
&& \phi_p(z) = (p-1) \psi_{p-2}(z) + (z-1) \psi_{p-1}(z) \\
&& \gamma_p = (-1)^p (p-1)! (\zeta(p)+\zeta(p-1))  \quad , \quad p \geq 3  \\
&& \gamma_2 = \gamma_E + \frac{\pi^2}{6}  \quad , \quad \gamma_1 =  - \gamma_E - \ln 2 \pi
\eea
where we used, for $p \geq 2$, 
$\psi_p(1)=(-1)^{p+1} \zeta(p+1) p!$, and the
cumulants of the Gumbel distribution, $<G^p>^c=(p-1)! \zeta(p)$.
The results for the LCGP with edge charges are then obtained by setting
$a=\bar a$, $b=\bar b$, and for the GUE-CP $a=b=1$.
In this case the domain of parameters where (\ref{QQ}) yields a bona-fide, i.e. positive
and well defined PDF, has been discussed in \cite{FLDR09}. For the
GUE-CP one finds
\bea
\overline{V_m^2}^c = -\frac{629}{48}+2 \pi ^2 = 6.63504 \label{sc}   \\
\overline{V_m^3}^c = -64 \zeta (3)+\frac{50549}{864}+\frac{4 \pi ^2}{3} = -5.26638  \\
\overline{V_m^4}^c = -72 \zeta (3)-\frac{423301}{1152}+\frac{24 \pi ^4}{5} = 13.5668
\eea
in agreement \cite{footnoteNick} with the study \cite{FyoSim15}.

For the second case let us discuss the fBm0, where $a=2 \beta^2 n \to 2 n$ should be
inserted. This problem is much more delicate. Indeed for $a$ sufficiently
negative we know that a binding transition occurs at the boundary \cite{FLDR09}.
The naive application of the method
would yield, for the Laplace transform of the PDF of the value of the minimum $V_m$
\bea
\fl && \overline{e^{- n V_m} } = \Gamma[1-n] \frac{G(1) G(2+2 n ) G(2) G(4)}{G(1-n)
G(2+n) G(2-n) G(4+n)} \label{f1}
\eea
where $G(1)=G(2)=1$ and $G(4)=2$. This however cannot be the LT of a positive probability,
since it vanishes at $n=-1$
and convexity is violated around $n < - .22..$. Worse, we find
that the Taylor coefficient of $n^4$ is $-0.116681$, thus cumulants
cannot be obtained. The origin of this problem is clear. Contrarily to
the other cases the value of the potential is fixed at $x=0$, $V(0)=0$.
Hence $V_m$ cannot be positive, it is either $V_m<0$ or pinned at $V_m=0$ in which case
$x_m=0$. When $n$ is negative, the moment (\ref{f1}) gives a lot of weight to higher
(less negative values) of $V_m$, and eventually it reaches $V_m=0$ corresponding to the above mentioned
"binding transition". This explains why the analytical
structure of (\ref{f1}) is badly behaved on the negative $n$ side, and cannot be
the proper analytical continuation in that region. At the minimum one needs
to better take into account a possible delta-function weight at $V_m=0$.
Fixing this problem, and finding the correct analytical continuation for this case,
seems challenging and is left for future studies
\cite{footnoteDima}.

\subsection{Duality and freezing}

Let us now consider $\beta<1$ where analytical continuation
can be studied with more care. There exists a function $\tilde G_\beta(x)$ (see
\cite{noteBarnes}) such that
(\ref{barnes1}) generalizes to
\bea \label{barnes2}
\prod_{j=1}^n \Gamma(\beta z- j \beta^2) = \frac{\tilde G_\beta(z)}{\tilde G_\beta(z-n \beta)}
\eea
where $\tilde G_\beta(z) = A_{z,\beta} G_\beta(z)$ and $G_\beta(z)$ is the function
introduced in \cite{FLDR09} which satisfies the duality invariance
$G_\beta(z) = G_{1/\beta}(z)$, and $G_1(z)=G(z)$, $A_{z,1}=1$.
The precise value of $A_{z,\beta} = \beta^{\frac{z^2}{2} - \frac{z}{2} (\beta + \frac{1}{\beta})}
(2 \pi)^{z(\frac{1}{2 \beta}-\frac{1}{2})}$. Using this formula with $a=\beta \bar a$ and $b=\beta \bar b$
we obtain from the Selberg integral (\ref{MKIb})
\bea
\fl && \overline{z_\beta^n} = \overline{e^{- n \beta f_\beta}} = S_\beta(n) \label{f2} \\
\fl && = \frac{\tilde G_\beta(\frac{1}{\beta}) \tilde G_\beta(\frac{1}{\beta} + \beta + \bar a) \tilde G_\beta(\frac{1}{\beta} + \beta +\bar b)
\tilde G_\beta(\frac{2}{\beta} + 2 \beta + \bar a + \bar b -2 \beta n)}{\tilde G_\beta(\frac{1}{\beta}- \beta n)
\tilde G_\beta(\frac{1}{\beta} + \beta + \bar a - \beta n) \tilde G_\beta(\frac{1}{\beta} + \beta + \bar b - \beta n)
\tilde G_\beta(\frac{2}{\beta}  + 2 \beta + \bar a + \bar b - \beta n)} \nonumber
\eea
which is now legitimate for $\beta$ small enough at fixed $n$. Now, using
that $\tilde G_\beta(z)=\tilde G_\beta(z+\beta)/\Gamma(\beta z)$, and upon
a trivial shift $z=\tilde z (2 \pi)^{1-\beta}$ (and further ignoring the tilde)
we note that it can be rewritten as
\bea
&& \overline{z_\beta^n} = \overline{e^{- n \beta f_\beta}} =  \Gamma(1-\beta^2 n)  \bar S_\beta(\beta n)
\label{f3}
\eea
where
\bea
\fl && \bar S_\beta(\beta n) = \frac{
G_\beta(\beta+\frac{1}{\beta}) G_\beta(\frac{1}{\beta} + \beta + \bar a) G_\beta(\frac{1}{\beta} + \beta +\bar b)
G_\beta(\frac{2}{\beta} + 2 \beta + \bar a + \bar b -2 \beta n)}{G_\beta(\beta+\frac{1}{\beta}- \beta n)
G_\beta(\frac{1}{\beta} + \beta + \bar a - \beta n) G_\beta(\frac{1}{\beta} + \beta + \bar b - \beta n)
G_\beta(\frac{2}{\beta}  + 2 \beta + \bar a + \bar b - \beta n)}  \nonumber
\eea
is a fully duality invariant function $S_\beta(x) = S_{1/\beta}(x)$. Here we are using the same definition
of duality invariance as in (\ref{duality}), i.e. the combination $\beta n$ is duality invariant
(since $n'=n\beta^2=n\beta/\beta'$).
We note that the factor $\Gamma(1-\beta^2 n)$ corresponds exactly to the moments
of the (simpler) log-circular ensemble \cite{FB08}, in other words in that case
the factor $\bar S(\beta n)=1$.

So we see from (\ref{f3}), and since $n \beta$ is duality invariant, that {\it the
free energy random variable is not duality invariant.} However it is now trivial to
see what one must do to make (\ref{f3}) fully duality invariant, namely
multiply it by $\Gamma(1-n)$, which is the image of $\Gamma(1-n \beta^2)$
under duality. And this amounts precisely to a convolution by an independent Gumbel variable.
Hence defining the random variable
\bea \label{1}
y_\beta = f_\beta - G/\beta
\eea
we have that
\bea
\overline{e^{- n \beta y_\beta} } = \Gamma(1-n) \overline{e^{- n \beta f_\beta} }
= \Gamma(1-n) \Gamma(1- n \beta^2) \bar S(\beta n)
\eea
and the random variable $y_\beta$ is fully duality invariant (meaning all its exponential
moments are).
Now the freezing duality conjecture states that the PDF of the
variable $y_\beta$ freezes at $\beta=1$, leading to the results for the PDF of the
minimum $V_m$, displayed in
the previous subsection (since $G_1(z)=\tilde G_1(z)=G(z)$ all
formula trivially match). Note that the CDF of $y_\beta$ is
$1- g_\beta(y)$ where $g_\beta(y)=\overline{e^{-e^{\beta (y-f_\beta)}}}$,
hence duality invariance and freezing of (\ref{1}) is equivalent to stating that the full function
$g_\beta(y)$ is duality invariant and freezes.

Finally note that all which is needed for duality invariance
and freezing in this class of models, is that,
as discussed in (\ref{aga}), $\bar a$ and $\bar b$ are duality invariant
functions of $\beta$ and (when it happens) of $\beta n$, which is
the case for all three examples studied here.

\section{ Appendix H: Joint distribution and correlations between value and positions}
\label{app:joint}

In this Appendix we clarify the information encoded in the $n$-dependence
of the moments, for the disordered model, and show how it is related to the joint
distribution of position and value of the global extremum.

\subsection{High temperature phase}

Consider the LCGP with edge charge with $a=\beta \bar a$ and $b= \beta \bar b$ in
order to have a duality invariant problem and the simplest example, i.e. the first
moment
\bea
 <y>_{\beta,a,b,n} = \frac{\beta^{-1} + \bar a + \beta - \beta n}{2
\beta^{-1} + \bar a + \bar b + 2 \beta - 2 \beta n}
\eea
 and its Laguerre ensemble
limit, which is even simpler
 \bea
 <y>_{L,\beta,a,n} = \beta^{-1} + \bar a + \beta -
\beta n
\eea

These moments are a function of $\beta n$.
It is actually a general property, so let
denote in general $M_k(\beta n)$ the dependence of the
$k$-th moment as a function of $\beta n$. By definition of these moments
one has:
\bea
M_k(\beta n) \overline{Z_V^n} = \overline{ <y^k>_V Z_V^n}
\eea
where we have suppressed the label $\beta$ but emphasized the
dependence in the realization, noted $V$, of the random potential.
Defining the sample dependent free energy $f_V = - \beta^{-1} \ln Z$, this
can be rewritten as
\bea \label{idf}
M_k(\beta n) \overline{e^{- \beta n f_V}} = \overline{ <y^k>_V e^{- \beta n f_V}}
\eea
It is easy to see that this is equivalent to:
\bea
M_k(\partial_f) \overline{\delta(f-f_V)} = \overline{ <y^k>_V \delta(f-f_V) }
\eea
since (\ref{idf}) is recovered by multiplying by $e^{- \beta f n}$, integrating
over the full real axis, and performing multiple integrations by part (e.g. of can think
for instance as $M_k(z)$ as given by its Taylor expansion around $z=0$).

This equation has various consequences. First one can multiply
by $f^q$ and integrate over $f$. This leads to a series of equalities
which allow to interpret the
coefficients the expansion in powers of $(\beta n)^2$ as
correlations between the moment and the free energy as
\bea
 - M'_k(0) &=& \overline{<y^k> f} - \overline{<y^k>} ~~ \overline{ f} =  \overline{( <y^k> - \overline{ <y^k>} )(f - \overline{f}) }  \\
M_k''(0)&=& \overline{( <y^k> - \overline{ <y^k>} )(f - \overline{f})^2 }
\eea
and so on.

For instance for the above case one obtains the cross-correlation
\bea
\overline{<y> f} - \overline{<y>} ~~ \overline{ f}
= \frac{\beta ^2 (\bar b- \bar a)}{(\beta  (\bar a+\bar b+2 \beta )+2)^2}
\eea
which is equal to $1$ in the Laguerre ensemble model.

Second, defining the PDF of the free energy
\bea
P(f) = \overline{\delta(f-f_V)}
\eea
we obtain the conditional expectation:
\bea
&& \mathbb{E}(<y^k>|f) = \overline{<y^k>}_f = \frac{1}{P(f)}
\overline{ <y^k>_V \delta(f-f_V) }  \\
&& = \frac{1}{P(f)} M_k(\partial_f) P(f)
\eea

For instance for the Laguerre ensemble we
obtain
\bea
\mathbb{E}(<y^k>|f) = \beta^{-1} + \bar a + \beta - \frac{P'(f)}{P(f)}
\eea

\subsection{Freezing}

Since it is not the PDF of the free energy, $f$, which freezes,
we must introduce the proper generating function which freezes, which
is known to be
\bea
g_\beta(v) = \sum_{n=0}^\infty \frac{1}{n!} e^{\beta n v} \overline{Z_V^n}
= \overline{ e^{- e^{\beta v} Z_V}}
\eea
(called $g_\beta(y)$ in our previous work \cite{FLDR09}),
such that $1- g_\beta(s)$ is the CDF of the variable, called $v_\beta= f_\beta - G/\beta$, which
freezes. Similar manipulations as above lead to:
\bea
M_k(\partial_v) g_\beta(v)  = \overline{ <y^k>  e^{- e^{\beta v} Z_V} }
\eea
Since the combination $\beta n$ is duality invariant (as can be seen from
\ref{duality}) the function $M_k(z)$ freezes at $\beta=1$
(let us denote $\bar M_k(z)= M_k(z)|_{\beta=1}$ its
value), and so
does $g_\beta(v)$. Hence we can take the limit $\beta \to \infty$
on the r.h.s and we find, introducing $V_m$ the value of the
global extremum
\bea
\overline{ y_m^k  \delta(V_m - v) } = \bar M_k(\partial_v) Q(v)
\eea
in terms of the PDF of the value of the minimum
\bea
Q(v)=\overline{\delta(V_m-v) } = - \partial_v g_{\beta=1}(v)
\eea

We thus again obtain relations such that
\bea
 - \bar M'_k(0) &=& \overline{y_m^k V_m} - \overline{y_m^k} ~~ \overline{ V_m} =
 \overline{(y_m^k - \overline{y_m^k} ) (V_m-\overline{ V_m} ) }  \\
 \bar M_k''(0)
&=& \overline{ (y_m^k - \overline{ y^k_m} ) (V_m - \overline{V_m})^2 }
\eea
The more general formula being obtained from Taylor expansion in powers of $q$ of
\bea
\bar M_k(-q) = \overline{ y^k_m e^{q (V_m - \overline{V_m})} } e^{- \sum_{p \geq 1}
\frac{q^p}{p!} \overline{(V_m - \overline{V_m})^p}^c }
\eea

In particular we find the conditional expectation
\bea \label{momcond1}
 \mathbb{E}(y_m^k| V_m) 
= \frac{1}{Q(V_m)} \bar M_k(\partial_{V_m})  Q(V_m)
\eea
a formula very similar to the one obtained above for the free energy, but involving now information
about the extremum.

An interesting example is the first moment in the fBM. Then one has $\bar M_1(x)= \frac{1}{2} +
\frac{1}{4} x$. This implies
\bea
 \mathbb{E}(y_m | V_m) = \frac{1}{2} + \frac{1}{4} \frac{Q'(V_m)}{Q(V_m)}
\eea
which also implies
\bea
\overline{(y_m - \overline{y_m}) V_m^p} = - \frac{p}{4} \overline{V_m^{p-1}}
\eea
for any positive integer $p$.

For general charges $\bar a, \bar b$ the function $g_{\beta=1}(y)$ can be
obtained from our previous work \cite{FLDR09}, as is summarized in the
Appendix G above.

\subsection*{\bf References}

\end{document}